\newtheorem{theorem}{Theorem}
\newtheorem{definition}{Definition}
\newtheorem{remark}{Remark}
\newtheorem{assumption}{Assumption}
\begin{document}

\title{Adaptive Reduced-Attitude Control for Spacecraft Boresight Alignment with Safety Constraints and Accuracy Requirements}

\author{Jiakun Lei}
\affil{Zhejiang University, China} 

\author{Tao Meng}
\affil{Zhejiang University, China\\ Hainan Research Institute of Zhejiang University, China} 

\author{Kun Wang}
\affil{Zhejiang University, China}

\author{Weijia Wang}
\affil{Zhejiang University, China}

\author{Shujian Sun}
\affil{Zhejiang University, China}

\author{Lei Wang}
\affil{Zhejiang University, China}



\authoraddress{Jiakun Lei, Kun Wang, Weijia Wang, Shujian Sun are with the School of Aeronautics and Astronautics, Zhejiang University, Hangzhou 310027, China (Email: leijiakun@zju.edu.cn; wang\_kun@zju.edu.cn; weijiawang@zju.edu.cn; sunshujian@zju.edu.cn). Tao Meng(Corresponding Author) is with the School of Aeronautics and Astronautics, Zhejiang University, Hangzhou 310027, China, and Hainan Reserach Institute of Zhejiang University, Sanya, 572025, China (Email: mengtao@zju.edu.cn). Lei Wang is with the College of Control Science and Engineering, Zhejiang University, Hangzhou, 310027, China (Email: lei.wangzju,zju.edu.cn).}

\markboth{Lei ET AL.}{Preprint Submitted To IEEE Transactions On Aerospace and Electronic Systems}
\maketitle

	\begin{abstract}                
	This paper investigates the boresight alignment control problem under safety constraints and performance requirements, involving pointing-forbidden constraint, attitude angular velocity limitation, and pointing accuracy requirement. Meanwhile, the parameter uncertainty issue is taken into account simultaneously.
To address this problem, we propose a modified composite framework integrating the Artificial Potential Field (APF) methodology and the Prescribed Performance Control (PPC) scheme. The APF scheme ensures safety, while the PPC scheme is employed to realize an accuracy-guaranteed control. A Switched Prescribed Performance Function (SPPF) is proposed to facilitate the integration, which monitors various constraints and further establishes compatibility between safety and performance concerns by leveraging a special PPC freezing mechanism.
To further address the parameter uncertainty, we introduce the Immersion-and-Invariance (I\&I) adaptive control technique to derive an adaptive APF-PPC composite controller, further guaranteeing the closed-loop system's asymptotic convergence.
Finally, numerical simulations are carried out to validate the effectiveness of the proposed scheme.
\end{abstract}

\begin{IEEEkeywords}			Constrained Attitude Control, Artificial Potential Field, Prescribed Performance control			
\end{IEEEkeywords}
%

\section{INTRODUCTION}\label{INTRO} 

On-orbit imaging and observation play a significant role in contemporary space missions, in which the key process is to reorient the payload sensor's boresight vector to provide a stable staring, facilitating the observation task. This requisition significantly raises the research interest in the Boresight Alignment Control problem, where the spacecraft is required to adjust its attitude and align the body-fixed sensor's boresight vector to a desired direction \cite{wie2002rapid,tanygin2017fast}. 
This task is challenging due to bright celestial bodies, such as the sun, which can interfere with sophisticated sensors \cite{hablani1999attitude}.
Meanwhile, the attitude transition process must not be too aggressive, or it may cause the startracker and the low-frequency gyro to malfunction \cite{kjellberg2013discretized}. Additionally, contemporary spacecraft missions often require high control accuracy for advanced tasks, which demand guaranteed pointing accuracy \cite{zhou2014high}.
Motivated by this topic, this paper investigates the boresight alignment control problem of rigid-body spacecraft with multiple constraints, including pointing-forbidden constraint, attitude angular velocity limitation, and pointing accuracy requirement. In order to facilitate the potential practical utilization of the proposed control scheme, this paper considers the issue of parameter uncertainty simultaneously.

The fundamental control objective of the boresight alignment control can be realized by providing a desired full-attitude representation, usually represented by the Direction Cosine Matrix (DCM) \cite{zlotnik2014rotation,forbes2015direction} or other parameterized attitude representations, such as the unit quaternion and the Modified Rodrigues Parameters (MRPs) \cite{yang2012spacecraft,bandyopadhyay2015attitude}. However, note that the control objective of the  boresight alignment control requires aligning the pointing vector to a specific direction. Since the rotation on the boresight vector is irrelevant, using a full-attitude representation necessitates additional constraints to construct a desired attitude, which leads to redundancy and conservatism problem. Also, typical unit quaternion-based representation suffers from the so-called unwinding problem due to the double-covered topological construction of $SU_{2}$ \cite{chaturvedi2011rigid}.
Due to these reasons, the reduced-attitude representation has come to the center of this issue, which circumvents these drawbacks. The reduced-attitude representation only concentrates on the pointing direction of the concerned sensor and naturally excludes the possible redundancy problem and the unwinding problem, as discussed in \cite{chaturvedi2011rigid}. The reduced-attitude representation was first discovered and utilized for attitude control problem in \cite{bullo1995control}, and has been applied to many spacecraft attitude control scenarios, as investigated in \cite{shao2022fault,chi_reduced,dongare2021attitude}.

Meanwhile, as investigated by many pioneering works, the constrained attitude control problem can be mainly addressed through planning-based and non-planning-based approaches. Since mostly planning-based methodologies require an obstacle-free path calculating beforehand \cite{feron2012randomized,xu2017rapid,sun2015spacecraft}, it may be computational-costly for spacecraft with limited computing ability. Meanwhile, non-planning-based methodologies, such as the Artificial Potential Field (APF), provide less calculating complexity. As a result of its efficiency, the APF methodology has been combined with the reduced-attitude representation to develop many efficient controller frameworks for the boresight alignment control problem with safety constraints, as investigated in  \cite{chi_reduced,hu2019reduced,coates2020reduced}. 
Essentially, these APF-based reduced-attitude controllers utilize the gradient of the attraction field to guide the system toward the equilibrium region, indicating that a higher pointing accuracy may require a higher gain of the attraction field. However, a trade-off should be made to balance the attraction and the repulsion field to guarantee safety, indicating that the gain of the attraction field cannot be arbitrarily large. This will significantly limit its capability to maintain a high control accuracy.
With the aforementioned necessary trade-off, it is hard to obtain an appropriate parameter set to obtain a safe and high-accuracy alignment control for all possible control scenarios under external disturbance. On the other hand, typical APF controllers are not designed with performance considerations in mind, indicating that their control accuracy cannot be guaranteed beforehand.

Fortunately, the Prescribed Performance Control (PPC) scheme has been proposed to address the performance issue in advance, as stated in \cite{bechlioulis2011robust}. PPC control schemes utilize parameterized functions, such as the exponential function \cite{bechlioulis2011robust}, the polynomial function \cite{liu2019appointed}, or a function designed with fixed-time stability \cite{gao2021finite}, to characterize the performance criteria. Then, using a homeomorphic error transformation or the Barrier Lyapunov Function technique, it translates the original constrained-error state into an equivalently unconstrained one. As a result, by guaranteeing the convergence (boundedness) of the translated error, or the translated-error-defined Barrier Lyapunov Function (BLF), the state trajectory can be restricted in the performance envelope, and the original error state can be made to converge while satisfying the specified performance requirement. Due to its ability, the PPC scheme has been applied to attitude control scenarios in many existing works, as investigated in \cite{bu2023prescribed,wei2018learning,liu2019appointed,lei2023singularity}.

For the parameter uncertainty issue, the Immersion-and-Invariance (I\&I) adaptive has raised much attention as they provide advanced performance for closed-loop systems \cite{shao2021data,gao2018immersion}. As highlighted in \cite{shao2021data}, typical certain-equivalency-based (CE-based) adaptive controllers can only guarantee the boundedness of their resulting estimation error, and no guarantee is established for the convergence of the parameter estimation error or estimation functional error. Hence, the system failed to recover to the original ideal case, leading to performance degradation. The I\&I methodology, however, tactfully constructs a regulation function to drive the whole system to a manifold in which it will be immersed, providing extraordinary robustness \cite{seo2009non,astolfi2003immersion}. As a result, the I\&I adaptive control has shown its capability in many nonlinear control problems \cite{wang2019adaptive,wang2016immersion}, especially spacecraft control problems \cite{xia2022anti,shao2021immersion,shao2021data}, and may be suitable to address the parameter uncertainty issue for our discussed problem.

As stated before, the to-be-discussed problem involves multiple constraints, which can be mainly classified into two types: safety-oriented constraints, including the pointing-forbidden constraint and the attitude angular velocity limitation, and the performance requirement.
As summarized, previous researches have endeavored to address these two types of constraints independently: the APF control efficiently handles the safety-oriented constraints \cite{shen2018rigid,shao2021immersion,chi_reduced,dongare2021attitude}, while the PPC scheme effectively realizes the performance-guaranteed control \cite{bechlioulis2011robust,hu2018adaptive,wei2018learning,wei2021overview}. However, existing literature related to the APF method does not consider the performance issue, while no existing literature about PPC control makes an effort on the issue of "PPC control with safety constraints."  Therefore, from a macro perspective, the discussed problem remains open as no solution has been provided. This point of view is also bolstered by existing literature, as highlighted in \cite{wei2021overview,HUZHANWANG}. On the other hand, a natural idea is to integrate the APF methodology with the PPC scheme. However, we will later highlight in Subsection \ref{formulation}-\ref{PROBLEM}.\ref{motive} that such an integration is challenging and nontrivial, necessitating significant modification to derive a feasible solution. 
Due to these reasons, to the best of the author's knowledge, the boresight alignment control problem involving safety constraints and control accuracy requirements still remains open. Correspondingly, main contributions of this paper are stated as follows:

\begin{itemize}
\item[1)] A composite APF-PPC control framework is first proposed for addressing the boresight alignment control problem under pointing-forbidden constraint, attitude angular velocity limitation and pointing-accuracy requirement.

\item[2)] With the awareness of the so-called \textit{contradicted constraint} issue, this paper presents the concept of the Switched Prescribed Performance Function (SPPF). The SPPF delivers the idea of monitoring the potential factor that may lead to the contradiction between multiple constraints through the design of the switching indicator. Then, it will further switch between different dynamics to prior the safety concern and make a temporarily concession on the performance issue, thereby establishing compatibility.

\item[3)] Specifically, we present a freezing mechanism to establish the compatibility, which freezes the value of the PPC system and hence temporarily excludes the impact of the PPC control scheme on the closed-loop system when contradiction exists. 
\end{itemize}

\textbf{Notation.} The following notations are used in this paper. The Euclidean norm of any given vector or the induced-norm of any given matrix is denoted by the symbol $\|\cdot\|$. Given any vector $\boldsymbol{a}\in\mathbb{R}^{3}$, the symbol $\boldsymbol{a}^{\times}$ represents the $\mathbb{R}^{3\times 3}$ skew-symmetric matrix for the cross-product manipulation. $\text{diag}(a_{i})$ represents the diagonal matrix, of which the diagonal elements are $a_{1},...a_{N}$, $\max(\cdot)$ denotes the maximum function that outputs the maxima of input signals.
 $\boldsymbol{I}_{3}\in\mathbb{R}^{3\times 3}$ stands for the identity $\mathbb{R}^{3\times 3}$ matrix, and $\boldsymbol{0}_{3}\in\mathbb{R}^{3}$ denotes a column vector whose elements are $0$.
For the coordinate definition, the symbol $\mathfrak{R}_{b}$ denotes the spacecraft body-fixed frame, while $\mathfrak{R}_{i}$ denotes the earth-central inertial frame.

\section{PROBLEM FORMULATION}\label{formulation}
\subsection{Introduction of the Reduced-Attitude System}
As widely discussed in existing literature \cite{chaturvedi2011rigid}, the full-attitude representation can be specified by a rotation matrix that transforms between different coordinates, defined on the special orthogonal group: $SO(3): \left\{\boldsymbol{A}\in\mathbb{R}^{3\times 3}| \boldsymbol{A}^{\text{T}}\boldsymbol{A} = \boldsymbol{I}_{3}, \text{det}\left(\boldsymbol{A}\right) = 1\right\}$.
Based on this concept, we define $\boldsymbol{A}_{bi}\in SO(3)$ as the rotation matrix that transforms from the inertia frame $\mathfrak{R}_{i}$ to the body-fixed frame $\mathfrak{R}_{b}$, and let $\boldsymbol{\omega}_{s}\in\mathbb{R}^{3}$ be the attitude angular velocity of the spacecraft's body-fixed frame $\mathfrak{R}_{b}$ with respect to the inertia frame $\mathfrak{R}_{i}$, expressed in $\mathfrak{R}_{b}$. The kinematics equation of the full-attitude representation can be then given as \cite{chaturvedi2011rigid}: 
\begin{equation}
	\dot{\boldsymbol{A}}_{bi} = -\boldsymbol{\omega}^{\times}_{s}\boldsymbol{A}_{bi}
\end{equation}

Different from the full-attitude representation, the construction space of the reduced-attitude representation shrinks to a 3-dimensional unit sphere: $\mathbb{S}_{2} \triangleq \left\{\boldsymbol{x}\in\mathbb{R}^{3}| \|\boldsymbol{x}\|^{2} = 1\right\}$. Let $\boldsymbol{B}_{b}\in \mathbb{S}_{2}$ be the boresight vector of a given sensor, resolved in $\mathfrak{R}_{b}$, and denote the boresight vector resolved in $\mathfrak{R}_{i}$ as $\boldsymbol{B}_{i}\in\mathbb{S}_{2}$, then the kinematics equation of the reduced-attitude model can be derived by considering the time-derivative of $\boldsymbol{B}_{i}$, expressed as follows \cite{bullo1995control}:
\begin{equation}
	\begin{aligned}
			\dot{\boldsymbol{B}}_{i} &= -\left(\boldsymbol{B}^{\times}_{i}\boldsymbol{A}^{\text{T}}_{bi}\right)\boldsymbol{\omega}_{s}\\
	\end{aligned}
\end{equation}

As highlighted in Section \ref{INTRO}, the control objective of the boresight alignment control requires $\boldsymbol{B}_{i}$ reorient to another target direction. Correspondingly, we denote $\boldsymbol{r}_{i}\in\mathbb{S}_{2}$ as the target pointing-direction vector resolved in $\mathfrak{R}_{i}$, and further denote its expression given in the body-fixed frame $\mathfrak{R}_{b}$ as $\boldsymbol{r}_{b}\in\mathbb{S}_{2}$. We then denote the angle between $\boldsymbol{B}_{b}$ and $\boldsymbol{r}_{b}$ as $\Theta(t)\in[0,\pi]$ such that $\cos\Theta = \boldsymbol{B}^{\text{T}}_{b}\boldsymbol{r}_{b}$ holds. Accordingly, define the following variable $x_{e}(t)$:
\begin{equation}
	x_{e}(t) \triangleq 1-\cos\Theta =  1 - \boldsymbol{B}^{\text{T}}_{b}\boldsymbol{r}_{b}
\end{equation}
it can be observed that $x_{e}(t)\in\left[0,2\right]$ holds, and $x_{e} = 0$ is a stable equilibrium point \cite{chaturvedi2011rigid}, equivalent to the circumstance that $\Theta=0$. Therefore, $x_{e}(t)$ can be regarded as a pointing error state.
Notably, this paper discusses on the circumstance that the target pointing direction is static, i.e., $\dot{\boldsymbol{r}}_{i} = \boldsymbol{0}_{3}$, thus the reduced-attitude error model can be formulated as follows \cite{chi_reduced}:
\begin{equation}\label{errorsystem}
	\begin{aligned}
			\dot{x}_{e} &= \left(\boldsymbol{r}^{\times}_{b}\boldsymbol{B}_{b}\right)^{\text{T}}\boldsymbol{\omega}_{s}\\ \boldsymbol{J}\dot{\boldsymbol{\omega}}_{s} &= -\boldsymbol{\omega}^{\times}_{s}\boldsymbol{J}\boldsymbol{\omega}_{s}+\boldsymbol{u}
	\end{aligned}
\end{equation}
where $\boldsymbol{J}\in\mathbb{R}^{3\times 3}$ stands for the inertia matrix, $\boldsymbol{u}\in\mathbb{R}^{3}$ stands for the control input.
Before going further, we first present the following assumption about the inertial matrix $\boldsymbol{J}$.
\begin{assumption} (Unknown Diagonal Inertia Matrix)\label{AssumpJ}
	The inertia matrix $\boldsymbol{J}\in\mathbb{R}^{3\times 3}$ is assumed to be an unknown \textbf{diagonal} positive-definite matrix such that $J_{ij}=0$ holds for $\forall i\neq j$.
\end{assumption}

Subsequently, we introduce the linear operator $\mathcal{L}(\cdot):\mathbb{R}^{3}\to\mathbb{R}^{3\times3}$ stated in \cite{shao2021immersion}, which ensures that $\boldsymbol{J}\boldsymbol{x} = \mathcal{L}(\boldsymbol{x})\boldsymbol{\theta}$ holds for arbitrary $\boldsymbol{x}\in\mathbb{R}^{3}$, where $\boldsymbol{\theta} \triangleq \left[J_{11},J_{22},J_{33}\right]^{\text{T}}\in\mathbb{R}^{3}$ stands for the estimation vector of unknown inertial parameters.
By applying the linear operator $\mathcal{L}(\cdot)$, we further define a regression matrix as $\boldsymbol{\phi}_{J}\triangleq -\boldsymbol{\omega}^{\times}_{s}\mathcal{L}(\boldsymbol{\omega}_{s})$. Accordingly, the system's dynamics equation (\ref{errorsystem}) can be rearranged into the following compact linear-plant form as:
\begin{equation}\label{sysdyna}
	\dot{\boldsymbol{\omega}}_{s} = \boldsymbol{J}^{-1}\left[\boldsymbol{\phi}_{J}\boldsymbol{\theta}+\boldsymbol{u}\right]
\end{equation} 

\subsection{Pointing-Forbidden Constraint}
As stated in existing literature \cite{chi_reduced,shao2022fault}, bright celestial bodies always exist that the sensor's boresight vector should circumvent.
Suppose that there exist $m$ forbidden directions, 
let $\boldsymbol{f}^{N}_{i}\in\mathbb{S}_{2}(N = 1,2,3...m)$ be the $N$-th forbidden direction, resolved in $\mathfrak{R}_{i}$, then its resulting expression in $\mathfrak{R}_{b}$ can be denoted and given as $\boldsymbol{f}^{N}_{b} \triangleq \boldsymbol{A}_{bi}\boldsymbol{f}^{N}_{i}$. Let $\Theta^{N}_{f}$ be the permitted minimum angle that corresponds to the $N$-th constraint, then each pointing-forbidden constraint can be formulated as \cite{chi_reduced}:
\begin{equation}\label{pointcons}
 \boldsymbol{B}^{\text{T}}_{i}\boldsymbol{f}^{N}_{i} = \boldsymbol{B}^{\text{T}}_{b}\boldsymbol{f}^{N}_{b} < \cos\Theta^{N}_{f}
\end{equation}
\begin{figure}[hbt!]
	\centering 
	\includegraphics[width=0.5\textwidth]{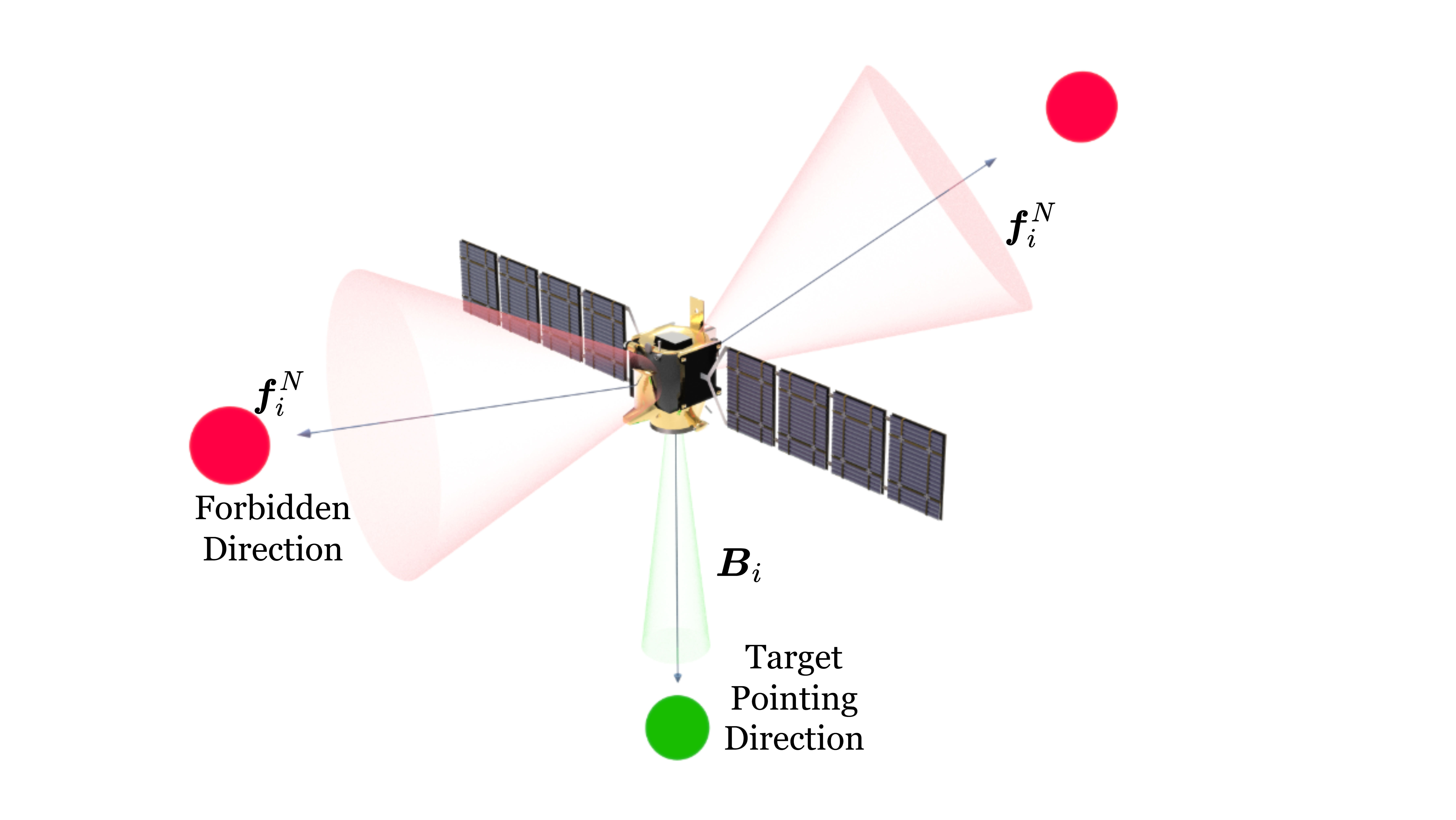}
	\caption{Sketch-map of the pointing-forbidden constraint. The green object represents the target direction, while red objects represent forbidden directions.}      
	\label{Sat}   
\end{figure}

Based on this definition, we define a pointing-forbidden-constraint-satisfying set $\mathcal{S}_{p}$, expressed as follows:
\begin{equation}
	\mathcal{S}_{p} \triangleq \left\{\boldsymbol{B}_{i}\in\mathbb{S}_{2}|\boldsymbol{B}^{\text{T}}_{i}\boldsymbol{f}^{N}_{i} < \cos\Theta^{N}_{f}, \text{for}\quad \forall N = 1,2,3...,m\right\}
\end{equation}

	Notably, we assume that the initial condition of the pointing direction, denoted as $\boldsymbol{B}_{i}(t_{0})$, and the desired pointing direction $\boldsymbol{r}_{i}$, are outside any forbidden zone, i.e., $\boldsymbol{B}_{i}(t_{0}), \boldsymbol{r}_{i}\in\left\{\boldsymbol{r}^{\text{T}}_{i}\boldsymbol{f}^{N}_{i}<\cos\Theta^{N}_{f},\boldsymbol{B}^{\text{T}}_{i}(t_{0})\boldsymbol{f}^{N}_{i}<\cos\Theta^{N}_{f},\quad\forall N = 1,2,3..m\right\}$.
Such statement ensures that the desired pointing direction is admissible, which can be found in many existing literature, such as \cite{shao2022fault}.

\subsection{Attitude Angular Velocity Constraint}
In order to ensure that the attitude's information can be obtained normally, the proper functioning of the startracker should be guaranteed. This possesses additional requirement that the magnitude of each axis's angular velocity should be remained under a given limitation. Let $M_{\omega}>0$ be the corresponding upper bound, this constraint can be formulated as follows \cite{chi_reduced}:  
\begin{equation}\label{angcons}
	|\omega_{si}(t)| < M_{\omega},i=1,2,3
\end{equation} 
where $\omega_{si}(t)$ stands for the angular velocity of the $i$-th axis of the spacecraft. Accordingly, the angular velocity-constraint-satisfying set can be defined as: $\mathcal{S}_{\omega} \triangleq \left\{\boldsymbol{\omega}_{s}\in\mathbb{R}^{3}||\omega_{si}(t)|<M_{\omega}, \forall i=1,2,3\right\}$

\subsection{Performance Function Envelope (PFE) Constraint}
This paper employs the main structure of the PPC control scheme to achieve an accuracy-guaranteed boresight alignment control. As we introduced in Section \ref{INTRO}, the PPC control scheme using parameterized functions \cite{wei2021overview} to characterize the performance requirement. Then, the desired performance criteria can be achieved by keeping the state trajectory restricted in the Performance Function Envelope (PFE) consistently. 
Based on this philosophy, let $\rho_{q}(t)>0$ be the performance function designed for $x_{e}(t)$, then the PFE constraint can be formulated as follows \cite{bechlioulis2011robust}:
\begin{equation}\label{pfe}
	x_{e}(t)\in\left[0,\rho_{q}(t)\right)
\end{equation}
The specific design of the performance function $\rho_{q}(t)$ will be provided in Subsection \ref{APFPPC}-\ref{SPPF} later. Correspondingly, one can define a PFE constraint-satisfying set as: $\mathcal{S}_{e} \triangleq \left\{x_{e}(t)|x_{e}(t)\in\left[0,\rho_{q}(t)\right)\right\}$.

\subsection{Control Problem Statement and Major Challenge}\label{PROBLEM}
\subsubsection{Problem Statement} This paper aims to develop a control law $\boldsymbol{u}$, such that for the reduced-attitude error system specified in equation (\ref{errorsystem}), the pointing direction of the boresight axis $\boldsymbol{B}_{i}$ (expressed in $\mathfrak{R}_{i}$) will be reoriented to the target direction $\boldsymbol{r}_{i}$, even facing parameter uncertainties. Meanwhile, the pointing-forbidden constraint (\ref{pointcons}), the attitude angular velocity limitation (\ref{angcons}) and the PFE constraint (\ref{pfe}) will be respected during the whole control process, and the system will be finally guided to achieve the desired performance criteria.

\subsubsection{Major Challenge: Potentially Contradicted Constraints}\label{motive}
As we briefly mentioned in Section \ref{INTRO}, 
due to a potential contradiction between safety-oriented constraints and the PFE constraint, directly integrating APF method with existing PPC frameworks is not a feasible solution for the to-be discussed multiple-constraint problem.

To be specific, when the boresight vector $\boldsymbol{B}_{i}$ needs to avoid a pointing-forbidden direction $\boldsymbol{f}^{N}_{i}$, the convergence rate of $x_{e}$ will be slow down, as $\boldsymbol{B}_{i}$ needs to bypass a circular region on $\mathbb{S}_{2}$. Meanwhile, if there exists angular velocity limitation, the convergence rate of $x_{e}$ will also be limited by the allowed maximum velocity bound. However, typical PPC schemes utilize rapidly-converged functions to form the performance envelope and to bound the system's state trajectory, as done in existing works \cite{bu2023prescribed,wei2018learning,liu2019appointed,wei2021overview,hu2018adaptive}. Therefore, the $x_{e}$-trajectory will be forced to converge rapidly to keep it within the performance envelope. This, in turn, contradicts the requisition yielded by the pointing-forbidden constraint and the angular velocity limitation, where the slow convergence of $x_{e}$-trajectory is necessitated.

\begin{figure}[hbt!]
	\centering 
	\includegraphics[width=0.45\textwidth]{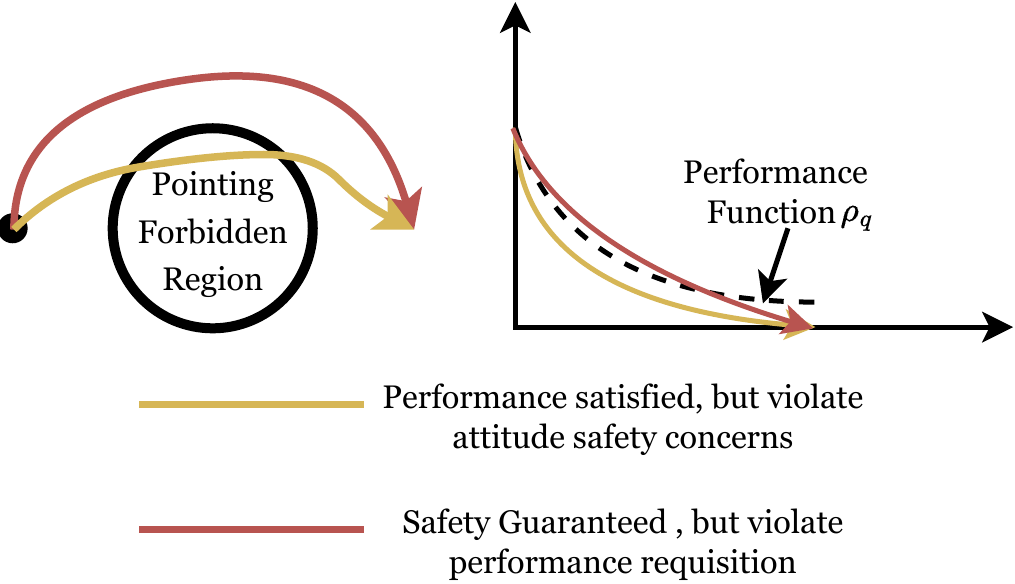}
	\caption{Sketch map of the intrinsic conflict between safety constraints and PFE constraint}      
	\label{conflict}   
\end{figure}

As shown in Figure \ref{conflict}, if the convergence rate of $x_{e}$-trajectory is relatively slow, there arises a potential risk of violating the PFE constraint, which may further result in a so-called singularity problem of the PPC control scheme, as mentioned by \cite{yong2020flexible}. In contrast, if the $x_{e}$-trajectory is forced to remain within the performance envelope, it will lead to an over-speed convergence behavior and further violations of other safety constraints.

The above analysis highlights that these potentially conflicting constraints introduce intricate challenges that cannot be readily addressed using existing frameworks, making the whole problem nontrivial, therefore, necessitating modifications to derive a feasible solution.

\section{APF-PPC CONTROL SCHEME}\label{APFPPC}

\subsection{Core Logic of the APF-PPC Control Scheme}
This paper presents a control scheme integrating the Artificial Potential Field (APF) method with the Prescribed Performance Control (PPC) scheme. The basis of such an integration is that both potential functions utilized for ensuring general safety constraints and the Barrier Lyapunov Function (BLF) designed for performance consideration are constructed based on a "barrier condition, as emphasized in \cite{ames2019control,tee2009barrier}, which transforms the satisfaction of given constraints to be equivalent with the boundedness of their resulting Lyapunov candidates. Therefore, this perspective establishes a vital connection between these barrier-based functions.

To further address the potential contradiction between safety constraints and the PFE constraint as discussed in Subsection \ref{formulation}-\ref{PROBLEM}.\ref{motive}, we present a perspective that safety concerns should be prioritized, while performance requirement can only be fulfilled with no further contradiction. Therefore, this motivates us that a concession on the performance requirement should be temporarily made if such a contradiction exists, and the consideration of performance issue should be recovered if the contradiction no longer exists.

Based on this core logic, we propose the Switched Prescribed Performance Function (SPPF), detailed in Subsection \ref{APFPPC}-\ref{SPPF}. By monitoring and detecting the contradiction between multiple constraints, the SPPF switches between different dynamics to realize the aforementioned "temporarily concession on performance" and the "recover of performance issue", thereby effectively mitigating the contradiction through a so-called "PPC Freezing" mechanism. An overall structure of the proposed APF-PPC control scheme is depicted in Figure \ref{overallstructure}.
\begin{figure}[hbt!]
	\centering 
	\includegraphics[width=0.45\textwidth]{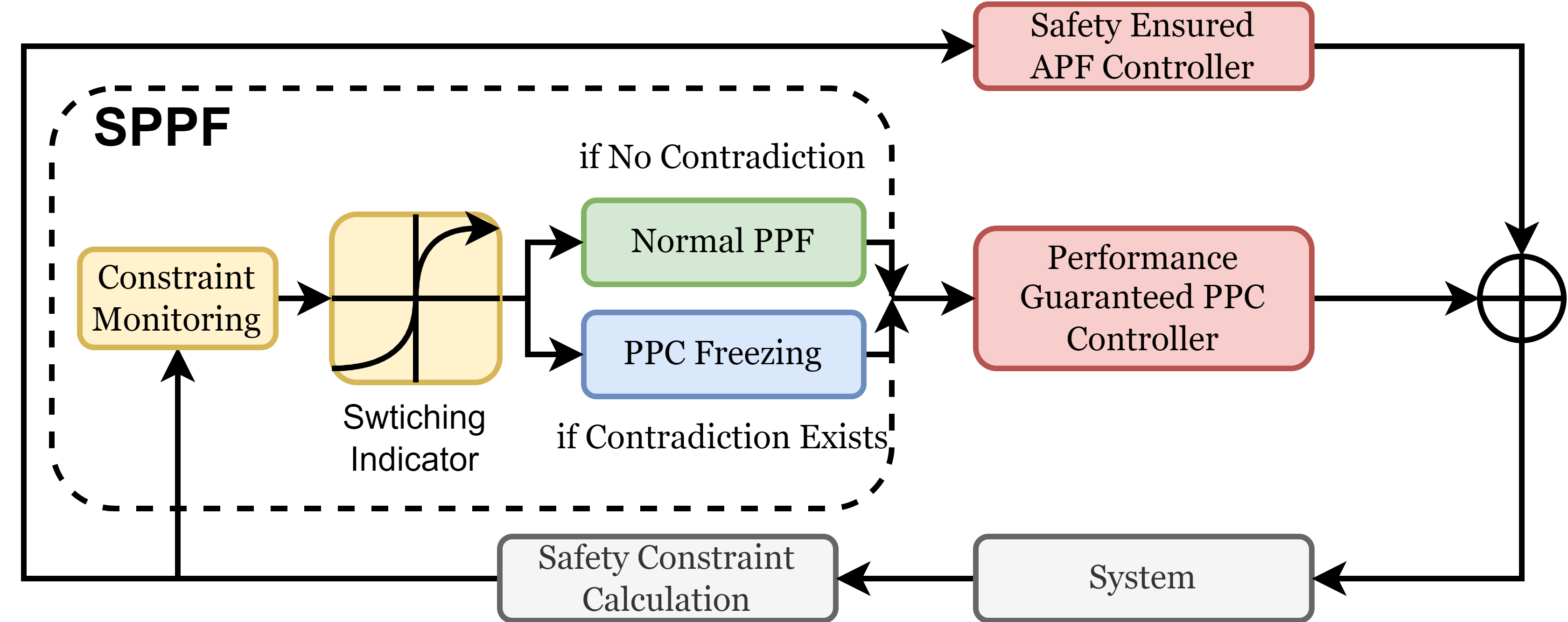}
	\caption{Overall Structure of the Proposed APF-PPC Control Scheme}       
	\label{overallstructure}   
\end{figure}	

The following of this section is organized as follows: Subsection \ref{APFPPC}-\ref{APFDESIGN} presents the fundamental design of potential functions that are designed for handling safety constraints, while Subsection \ref{APFPPC}-\ref{BLFDESIGN} delivers the BLF that is constructed following the PPC control scheme. Subsequently, the proposed SPPF is detailed in Subsection \ref{APFPPC}-\ref{SPPF}, along with a thorough theoretic analysis of its effect and the freezing mechanism, presented in Subsection \ref{APFPPC}-\ref{SPPFMECHANISM}.

\subsection{Potential Function Design for APF}\label{APFDESIGN}
\subsubsection{Potential Function For Pointing Constraint}
We first introduce an error function associated with $x_{e}(t)$ to ensure the convergence of the pointing error angle $\Theta(t)$. The attraction function $U_{a}(t)$ is designed as follows \cite{dongare2021attitude}:
\begin{equation}
	U_{a}(t) \triangleq k_{a}x_{e}(t)
\end{equation}
with $k_{a}>0$ the constant gain of the attraction field $U_{a}(t)$.

Subsequently, in order to ensure the satisfaction of the pointing-forbidden constraint, a repulsion field is designed for each $N$-th forbidden zone, denoted as $U^{N}_{r}(t)(N=1,2,3..m)$. Motivated by \cite{dongare2021attitude}, we design a repulsion field with an explicit acting region, expressed as follows:
\begin{equation}
\begin{aligned}
	&U^{N}_{r}(t) \triangleq \\
	&\begin{cases}
	  1,\quad\text{if} \gamma_{N}(t) \in\left[-1,P^{N}_{0}\right)\\
	  k_{r}\sec\left(a_{N}\gamma_{N}(t)+b_{N}\right)+1-k_{r},\quad \text{if}\gamma_{N}(t)\in\left[P^{N}_{0},P^{N}_{1}\right)
	\end{cases}
\end{aligned}
\end{equation}
where $\gamma_{N}(t)$ is defined as $\gamma_{N}(t) \triangleq \boldsymbol{B}^{\text{T}}_{i}\boldsymbol{f}^{N}_{i} = \boldsymbol{B}^{\text{T}}_{b}\boldsymbol{f}^{N}_{b}$, $a_{N}$ and $b_{N}$ are design parameters given as $a_{N} = \frac{\pi}{2(P^{N}_{1}-P^{N}_{0})}$ and $b_{N} = -a_{N}P^{N}_{0}$, $P^{N}_{0}\in\left(-1,P^{N}_{1}\right)$ stands for the segment point where the repulsion field start works, while $P^{N}_{1} = \cos\Theta^{N}_{f}$ stands for the boundary of $U^{N}_{r}(t)$, corresponding to each $N$-th constraint. $k_{r} > 0$ stands for the constant gain parameter of the repulsion field.

 Note that when $\gamma_{N}(t)\in\left[-1,P^{N}_{0}\right)$ holds, $U^{N}_{r}(\gamma_{N})$ is zero-gradient with respect to $\gamma_{N}$, indicating that no repulsive effect is exerted on the closed-loop system when $\boldsymbol{B}_{i}$ is outside the acting region. 
 Meanwhile, $\lim_{\gamma_{N}\to P^{N}_{1}}U^{N}_{r}(t) = +\infty$ holds and hence provides the barrier characteristic, which guarantees the invariant of the pointing-constraint-satisfying set $\mathcal{S}_{p}$ if $U^{N}_{r}(t)\in\mathcal{L}_{\infty}$.
 
Based on each single $N$-th repulsion field, we then further define the combined repulsion field $U_{r}(t)$ of $m$ forbidden zones as $U_{r}(t) = \sum_{N = 1}^{m}U^{N}_{r}(t)$.
In order to ensure that $x_{e} = 0$ stands for an equilibrium point of the potential field, the overall potential function $U(t)$ is constructed as follows:
\begin{equation}
	U(t) \triangleq U_{a}(t)U_{r}(t) = U_{a}(t)\sum_{N=1}^{m}U^{N}_{r}(t)
\end{equation}
Taking the time-derivative of $U_{a}(t)$ and $U^{N}_{r}(t)$ yields the following results:
\begin{equation}
	\begin{aligned}
		\dot{U}_{a}(t) &= k_{a}\dot{x}_{e} = k_{a}\left(\boldsymbol{r}^{\times}_{b}\boldsymbol{B}_{b}\right)^{\text{T}}\boldsymbol{\omega}_{s}\\
\dot{U}^{N}_{r}(t) 
&= 
\begin{cases}
	\begin{aligned}
			0&,\quad\text{if}\quad\gamma_{N}(t)\in\left[-1,P^{N}_{0}\right)\\
		-&k_{r}a_{N}\sec(P_{N}) \tan(P_{N})(\left(\boldsymbol{f}^{N}_{b}\right)^{\times}\boldsymbol{B}_{b})^{\text{T}}\boldsymbol{\omega}_{s}\\
		&,\quad\text{if}\quad\gamma_{N}(t)\in\left[P^{N}_{0},P^{N}_{1}\right)\\
	\end{aligned}
\end{cases}
	\end{aligned}
\end{equation}
where $P_{N}\triangleq a_{N}\gamma_{N}+b_{N}$ is defined for brevity.
 Note that $\dot{U}^{N}_{r}(t)$ can be uniformly expressed as $\dot{U}^{N}_{r}(t) = k_{r}\boldsymbol{\nabla}^{\text{T}}_{rN}\boldsymbol{\omega}_{s}(N=1,2,3...m)$, where $\boldsymbol{\nabla}_{rN} = 0$ holds for $\gamma_{N}(t)\in\left[
 -1,P^{N}_{0}\right)$ and $\boldsymbol{\nabla}_{rN} = -a_{N}\sec\left(P_{N}\right)\tan\left(P_{N}\right)(\boldsymbol{f}^{N}_{b})^{\times}\boldsymbol{B}_{b}$ holds for $\gamma_{N}(t)\in\left[P^{N}_{0},P^{N}_{1}\right)$.
Accordingly, the time-derivative of $U(t)$ can be also rearranged as $\dot{U}(t) = \boldsymbol{\nabla}^{\text{T}}_{U}\boldsymbol{\omega}_{s}$, where $\boldsymbol{\nabla}_{U}\in\mathbb{R}^{3}$ represents the combined gradient of the potential field, expressed as:
\begin{equation}\label{nablaU}
	\boldsymbol{\nabla}_{U} = k_{a}U_{r}(t)\boldsymbol{r}^{\times}_{b}\boldsymbol{B}_{b} - k_{r}U_{a}(t)\sum_{N=1}^{m}\boldsymbol{\nabla}_{rN}
\end{equation}

Meanwhile, for the notable "Local Minima Issue", it has been elaborated in existing literature that it is not a major concern for the reduced-attitude representation, such as \cite{chi_reduced}. Practically speaking, a radial control torque will help the system escapes from the resulting critical point even under the worst circumstance, as highlighted in \cite{doria2013algorithm}. Therefore, in order to focus on the major issue, we do not explicitly consider this issue in this paper.

\subsubsection{Potential Function for Angular Velocity Constraint}
In order to constrain the maximum magnitude of each component of the attitude angular velocity and keep it beneath $M_{\omega}$ as required by equation (\ref{angcons}), we introduce the following barrier function $V_{\omega}(t)$ \cite{shen2018rigid}:
\begin{equation}
	V_{\omega}(t) = \frac{k_{\omega}}{2}\sum_{i=1}^{3}\ln\left(\frac{M^{2}_{\omega}}{M^{2}_{\omega} - \omega^{2}_{si}(t)}\right)
\end{equation}
where $k_{\omega}>0$ is a constant gain parameter, $M_{\omega}>0$ stands for the given upper bound. It can be observed that $V_{\omega} = 0$ holds for $\boldsymbol{\omega}_{s} = \boldsymbol{0}_{3}$.
Taking the time-derivative of $V_{\omega}(t)$, one has:
\begin{equation}\label{Vomega}
	\begin{aligned}
		\dot{V}_{\omega}(t) &= k_{\omega}\sum_{i=1}^{3}\frac{1}{(M^{2}_{\omega}-\omega^{2}_{si}(t))}\omega_{si}(t)\dot{\omega}_{si}(t) = \boldsymbol{\omega}^{\text{T}}_{s}\boldsymbol{R}_{\omega}\dot{\boldsymbol{\omega}}_{s}
	\end{aligned}
\end{equation}
where $\boldsymbol{R}_{\omega}\in\mathbb{R}^{3\times 3}$ denotes a diagonal matrix, defined as:
\begin{equation}\label{Romega}
	\boldsymbol{R}_{\omega} \triangleq \text{diag
	}\left(k_{\omega}/(M^{2}_{\omega} - \omega^{2}_{si}(t))\right)(i=1,2,3)
\end{equation}
Accordingly, it can be observed that if $V_{\omega}$ remains bounded, then the angular velocity constraint depicted in equation (\ref{angcons}) remains invariant.

\subsection{Barrier Lyapunov Function Design for PPC}\label{BLFDESIGN}
Following a common error transformation process in the typical PPC control philosophy, this section presents a BLF to ensure the satisfaction of the PFE constraint.

Defining a unified error variable as $\varepsilon_{q}(t) \triangleq x_{e}(t)/\rho_{q}(t)$, then the PFE constraint given in equation (\ref{pfe}) can be equivalently rearranged as: $\varepsilon_{q}(t) < 1$. 
 Accordingly, the BLF $V_{B}(t)$ is provided as \cite{liu2019appointed}:
\begin{equation}
	V_{B}(t) = k_{B}\ln\frac{1}{1-\varepsilon_{q}(t)}
\end{equation}
where $k_{B}>0$ stands for a constant design parameter. Note that $\lim_{\varepsilon_{q}(t)\to1}V_{B}(t)\to+\infty$ holds, hence the PFE-constraint-satisfying region $\mathcal{S}_{e}$ remains invariant if $V_{B}(t)$ remains bounded during the whole control process.
To process further, consider the time-derivative of $\varepsilon_{q}(t)$, it can be obtained that:
\begin{equation}\label{doteps}
	\dot{\varepsilon}_{q}(t) = \frac{1}{\rho_{q}(t)}\left(\left(\boldsymbol{r}^{\times}_{b}\boldsymbol{B}_{b}\right)^{\text{T}}\boldsymbol{\omega}_{s} - \frac{\dot{\rho}_{q}(t)}{\rho_{q}(t)}x_{e}(t)\right)
\end{equation}
Therefore, the time-derivative of $V_{B}(t)$ can be derived as:
\begin{equation}\label{dotVB}
	\dot{V}_{B}(t) = R_{\rho}(t)\left[
	\frac{1}{\rho_{q}(t)}\left(\boldsymbol{r}^{\times}_{b}\boldsymbol{B}_{b}\right)^{\text{T}}\boldsymbol{\omega}_{s}\right]-R_{\rho}(t)\frac{\dot{\rho}_{q}(t)}{\rho_{q}(t)}\varepsilon_{q}(t)
\end{equation}
where $R_{\rho}(t)$ is defined as :
\begin{equation}\label{Rrho}
R_{\rho}(t) \triangleq k_{B}/(1-\varepsilon_{q}(t))
\end{equation}

\subsection{Switched Prescribed Performance Function (SPPF) Design}\label{SPPF}
In this subsection, we highlight the design of the proposed Switched Prescribed Performance Function (SPPF). 
The proposed SPPF monitors various factors that may cause the aforementioned contradiction between multiple constraints, and then prioritizes the most critical one, further switches the dynamic of $\rho_{q}(t)$ to establish compatibility.

\subsubsection{Basic Structure of SPPF}
The basic structure of the proposed SPPF $\rho_{q}(t)$ is established as follows:
\begin{equation}\label{dRho}
	\begin{aligned}		
		\dot{\rho}_{q}(t) &= -k_{\rho}\left(\rho_{q}(t)-\rho_{\infty}\right)\left(1-\Omega_{Q}(t)\right)\\
		&\quad +\left(\frac{\dot{x}_{e}(t)}{x_{e}(t)}\right)\rho_{q}(t)\Omega_{Q}(t)
	\end{aligned}
\end{equation}
where $\rho_{\infty}>0$ denotes the constant terminal value of $\rho_{q}(t)$, which is usually designed regarding to the given performance criteria, $k_{\rho} > 0$ controls the exponential decaying rate of $\rho_{q}(t)$. $\Omega_{Q}$ is defined as the switching indicator that governs the $\rho_{q}$-dynamics, of which the range is $\Omega_{Q}\in\left[0,1\right]$. Correspondingly, note that $\dot{\rho}_{q}(t) = -k_{\rho}(\rho_{q}(t)-\rho_{\infty})$ holds for $\Omega_{Q}(t)=0$, and $\dot{\rho}_{q}(t)=\frac{\dot{x}_{e}(t)}{x_{e}(t)}\rho_{q}(t)$ holds for $\Omega_{Q}(t)=1$, and hence we have $\lim_{t\to+\infty}\rho_{q}(t) = \rho_{\infty}$ for $\Omega_{Q}(t) = 0$. The specific expression of $\Omega_{Q}$ will be provided in the following Subsection \ref{APFPPC}-\ref{SPPF}.\ref{SwitchingIndicator}.

\subsubsection{Design of the Switching Indicator $\Omega_{Q}$}\label{SwitchingIndicator}
To construct the overall switching indicator, we first defining a piece-wise smooth mollified switching function.
\begin{definition} The mollified smooth switching function $\Omega(x,p,S_{0},S_{1})$ is defined as follows:
\begin{equation}
	\label{Omegadef}
	\Omega(x) = \begin{cases}
		0 &  x\in\left(-\infty,S_{0}\right)\\
		\frac{1}{2}\left[\tanh\frac{p\left(S_{1}-S_{0}\right)\cdot\left(x-S_{m}\right)}{\sqrt{(x-S_{0})(S_{1}-x)}} + 1\right] & x\in\left[S_{0},S_{1}\right)\\
		1 &  x\in\left[S_{1},+\infty\right)\\
	\end{cases}
\end{equation} 
\begin{figure}[hbt!]
	\centering 
	\includegraphics[width=0.45\textwidth]{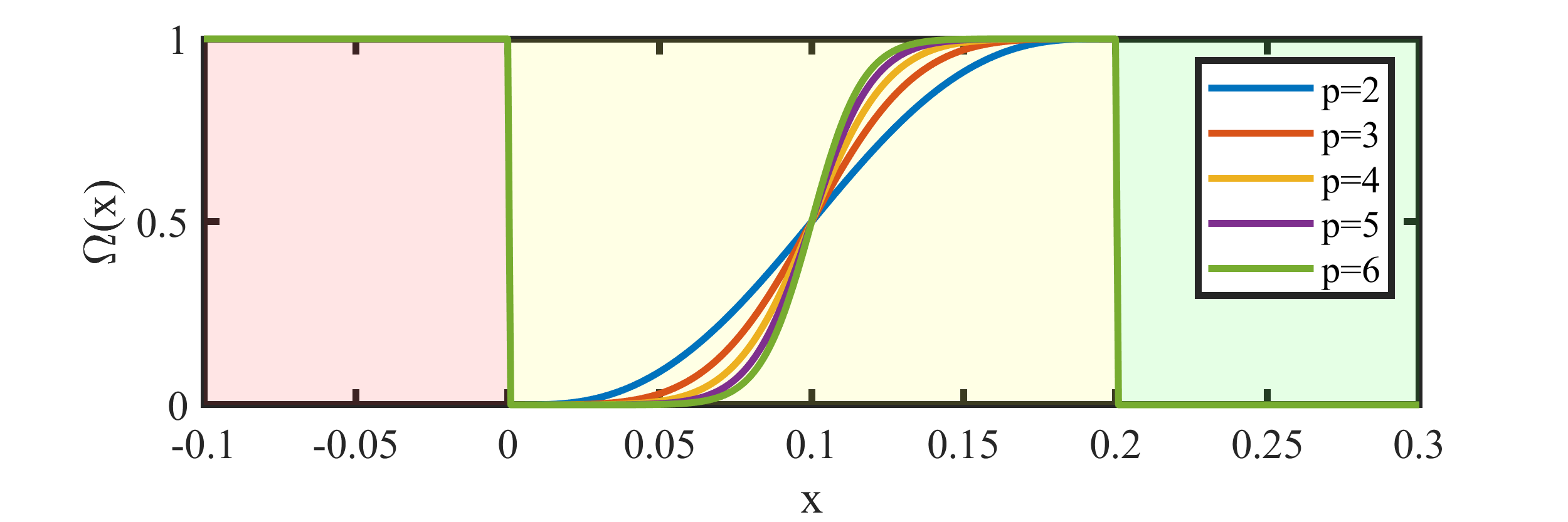}
	\caption{Examples of the Switching Function $\Omega(x)$ with $S_{0} = 0$, $S_{1} = 0.2$, $p=2,3,4,5,6$}       
	\label{Omega}
\end{figure}
where $x$ is the argument of $\Omega(x)$ and $p$, $S_{0}$, and $S_{1}$ are all design parameters. Specifically, $S_{0}$ and $S_{1}$ stand for segment points of $\Omega(x)$, $S_{m} \triangleq \frac{1}{2}\left(S_{0}+S_{1}\right)$ is defined for brevity and $p>\frac{1}{S_{1}-S_{0}}$ controls the increasing rate of $\Omega(x)$ as $x$ increases. It can be observed that $\Omega(x)$ compresses arbitrary input $x$ to a range of $\Omega(x)\in\left[0,1\right]$. Notably, when the definition domain of the input $x$ is not $(-\infty,+\infty)$, then the left and right bound of the definition domain shown in equation (\ref{Omegadef}) should be modified accordingly.
As an additional explanation, several examples of the switching function $\Omega(\cdot)$ are presented in Figure \ref{Omega}.

\end{definition}

Based on the defined switching function, we further define the following $3$ types of switching indicators denoted as $\Omega^{N}_{f}(t)(N=1,..,m)$, $\Omega^{i}_{\omega}(t)(i=1,2,3)$ and $\Omega_{\text{PPC}}(t)$, corresponding to $m$ pointing-forbidden constraints, $3$ attitude angular velocity constraints and the PFE constraint. These switching indicators are specified as follows:
\begin{equation}
	\begin{aligned}
		\Omega^{N}_{f}(t) &\triangleq \Omega(\gamma_{N}(t),M^{N}_{f},S^{N}_{f0},S^{N}_{f1}),(N=1,2,3...m)\\
			\Omega^{i}_{\omega}(t) &\triangleq \Omega(\omega^{2}_{si}(t),N_{\omega},S_{\omega0},S_{\omega1}),(i=1,2,3)\\
						\Omega_{\text{PPC}}(t) &\triangleq \Omega(\varepsilon_{q}(t),Q_{\varepsilon},S_{\varepsilon0},S_{\varepsilon1})
	\end{aligned}
\end{equation}
	where $\gamma_{N}(t)$ is previously defined as $\gamma_{N}(t) = \boldsymbol{B}^{\text{T}}_{i}\boldsymbol{f}^{N}_{i}$, $S^{N}_{f1}$, $S^{N}_{f0}$, $S_{\omega0}$, $S_{\omega1}$, $S_{\varepsilon0}$ and $S_{\varepsilon1}$ are constant segment points that satisfy the following requirements: $S^{N}_{f1} < \cos\Theta^{N}_{f}$, $-1<S^{N}_{f0} < S^{N}_{f1}$, $0<S_{\omega0}<S_{\omega1}$, $S_{\omega1} < M^{2}_{\omega}$, $0<S_{\varepsilon0}<S_{\varepsilon1}$ and $S_{\varepsilon1}<1$. Notably, $S_{\varepsilon0}$ and $S_{\varepsilon1}$ should be set close to $1$. Meanwhile, $M^{N}_{f}$ is a design parameter satisfies $M^{N }_{f} > \frac{1}{S^{N}_{f1} - S^{N}_{f0}}$, corresponding to each $N$-th constraint, $N_{\omega}>0$ satisfies $N_{\omega} > \frac{1}{S_{\omega1}-S_{\omega_{0}}}$ and $Q_{\varepsilon}>\frac{1}{S_{\varepsilon1}-S_{\varepsilon0}}$ holds.

Next, in order to ensure that $\rho_{q}(t)>\rho_{\infty}$ holds under any circumstance, an additional switching indicator $\Omega_{\rho}(t)$ is carried out for robust consideration, expressed as follows:
\begin{equation}
	\Omega_{\rho}(t) = \Omega(\rho_{q}(t),W_{\rho},S_{\rho0},S_{\rho1})
\end{equation}
where $W_{\rho}$ stands for an adjusting parameter, $S_{\rho0}$ and $S_{\rho1}$ are two design parameters, satisfy $\rho_{\infty}<S_{\rho0}<S_{\rho1}$ and $S_{\rho1}$. Meanwhile, both $S_{\rho0}$ and $S_{\rho1}$ should be set close enough to $\rho_{\infty}$.

Subsequently, based on these defined smooth switching indicator functions $\Omega^{N}_{f}(t)(N=1,2,3...m)$, $\Omega^{i}_{\omega}(t)(i=1,2,3)$, $\Omega_{\text{PPC}}(t)$ and $\Omega_{\rho}(t)$, the overall switching indicator $\Omega_{Q}(t)$ is designed in a composite form, expressed as:
\begin{equation}
	\Omega_{Q}(t) \triangleq \Omega_{\rho}(t)\cdot\max(\Omega^{N}_{f}(t),\Omega^{i}_{\omega},\Omega_{\text{PPC}}(t))
\end{equation}

In order to ensure that $\Omega_{Q}(t)$ is smooth and differentiable, we employ the well-defined RealSoftMax function (RSM) that presented in \cite{zhang2021dive} to construct a smooth differentiable maximum approximation, replacing the original non-smooth maximum function $\max(\cdot)$. Correspondingly, $\Omega_{Q}(t)$ can be expressed as follows:
\begin{equation}
	\Omega_{Q}(t) = \frac{1}{K_{s}}\ln\left[\sum_{N=1}^{m}e^{K_{s}\Omega^{N}_{f}}+\sum_{i=1}^{3}e^{K_{s}\Omega^{i}_{\omega}}+e^{K_{s}\Omega_{\text{PPC}}}\right]\cdot\Omega_{\rho}(t)
\end{equation}
where $K_{s}>0$ is a design parameter.

\begin{remark}
Notably, a common characteristic of the RealSoftMax function is that for a set of inputs $x_{j}(j=1,...,k)$, one has $\max(x_{j})<RSM(x_{j})\le\max(x_{j})+\frac{\ln(k)}{K_{s}}$ \cite{zhang2021dive}, with $k$ the number of inputs. Therefore, a big parameter $K_{s}$ should better be chosen to let the tailor-term $\frac{\ln(m+4)}{K_{s}}$ be small enough, thus ensuring a tight enough approximation. Meanwhile, to avoid the circumstance that $RSM(\Omega^{N}_{f},\Omega^{i}_{\omega},\Omega_{\text{PPC}})$ exceeds its original upper bound $1$, an operation can be added for practical implementation:
	\begin{equation}
		\begin{aligned}
		RSM(\Omega^{N}_{f},\Omega^{i}_{\omega},\Omega_{\text{PPC}}) &= 1,\quad \text{if} \quad	RSM(\Omega^{N}_{f},\Omega^{i}_{\omega},\Omega_{\text{PPC}})>1\\
		\end{aligned}
	\end{equation}

\end{remark}

\begin{remark}\label{Omegaselect}
	Notably, the selection of each $S^{N}_{f0}$ should satisfy $S^{N}_{f0}>\boldsymbol{r}^{\text{T}}_{i}\boldsymbol{f}^{N}_{i}$. This is necessary to ensure that each switching indicator $\Omega^{N}_{f}(t)$ will switch back to $0$ when the boresight vector $\boldsymbol{B}_{i}$ arrives to be sufficiently close to the desired pointing direction $\boldsymbol{r}_{i}$.
	
\end{remark}

\subsection{Working Mechanism of the SPPF and the Freezing Mechanism}\label{SPPFMECHANISM}
In this subsection, we elaborate on the working mechanism of the SPPF, who mitigates the contradiction following a two-step logic: contradiction monitoring and PPC freezing.

\subsubsection{Constraint Monitoring}
Those switching indicators $\Omega^{N}_{f}(t)$, $\Omega^{i}_{\omega}(t)$ and $\Omega_{\text{PPC}}(t)$ are defined to provide consistent monitoring on each kind of constraint. Note that each switching indicator function will switch to $1$ when it approaches the boundary of its corresponding constraint-satisfying set. We present a sketch map to facilitate the explanation, shown in Figure \ref{exp}. For instance, if $\boldsymbol{B}_{i}$ approaches the forbidden region $\boldsymbol{f}^{N}_{i}$ such that $\gamma_{N}(t)\ge S^{N}_{f0}$ holds (Figure \ref{exp}-Subfigure $1$ and $4$), then $\Omega^{N}_{f}$ will increase and finally reaches $\Omega^{N}_{f}=1$ if $\gamma_{N}(t) \ge S^{N}_{f1}$.
A similar process will happen for other constraints, like when $|\omega_{si}(t)|$ approaches its allowed upper bound $M_{\omega}$, or the $x_{e}$-trajectory is extremely close to the performance envelope's boundary. Overall, these circumstances will lead $\Omega_{Q}(t)$ switches to $1$. (Figure \ref{exp}-Subfigure $2$).
\begin{figure}[hbt!]
	\centering 
	\includegraphics[width=0.45\textwidth]{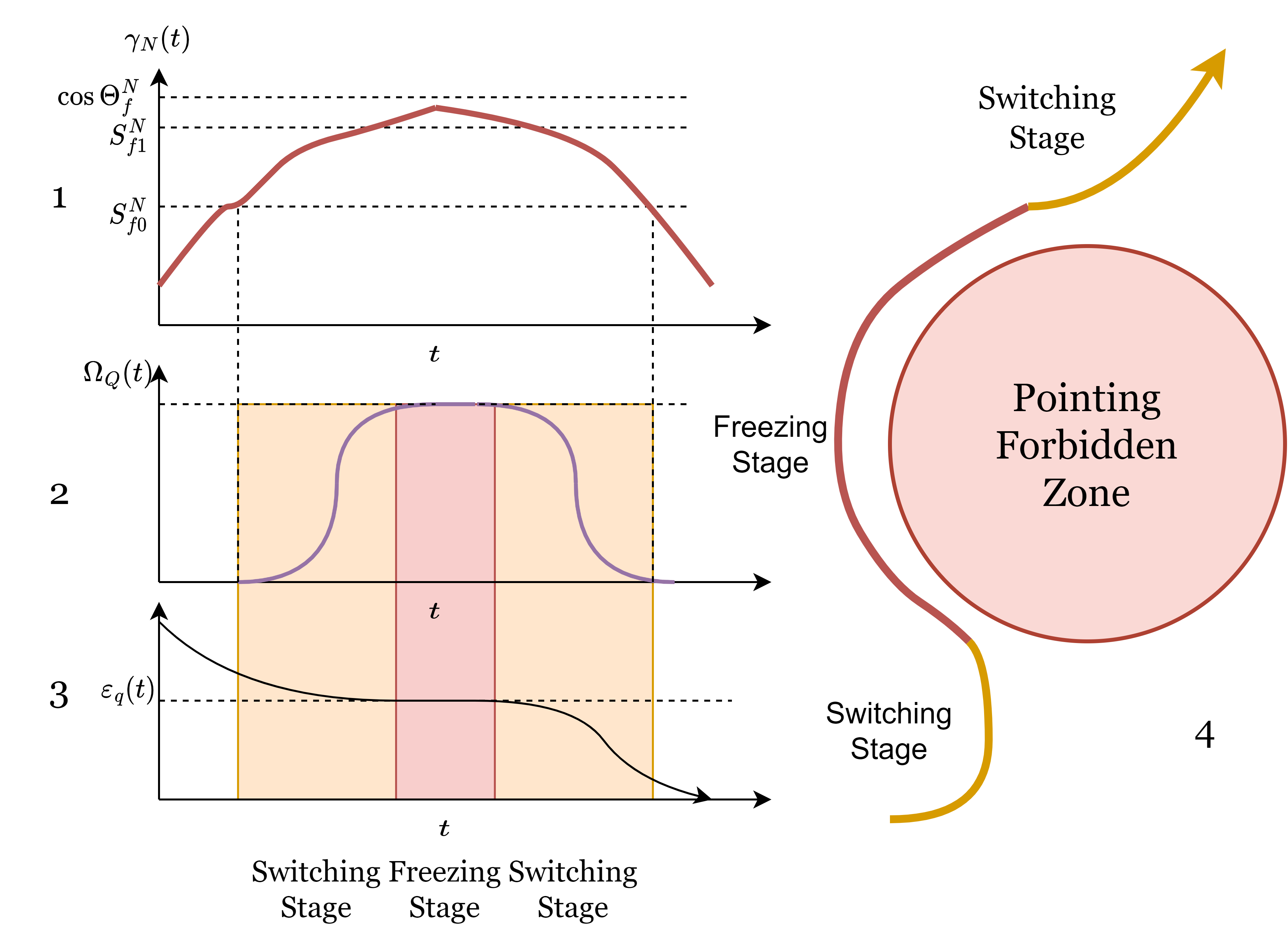}
	\caption{Explanation Sketch map of SPPF's Working Mechanism}       
	\label{exp}   
\end{figure}	

\subsubsection{PPC Freezing}
According to the design shown in Subsection \ref{APFPPC}-\ref{SPPF}. equation (\ref{dRho}), for $\Omega_{Q}(t)=1$, note we have $\dot{\rho}_{q}(t) = \dot{x}_{e}(t)\rho_{q}(t)/x_{e}(t)$. Recalling the definition of $\varepsilon_{q}(t) = x_{e}(t)/\rho_{q}(t)$ and combined with the expression of $\dot{\rho}_{q}(t)$, it can be further yielded that: $\dot{\varepsilon}_{q}(t) = \frac{\dot{x}_{e}(t)\rho_{q}(t)-\dot{\rho}_{q}(t)x_{e}(t)}{\rho^{2}_{q}(t)} = 0$.
This indicates that $\varepsilon_{q}(t)$ will remain unchanged no matter how the pointing error variable $x_{e}(t)$ varies, which stands for the meaning of the PPC freezing (Figure \ref{exp}-Subfigure 3, the Freezing Stage).

According to the design of $\Omega_{\text{PPC}}(t)$, the value of $\varepsilon_{q}(t)$ will be frozen to be smaller than $S_{\varepsilon1}$, indicating that $\varepsilon_{q}(t)<1$ holds under any circumstance. Therefore, this directly excludes the possibility of the PPC-singularity as the performance envelope always encloses the state trajectory. Meanwhile, Since the BLF $V_{B}$ is directly built based on $\varepsilon_{q}(t)$, such a mechanism naturally vanishes the impact of the PPC system on the total closed-loop system as $\dot{V}_{B}(t) = 0$ holds. 
Therefore, the system will be temporarily guided only by the APF part, providing a safely obstacle-circumvent process. 

When the boresight vector approaches the desired pointing direction, the design of $\Omega_{\rho}(t)$ guarantees that $\Omega_{Q}(t)$ will switch back to $0$, and hence $\rho_{q}(t)$ will converge exponentially again as $\dot{\rho}_{q}(t) = -k_{\rho}(\rho_{q}(t)-\rho_{\infty})$ holds, indicating the PPC system ($\varepsilon_{q}$-system) is reactivated again. Therefore, $\rho_{q}(t)$ will finally converge to approach $\rho_{q}(t)\to\rho_{\infty}$, and $\varepsilon_{q}(t)$ will converge simultaneously, which further resulting the $x_{e}$-trajectory to converge with the satisfaction of the pointing accuracy requirement. 

Additionally, the mechanism of how the combined PPC control scheme enhances the system's performance will be later discussed from a Lyapunov sense in Section \ref{Discussion}.

%

\section{ADAPTIVE I\&I-based APF-PPC COMPOSITE CONTROLLER DESIGN}\label{ADAPTIVE}
In this subsection, we present an adaptive APF-PPC controller to achieve the control objective stated in Subsection \ref{formulation}-\ref{PROBLEM}, of which is constructed by combining the proposed integrated APF-PPC control framework with the I\&I adaptive methodology. By regarding the control law derived from the APF-PPC control scheme as a static nominal control part, following the basic idea stated in \cite{shao2021immersion}, it allows us to incorporate this nominal control law into the parameter adaptive strategy and hence constructs the structure of the I\&I adaptive control. A schematic diagram of the proposed controller is illustrated in Figure \ref{Controller}.

\begin{figure}[hbt!]
	\centering 
	\includegraphics[width=0.5\textwidth]{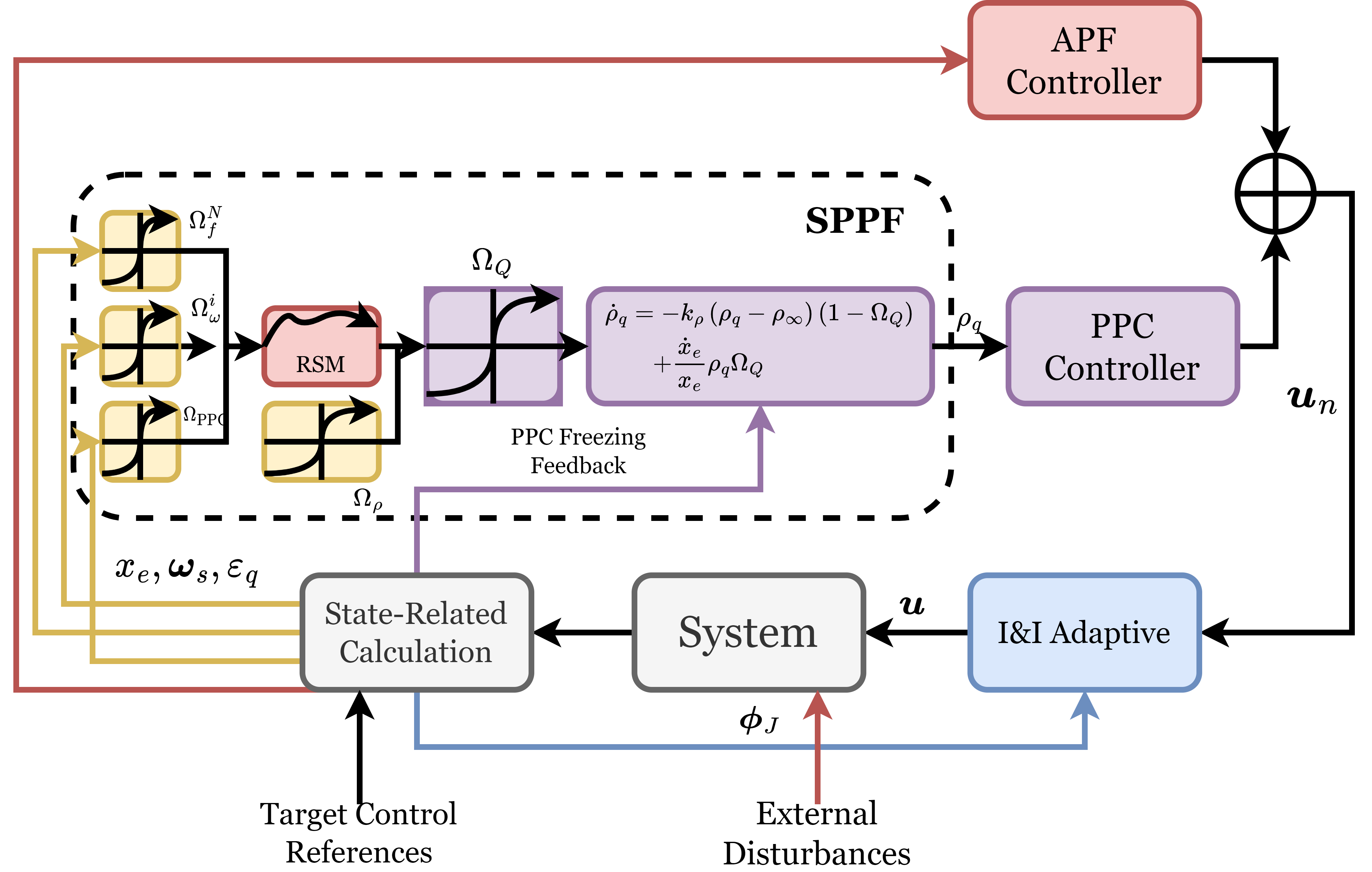}
	\caption{Sketch map of the Proposed Adaptive APF-PPC Controller}       
	\label{Controller}
\end{figure}

 \subsection{Controller Derivation}
 The adaptive control law $\boldsymbol{u}\in\mathbb{R}^{3}$ is designed as follows:
 \begin{equation}\label{controllaw}
 	\begin{aligned}
 		\boldsymbol{u} =& -\boldsymbol{\Psi}\left(\hat{\boldsymbol{\theta
 		}}+\boldsymbol{\beta}\right)
 	\end{aligned}
 \end{equation}
 where $\boldsymbol{\Psi}\in\mathbb{R}^{3\times 3}$ denotes an expanded regression matrix, defined as $\boldsymbol{\Psi}\triangleq \boldsymbol{\phi}_{J}-\mathcal{L}(\boldsymbol{u}_{n})$, $\boldsymbol{u}_{n}\in\mathbb{R}^{3}$ represents an additional "static" control term, specified as follows:
 \begin{equation}\label{Un}
 	\begin{aligned}
 		\boldsymbol{u}_{n} &\triangleq -\boldsymbol{R}^{-1}_{\omega}\left(\boldsymbol{\nabla}_{U}+\frac{R_{\rho}\boldsymbol{r}^{\times}_{b}\boldsymbol{B}_{b}}{\rho_{q}}\right)-K_{\omega}\boldsymbol{R}_{\omega}\boldsymbol{\omega}_{s}\\
 		&\quad +  		\boldsymbol{R}^{-1}_{\omega}\left(\frac{R_{\rho}\dot{\rho}_{q}\varepsilon_{q}}{\rho_{q}}\right)\frac{\boldsymbol{\omega}_{s}}{\|\boldsymbol{\omega}_{s}\|^{2}}
 	\end{aligned}
 \end{equation}
 where $K_{\omega}> 0$ is a constant controller gain parameter, $\boldsymbol{R}_{\omega}$ (cf. equation (\ref{Romega})) and $R_{\rho}$ (cf. equation (\ref{Rrho})) are previously defined matrix and scalar function, $\boldsymbol{\nabla}_{U}\in\mathbb{R}^{3}$ is the defined overall gradient of potential functions (cf. equation (\ref{nablaU})).
 
 In order to facilitate the following analysis, we first decompose the expanded regression matrix $\boldsymbol{\Psi}$ into two parts as $\boldsymbol{\Psi}_{1} \triangleq \boldsymbol{\phi}_{J}\in\mathbb{R}^{3\times 3}$ and $\boldsymbol{\Psi}_{2} \triangleq \mathcal{L}(-\boldsymbol{u}_{n})\in\mathbb{R}^{3\times3}$, hence it naturally comes $\boldsymbol{\Psi} = \boldsymbol{\Psi}_{1} + \boldsymbol{\Psi}_{2}$.
  Regarding necessary steps in I\&I adaptive control \cite{astolfi2003immersion}, a regulation function $\boldsymbol{\beta}\in\mathbb{R}^{3}$ needs to be specified such that $\partial \boldsymbol{\beta}/\partial\boldsymbol{\omega}_{s} = \boldsymbol{\Psi}^{\text{T}}$ holds. Considering the decomposed regression matrix $\boldsymbol{\Psi}_{2}$, it can be observed that a feasible solution $\boldsymbol{\beta}_{2}\in\mathbb{R}^{3}$ of the Partial Differentiation Equation (PDE): $\partial \boldsymbol{\beta}_{2}/\partial\boldsymbol{\omega}_{s} = \boldsymbol{\Psi}^{\text{T}}_{2}$ can be directly obtained following an element-wisely operation. The $i$-th component of $\boldsymbol{\beta}_{2}$ can be given as follows:
 \begin{equation}\label{betatwo}
 	\begin{aligned}
 		 	\beta_{2i} &= \frac{1}{k_{\omega}}\left(M^{2}_{\omega}\omega_{si}-\frac{1}{3}\omega^{3}_{si}\right)\nabla_{Ci}-\frac{K_{\omega}k_{\omega}}{2}\ln\left(\frac{1}{R_{\omega i}}\right)\\
 		 	&\quad +\frac{\delta_{n}(t)}{2k_{\omega}}\left[\ln(\|\boldsymbol{\omega}_{s}\|^{2})\left(M^{2}_{\omega}+\|\boldsymbol{\omega}_{s}\|^{2}-\omega^{2}_{si}\right)-\omega^{2}_{si}\right]
 	\end{aligned}
 \end{equation}
 where $\delta_{n}(t) \triangleq R_{\rho}(t)\dot{\rho}_{q}(t)\varepsilon_{q}(t)/\rho_{q}(t)$ is defined for brevity, $\nabla_{Ci}$ stands for the $i$-th component of a combined vector $\boldsymbol{\nabla}_{C}\in\mathbb{R}^{3}$, defined as $\boldsymbol{\nabla}_{C}\triangleq \boldsymbol{\nabla}_{U} + \frac{R_{\rho}\boldsymbol{r}^{\times}_{b}\boldsymbol{B}_{b}}{\rho_{q}}$, $R_{\omega i}$ denotes the $i$-th diagonal element of $\boldsymbol{R}_{\omega}$.
 
On the other hand, as for $\boldsymbol{\Psi}_{1}$, due to the well-known integrability obstacle that brought by the anti-symmetric cross product manipulation matrix, as mentioned by \cite{shao2021immersion,xia2022anti}, a closed-form solution of the 
 PDE: $\partial \boldsymbol{\beta}_{1}/\partial\boldsymbol{\omega}_{s}=\boldsymbol{\Psi}^{\text{T}}_{1}$ does not exist, hence we introduce a filtering technique to remove this barrier. $\boldsymbol{\beta}_{1}\in\mathbb{R}^{3}$ is designed as follows:
 \begin{equation}\label{betaone}
 	\begin{aligned}
 		\boldsymbol{\beta}_{1} = \hat{\boldsymbol{\Psi}}^{\text{T}}_{1}\boldsymbol{\omega}_{s}
 	\end{aligned}
 \end{equation}
 where $\hat{\boldsymbol{\Psi}}_{1}\in\mathbb{R}^{3\times 3}$ denotes a filtered regression matrix of the unsolvable part, constructed by replacing every $\boldsymbol{\omega}_{s}$ with the filtered $\hat{\boldsymbol{\omega}}_{s}$, expressed as:
 \begin{equation}\label{psihat}
 	\hat{\boldsymbol{\Psi}}_{1}\triangleq -\hat{\boldsymbol{\omega}}^{\times}_{s}\mathcal{L}({\boldsymbol{\omega}}_{s})
 \end{equation}
 where $\hat{\boldsymbol{\omega}}_{s}\in\mathbb{R}^{3}$ denotes the filtered variable corresponds to $\boldsymbol{\omega}_{s}$, generated by an identification filter that is introduced in \cite{xia2022anti}, expressed as follows:
 \begin{equation}\label{filter}
 	\begin{aligned}
 		\dot{\hat{\boldsymbol{\omega}}}_{s} &=- \left(K_{f}+g_{a}(t)\right)\left(\hat{\boldsymbol{\omega}}_{s} - \boldsymbol{\omega}_{s}\right)  + \boldsymbol{u}_{n},\hat{\boldsymbol{\omega}}_{s}(t_{0}) = \boldsymbol{\omega}_{s}(t_{0})\\
 	\end{aligned}
 \end{equation}
 where $K_{f}>0$ represents a constant filtering gain parameter, while $g_{a}(t)$ is a time-varying filtering gain parameter, updated by the following dynamics:
 \begin{equation}\label{dotka}
 	\dot{g}_{a}(t) = K_{a}\|\hat{\boldsymbol{\omega}}_{s}-\boldsymbol{\omega}_{s}\|^{2},g_{a}(t_{0})=0
 \end{equation}
 where $K_{a}>0$ is a constant regulation parameter.
 
Finally, the parameter adaptive law $\dot{\hat{\boldsymbol{\theta}}}$ and the regulation function $\boldsymbol{\beta}$ are specified as:
 \begin{equation}\label{thetahat}
 	\begin{aligned}
 	  		\boldsymbol{\beta} &= C_{\beta}\left[\hat{\boldsymbol{\Psi}}_{1}^{\text{T}}\boldsymbol{\omega}_{s}+\boldsymbol{\beta}_{2}\right]\\
 	  \dot{\hat{\boldsymbol{\theta}}} &= -C_{\beta}\left[\dot{\hat{\boldsymbol{\Psi}}}_{1}^{\text{T}}\boldsymbol{\omega}_{s}+\left(\hat{\boldsymbol{\Psi}}_{1}+\boldsymbol{\Psi}_{2}\right)^{\text{T}}\left(\boldsymbol{u}_{n}\right)\right] -C_{\beta}\dot{\bar{\boldsymbol{\beta}}}_{2}
 	\end{aligned}
 \end{equation} 
 where $C_{\beta}>0$ denotes the gain parameter, $\dot{\bar{\boldsymbol{\beta}}}_{2}\in\mathbb{R}^{3}$ is calculated by considering the time-derivative of $\boldsymbol{\beta}_{2}$ with respect to all variables except the angular velocity, i.e., $\dot{\bar{\boldsymbol{\beta}}}_{2}\triangleq \dot{\boldsymbol{\beta}}_{2} - \frac{\partial\boldsymbol{\beta}_{2}}{\partial\boldsymbol{\omega}_{s}}\dot{\boldsymbol{\omega}}_{s}$.
 
\subsection{Discussion on Implementation of the Adaptive APF-PPC Controller}
 	Additionally, some certain statements on the implementation of the proposed controller are presented below.

Note that $\dot{\bar{\boldsymbol{\beta}}}_{2}$ is necessitated for parameter adaptation, which can be expressed as:
 	\begin{equation}
 		\dot{\bar{\boldsymbol{\beta}}}_{2}=\frac{\partial\boldsymbol{\beta}_{2}}{\partial \delta_{n}}\dot{\delta}_{n}+\frac{\partial\boldsymbol{\beta}_{2}}{\partial\boldsymbol{\nabla}_{C}}\dot{\boldsymbol{\nabla}}_{C}
 	\end{equation}
 	The second part can be directly obtained since its time-derivative consists of $\dot{\boldsymbol{f}}^{N}_{b}$, $\dot{\boldsymbol{r}}_{b}$, $\varepsilon_{q}$, $\dot{\varepsilon}_{q}$, $\dot{\rho}_{q}$, and each signal is calculable using its analytical expression.
 	 However, for its first part, owing to the fact that $\dot{\delta}_{n}(t)$ includes a term like $\dot{\rho}_{q}/\rho_{q}$, its time-derivative cannot be obtained in a closed form as $\dot{\boldsymbol{\omega}}_{s}$ is unavailable. For practically implementation, one could use a Dynamic Surface Control technique that was originally announced in \cite{swaroop2000dynamic}, which utilizes a first-order low-pass filter to obtain a filtered signal of the original input. On the other hand, high-order differentiators or Tracking Differentiators that mentioned in \cite{shao2021immersion} are also applicable for this scenario. 
 	 
 	 Meanwhile, upon observing the expression of $\dot{\delta}_{n}$, note that it contains a radially unbounded term $\frac{1}{1-\varepsilon_{q}}$. However, owing to the design of the freezing mechanism, we could freeze the value of $\varepsilon_{q}$ and keep it remain beneath $S_{\varepsilon1}$. Therefore, this allows us to reduce the resulting exogenous perturbation caused by the time-derivative of input signal to an acceptable level such that it only has tiny influence on the filtering system.

\section{STABILITY ANALYSIS}

\begin{theorem}\label{T1}
	For the reduced-attitude system given by equation (\ref{errorsystem}), under the satisfaction of Assumption \ref{AssumpJ}, with the given adaptive APF-PPC controller given by equations (\ref{controllaw})(\ref{Un})(\ref{betatwo})(\ref{betaone})(\ref{psihat})(\ref{filter})(\ref{dotka})(\ref{thetahat}), the closed-loop system is asymptotically stabled, with the pointing-forbidden constraint (\ref{pointcons}), attitude angular velocity constraint (\ref{angcons}) and the PPC PFE constraint (\ref{pfe}) respected.
\end{theorem}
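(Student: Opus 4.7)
The plan is to build a single composite Lyapunov functional that absorbs every barrier/potential term used to encode a constraint, together with an I\&I estimation-error term and a filter-error term, and then show that its time-derivative is negative semi-definite. Boundedness of the composite Lyapunov function then implies that each individual barrier stays bounded, which by the barrier property forces the corresponding constraint-satisfying set ($\mathcal{S}_p$, $\mathcal{S}_\omega$, $\mathcal{S}_e$) to be invariant; asymptotic convergence of $x_e$ and $\boldsymbol{\omega}_s$ is then recovered by Barbalat's lemma.

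First I would introduce the I\&I estimation error $\tilde{\boldsymbol{\theta}} \triangleq \hat{\boldsymbol{\theta}} + \boldsymbol{\beta} - \boldsymbol{\theta}$ and the identification filter error $\boldsymbol{e}_f \triangleq \hat{\boldsymbol{\omega}}_s - \boldsymbol{\omega}_s$. Substituting the control law (\ref{controllaw}) into the dynamics (\ref{sysdyna}) gives $\boldsymbol{J}\dot{\boldsymbol{\omega}}_s = -\boldsymbol{\omega}_s^\times \boldsymbol{J}\boldsymbol{\omega}_s - \boldsymbol{\Psi}(\hat{\boldsymbol{\theta}}+\boldsymbol{\beta}) + \boldsymbol{\phi}_J \boldsymbol{\theta}$. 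Using $\boldsymbol{\phi}_J\boldsymbol{\theta} = \boldsymbol{\Psi}_1\boldsymbol{\theta}$ and $\mathcal{L}(\boldsymbol{u}_n)\boldsymbol{\theta} = \boldsymbol{J}\boldsymbol{u}_n$, the closed-loop $\boldsymbol{\omega}_s$-dynamics can be rearranged so that the nominal static term $\boldsymbol{u}_n$ drives the system while the mismatch is $-\boldsymbol{\Psi}\tilde{\boldsymbol{\theta}}$ plus a term $\hat{\boldsymbol{\Psi}}_1^T\boldsymbol{\omega}_s - \boldsymbol{\Psi}_1^T\boldsymbol{\omega}_s$ arising from the filter substitution in $\boldsymbol{\beta}_1$; the latter is an explicit function of $\boldsymbol{e}_f$ only. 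Differentiating $\tilde{\boldsymbol{\theta}}$ along (\ref{thetahat}) and cancelling the $\boldsymbol{\Psi}_2$ and $\dot{\bar{\boldsymbol{\beta}}}_2$ pieces termwise shows $\dot{\tilde{\boldsymbol{\theta}}}$ collects only the filter residual.

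Next I would propose the candidate
\begin{equation}
V \;=\; U(t) + V_\omega(t) + V_B(t) + \tfrac{1}{2C_\beta}\tilde{\boldsymbol{\theta}}^T\tilde{\boldsymbol{\theta}} + \tfrac{1}{2}\boldsymbol{e}_f^T\boldsymbol{e}_f .
\end{equation}
Using the identities $\dot{U} = \boldsymbol{\nabla}_U^T\boldsymbol{\omega}_s$, $\dot{V}_\omega = \boldsymbol{\omega}_s^T\boldsymbol{R}_\omega\dot{\boldsymbol{\omega}}_s$ (\ref{Vomega}), and $\dot{V}_B$ from (\ref{dotVB}), the cross-coupled gradient terms are cancelled exactly by the three ingredients of the static feedback $\boldsymbol{u}_n$ in (\ref{Un}): $-\boldsymbol{R}_\omega^{-1}\boldsymbol{\nabla}_U$ cancels $\dot{U}$, $-\boldsymbol{R}_\omega^{-1}R_\rho\boldsymbol{r}_b^\times\boldsymbol{B}_b/\rho_q$ cancels the first bracket of $\dot{V}_B$, and the last term of $\boldsymbol{u}_n$ together with $\dot{\rho}_q$ handles the $\dot{\rho}_q\varepsilon_q/\rho_q$ drift. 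The gain $-K_\omega\boldsymbol{R}_\omega\boldsymbol{\omega}_s$ then produces a negative definite $-K_\omega\|\boldsymbol{R}_\omega\boldsymbol{\omega}_s\|^2$. The parameter-error contribution collapses because the I\&I regulation function was chosen precisely so that $\partial\boldsymbol{\beta}/\partial\boldsymbol{\omega}_s = \boldsymbol{\Psi}^T$ on the solvable component and the filtered regressor $\hat{\boldsymbol{\Psi}}_1$ handles the obstacle, leaving a residual proportional to $\boldsymbol{e}_f$. Finally, $\dot{\boldsymbol{e}}_f = -(K_f + g_a)\boldsymbol{e}_f$ from (\ref{filter}) delivers $-(K_f+g_a)\|\boldsymbol{e}_f\|^2$, and the adaptive gain law (\ref{dotka}) absorbs the remaining cross-term $\tilde{\boldsymbol{\theta}}^T(\cdot)\boldsymbol{e}_f$ through Young's inequality, giving $\dot V \le -K_\omega\|\boldsymbol{R}_\omega\boldsymbol{\omega}_s\|^2 - K_f\|\boldsymbol{e}_f\|^2 \le 0$.

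From $\dot V \le 0$ I obtain $V \in \mathcal{L}_\infty$, hence $U_r^N$, $V_\omega$ and $V_B$ are each bounded; by the barrier properties $\lim_{\gamma_N\to P_1^N}U_r^N = +\infty$, $\lim_{|\omega_{si}|\to M_\omega}V_\omega = +\infty$, $\lim_{\varepsilon_q\to 1}V_B = +\infty$, this yields exactly the three required constraints $\boldsymbol{B}_i\in\mathcal{S}_p$, $\boldsymbol{\omega}_s\in\mathcal{S}_\omega$, $x_e\in\mathcal{S}_e$ for all $t\ge t_0$. Standard Barbalat-type arguments applied to $\boldsymbol{R}_\omega\boldsymbol{\omega}_s$ then give $\boldsymbol{\omega}_s\to\boldsymbol{0}_3$, and an invariance argument on $\dot{\boldsymbol{\omega}}_s=\boldsymbol{0}_3$ forces $\boldsymbol{\nabla}_U + R_\rho\boldsymbol{r}_b^\times\boldsymbol{B}_b/\rho_q \to \boldsymbol{0}_3$, which at the critical configuration corresponds to $x_e\to 0$ (local-minima exclusion being handled as in the remark following (\ref{nablaU})).

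The delicate step, and the one that will require the most care, is the bookkeeping of the SPPF switching indicator $\Omega_Q$ in $\dot{V}_B$. When $\Omega_Q=1$ (freezing), the derivation in Subsection \ref{SPPFMECHANISM} gives $\dot{\varepsilon}_q=0$ and hence $\dot{V}_B=0$, so the PPC term makes no contribution but the BLF remains bounded below $k_B\ln[1/(1-S_{\varepsilon 1})]$, preserving $\mathcal{S}_e$-invariance without interfering with the negative-definite terms above. When $\Omega_Q=0$, the expression (\ref{dotVB}) is used as written. On the transition stripes $\Omega_Q\in(0,1)$, convexity of (\ref{dRho}) in $\Omega_Q$ lets me write $\dot{V}_B$ as a convex combination of the two extremes, and the same cancellations go through with the nominal control $\boldsymbol{u}_n$. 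Handling this convex-combination argument, together with ensuring that $\Omega_\rho$ forces $\Omega_Q\to 0$ near the equilibrium (so that the PPC is reactivated and $\rho_q\to\rho_\infty$), is what makes the asymptotic accuracy claim precise.
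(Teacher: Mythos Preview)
Your overall architecture is right --- a composite Lyapunov function with barrier terms plus I\&I and filter pieces --- but the I\&I part of your candidate is too small and the bookkeeping there will not close. Two concrete issues:

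\textbf{1. The filter-error dynamics is misstated.} From (\ref{filter}) and (\ref{dotomegas}) one has
\[
\dot{\boldsymbol{e}}_f = \dot{\hat{\boldsymbol{\omega}}}_s - \dot{\boldsymbol{\omega}}_s = -(K_f+g_a)\boldsymbol{e}_f + \boldsymbol{J}^{-1}\boldsymbol{\Psi}\tilde{\boldsymbol{\theta}},
\]
not $-(K_f+g_a)\boldsymbol{e}_f$. The $\boldsymbol{u}_n$ terms cancel but the estimation mismatch does not, so $\dot V$ picks up a cross-term $\boldsymbol{e}_f^{\text T}\boldsymbol{J}^{-1}\boldsymbol{\Psi}\tilde{\boldsymbol{\theta}}$.

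\textbf{2. Your candidate cannot absorb the regression-mismatch perturbation.} Differentiating $\tilde{\boldsymbol{\theta}}$ with (\ref{thetahat}) gives
\[
\dot{\tilde{\boldsymbol{\theta}}} = -C_\beta\boldsymbol{\Psi}^{\text T}\boldsymbol{J}^{-1}\boldsymbol{\Psi}\tilde{\boldsymbol{\theta}} - C_\beta\boldsymbol{\Delta}_\Psi^{\text T}\boldsymbol{J}^{-1}\boldsymbol{\Psi}\tilde{\boldsymbol{\theta}},\qquad \boldsymbol{\Delta}_\Psi \triangleq \hat{\boldsymbol{\Psi}}_1+\boldsymbol{\Psi}_2-\boldsymbol{\Psi}.
\]
With $\tfrac{1}{2C_\beta}\|\tilde{\boldsymbol{\theta}}\|^2$ in $V$, Young's inequality on the second term leaves $\tfrac{1}{2J_{\min}}\|\boldsymbol{\Delta}_\Psi\|^2\|\tilde{\boldsymbol{\theta}}\|^2 \le \tfrac{S_\omega}{2J_{\min}}\|\boldsymbol{e}_f\|^2\|\tilde{\boldsymbol{\theta}}\|^2$. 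This is \emph{bilinear} in $\|\boldsymbol{e}_f\|^2$ and $\|\tilde{\boldsymbol{\theta}}\|^2$; the term $-(K_f+g_a)\|\boldsymbol{e}_f\|^2$ cannot dominate it because $\|\tilde{\boldsymbol{\theta}}\|^2$ is not yet known to be bounded, and $\dot g_a = K_a\|\boldsymbol{e}_f\|^2$ tracks a constant, not $\|\tilde{\boldsymbol{\theta}}\|^2$. So ``the adaptive gain law absorbs the remaining cross-term'' is the step that fails.

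The paper closes this gap with a \emph{bounded dynamic scaling factor}: set $\boldsymbol{z}=\tilde{\boldsymbol{\theta}}/\chi(t)$ with $\dot\chi/\chi = C_{r1}\|\boldsymbol{\Delta}_\Psi\|^2$ and $\chi\in(\sqrt{J_{\min}}/e,\sqrt{J_{\min}})$. Differentiating $V_z=\tfrac12\|\boldsymbol{z}\|^2$ now produces an extra $-C_{r1}\|\boldsymbol{\Delta}_\Psi\|^2\|\boldsymbol{z}\|^2$ that exactly cancels the troublesome bilinear term. The price is a positive $\dot V_\chi \le S_\omega\|\boldsymbol{e}_f\|^2$, and \emph{this} is what the adaptive gain $g_a$ together with $V_k=\tfrac{1}{2K_a}(g_a-S_\omega)^2$ is actually there to neutralize (the unknown Lipschitz constant $S_\omega$, not a $\tilde{\boldsymbol{\theta}}$--$\boldsymbol{e}_f$ cross-term). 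Once you add $V_z$, $V_\chi$, $V_k$ to your candidate and use $\chi^2\le J_{\min}$ when Young-ing the $\boldsymbol{\omega}_s^{\text T}\boldsymbol{R}_\omega\boldsymbol{J}^{-1}\boldsymbol{\Psi}\tilde{\boldsymbol{\theta}}$ and $\boldsymbol{e}_f^{\text T}\boldsymbol{J}^{-1}\boldsymbol{\Psi}\tilde{\boldsymbol{\theta}}$ terms, everything collapses to $\dot V \le -(K_\omega-\tfrac{1}{C_\beta})\|\boldsymbol{R}_\omega\boldsymbol{\omega}_s\|^2 -(K_f-\tfrac{1}{C_\beta})\|\boldsymbol{e}_f\|^2$, and the rest of your plan (barrier invariance, Barbalat, the SPPF case split) goes through as you wrote it.
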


\begin{proof}
Firstly, note that the control law stated in equation (\ref{controllaw}) can be equivalently decomposed as follows:
\begin{equation}
	\boldsymbol{u} = -\boldsymbol{\phi}_{J}(\hat{\boldsymbol{\theta}}+\boldsymbol{\beta})+\mathcal{L}(\boldsymbol{u}_{n})(\hat{\boldsymbol{\theta}}+\boldsymbol{\beta})
\end{equation}
Therefore, by plugging the control law into the system's dynamics equation (\ref{sysdyna}), it can be obtained that:
\begin{equation}\label{dotomega1}
	\begin{aligned}
			\dot{\boldsymbol{\omega}}_{s} &= \boldsymbol{J}^{-1}\left[\boldsymbol{\phi}_{J}\boldsymbol{\theta}-\boldsymbol{\phi}_{J}\left(\hat{\boldsymbol{\theta}}+\boldsymbol{\beta}\right)\right] -\boldsymbol{J}^{-1}\mathcal{L}(-\boldsymbol{u}_{n})\left(\hat{\boldsymbol{\theta}}+\boldsymbol{\beta}\right)
	\end{aligned}
\end{equation}

Defining an estimation error variable $\tilde{\boldsymbol{\theta}}\in\mathbb{R}^{3}$ as $\tilde{\boldsymbol{\theta}} \triangleq \hat{\boldsymbol{\theta}}+\boldsymbol{\beta}-\boldsymbol{\theta}$, one has $\tilde{\boldsymbol{\theta}}+\boldsymbol{\theta}=\hat{\boldsymbol{\theta}}+\boldsymbol{\beta}$. Accordingly, this rearranges the closed-loop dynamics of $\boldsymbol{\omega}_{s}$ that depicted in equation (\ref{dotomega1}) into the following form:
\begin{equation}\label{rearrangeddotomega}
	\begin{aligned}
			\dot{\boldsymbol{\omega}}_{s} &= \boldsymbol{J}^{-1}\left[
		-\boldsymbol{\phi}_{J}\tilde{\boldsymbol{\theta}} - \mathcal{L}(-\boldsymbol{u}_{n})\left(\boldsymbol{\theta}+\tilde{\boldsymbol{\theta}}\right)
		\right]\\
		&= -\boldsymbol{J}^{-1}\left[\boldsymbol{\phi}_{J}\tilde{\boldsymbol{\theta}}+\mathcal{L}(-\boldsymbol{u}_{n})\tilde{\boldsymbol{\theta}}\right]+\boldsymbol{u}_{n}\\
	\end{aligned}
\end{equation}
By employing the previously defined expanded regression matrix $\boldsymbol{\Psi}\in\mathbb{R}^{3\times3}$, the closed-loop dynamics of $\boldsymbol{\omega}_{s}$ given in equation (\ref{rearrangeddotomega}) can be further rewritten in a compact form as:
\begin{equation}\label{dotomegas}
	\dot{\boldsymbol{\omega}}_{s} = -\boldsymbol{J}^{-1}\boldsymbol{\Psi}\tilde{\boldsymbol{\theta}}+\boldsymbol{u}_{n}
\end{equation}

 Subsequently, we consider the dynamics of the estimator design. Taking the time-derivative of $\tilde{\boldsymbol{\theta}}$, one has $\dot{\tilde{\boldsymbol{\theta}}}=\dot{\hat{\boldsymbol{\theta}}}+\dot{\boldsymbol{\beta}}$.
By utilizing the design of $\hat{\boldsymbol{\theta}}$ and $\boldsymbol{\beta}$ that depicted in the equation (\ref{thetahat}), this gives a rise to:
\begin{equation}
	\begin{aligned}
			&\dot{\tilde{\boldsymbol{\theta}}} = \dot{\hat{\boldsymbol{\theta}}}+\dot{\boldsymbol{\beta}}= -C_{\beta}\left(\hat{\boldsymbol{\Psi}}_{1}+\boldsymbol{\Psi}_{2}\right)^{\text{T}}\boldsymbol{J}^{-1}\boldsymbol{\Psi}\tilde{\boldsymbol{\theta}}
	\end{aligned}
\end{equation}
Let the discrepancy between the combined filtered regression matrix $\hat{\boldsymbol{\Psi}}_{1}+\boldsymbol{\Psi}_{2}$ and its true value $\boldsymbol{\Psi}$ as: $\boldsymbol{\Delta}_{\Psi}\triangleq \hat{\boldsymbol{\Psi}}_{1}+\boldsymbol{\Psi}_{2}-\boldsymbol{\Psi}\in\mathbb{R}^{3\times3}$. Notably, the only difference between $\hat{\boldsymbol{\Psi}}_{1}$ and $\boldsymbol{\Psi}_{1}$ is essentially generated by the discrepancy between $\hat{\boldsymbol{\omega}}_{s}$ and $\boldsymbol{\omega}_{s}$. Hence, according to the Mean Value Theorem (MVT), this derives the following result:
\begin{equation}
	\|\boldsymbol{\Delta}_{\Psi}\|^{2}\le S_{\omega}\|\hat{\boldsymbol{\omega}}_{s}-\boldsymbol{\omega}_{s}\|^{2}
\end{equation}
where $S_{\omega}>0$ stands for an unknown Lipschitz constant.
Applying the notation of $\boldsymbol{\Delta}_{\Psi}$, the time-derivative of $\tilde{\boldsymbol{\theta}}$ can be rearranged as:
\begin{equation}
    \begin{aligned}
    	 \dot{\tilde{\boldsymbol{\theta}}}= -C_{\beta}\boldsymbol{\Psi}^{\text{T}}\boldsymbol{J}^{-1}\boldsymbol{\Psi}\tilde{\boldsymbol{\theta}}-C_{\beta}\boldsymbol{\Delta}^{\text{T}}_{\Psi}\boldsymbol{J}^{-1}\boldsymbol{\Psi}\tilde{\boldsymbol{\theta}}
    \end{aligned}
\end{equation}

To compensate for the impact resulted from the filtering-error $\boldsymbol{\Delta}_{\Psi}$, motivated by \cite{xia2022anti}, this paper designs a bounded dynamic scaling factor to facilitate the stability analysis. The scaled error variable $\boldsymbol{z}$ is defined as:
\begin{equation}
	\boldsymbol{z}\triangleq \tilde{\boldsymbol{\theta}}/\chi(t)
\end{equation} where $\chi(t)$ denotes the designed scaling factor, expressed as:
\begin{equation}\label{chit}
	\chi(t) = e^{f(r)}\sqrt{J_{\min}}/e
\end{equation}
with $f(r)$ a function of the auxiliary variable $r(t)$, defined as:
\begin{equation}\label{f_r}
	f(r) = \frac{2}{1+e^{-k_{\chi}r(t)}}-1
\end{equation}
where $k_{\chi}>0$ stands for an adjusting parameter. $r(t)$ is a dynamically-generated input, specified as:
\begin{equation}
	\begin{aligned}
		\dot{r}(t) &= \frac{C_{r1}}{\partial f(r)/\partial r}\|\boldsymbol{\Delta}_{\Psi}\|^{2},\quad r(t_{0})>0\\
	\end{aligned}
\end{equation}
where $C_{r1}>0$ is a design parameter that will be specified later. 
From the expression of $f(r)$ given in equation (\ref{f_r}), note that we have the following properties: $f(0) = 0$, $\lim_{r\to+\infty}f(r) = 1$ and $\partial f(r)/\partial r > 0$. Therefore, owing to the fact that $\dot{r}(t)>0$, given the initial condition of $r(t)$ as $r(t_{0})>0$, one has $f(r(t))\in\left(0,1\right)$ for $\forall t\in\left[t_{0},+\infty\right)$, and this further yields $\chi(t)\in\left(\sqrt{J_{\min}}/e,\sqrt{J_{\min}}\right)$.

Meanwhile, if $J_{\min}\le e^{2}$ holds, the initial condition of $r(t_{0})$ can be chosen to satisfy $e^{f(r)}>\frac{e}{\sqrt{J_{\min}}}$ such that $\chi(t_{0})>1$ holds; if $J_{\min}>e^{2}$, then any $r(t_{0}) > 0$ ensures $\chi(t_{0})>1$. Accordingly, for any circumstance, a sufficiently large initial condition of $r(t_{0})$ exists, which ensures $\chi(t_{0})\ge1$.

Accordingly, taking the time-derivative of $\chi(t)$ and combined with the dynamics of the given $r(t)$ yields:
		$\dot{\chi}(t)/\chi(t) = C_{r1}\|\boldsymbol{\Delta}_{\Psi}\|^{2}$.
		Next, considering the time-derivative of $\boldsymbol{z}$, it can be derived that:
\begin{equation}\label{dotz}
	\begin{aligned}
		\dot{\boldsymbol{z}} &= -C_{\beta}\boldsymbol{\Psi}^{\text{T}}\boldsymbol{J}^{-1}\boldsymbol{\Psi}\boldsymbol{z}-C_{\beta}\boldsymbol{\Delta}^{\text{T}}_{\Psi}\boldsymbol{J}^{-1}\boldsymbol{\Psi}\boldsymbol{z} -C_{r1}\|\boldsymbol{\Delta}_{\Psi}\|^{2}\boldsymbol{z}
	\end{aligned}
\end{equation}

We further choose the candidate lyapunov function of the scaled estimator $\boldsymbol{z}$ as $V_{z} = \frac{1}{2}\boldsymbol{z}^{\text{T}}\boldsymbol{z}$. Taking the differentiation of $V_{z}$ with respect to $t$, by considering the result of equation (\ref{dotz}) and applying the Young Inequality and Peter-Paul Inequality \cite{trif2005note}, it can be yielded that:
\begin{equation}\label{dotVz}
	\begin{aligned}
		\dot{V}_{z} &\le -C_{\beta}J_{\min}\|\boldsymbol{J}^{-1}\boldsymbol{\Psi}\boldsymbol{z}\|^{2}-C_{r1}\|\boldsymbol{\Delta}_{\Psi}\|^{2}\|\boldsymbol{z}\|^{2}\\
		&\quad +\frac{C_{\beta}J_{\min}}{2}\|\boldsymbol{J}^{-1}\boldsymbol{\Psi}\boldsymbol{z}\|^{2}+ \frac{C_{\beta}}{2J_{\min}}\|\boldsymbol{\Delta}_{\Psi}\|^{2}\|\boldsymbol{z}\|^{2}\\
	\end{aligned}
\end{equation}
We then use $C_{r1}$ to establish an additional compensation to dominate the perturbation term expressed as $\frac{C_{\beta}}{2J_{\min}}\|\boldsymbol{\Delta}_{\Psi}\|^{2}\|\boldsymbol{z}\|^{2}$. Let $C_{r1} = \frac{C_{\beta}}{2J_{\min}}$, then it can be derived from equation (\ref{dotVz}) that:
\begin{equation}\label{dotVz2}
			\dot{V}_{z} 
		\le -\frac{C_{\beta}J_{\min}}{2}\|\boldsymbol{J}^{-1}\boldsymbol{\Psi}\boldsymbol{z}\|^{2}
\end{equation}

Further, considering the time-derivative of the potential function $V_{\omega}$ shown in equation (\ref{Vomega}), by combining it with the equation (\ref{dotomegas}) and the expression of $\boldsymbol{u}_{n}$ (\ref{Un}), one has:
\begin{equation}
	\begin{aligned}
		\dot{V}_{\omega} &=  -\chi\boldsymbol{\omega}^{\text{T}}_{s}\boldsymbol{R}_{\omega}\boldsymbol{J}^{-1}\boldsymbol{\Psi}\boldsymbol{z}+\boldsymbol{\omega}^{\text{T}}_{s}\boldsymbol{R}_{\omega}\boldsymbol{u}_{n}\\
		&\le \frac{1}{C_{\beta}}\|\boldsymbol{R}^{\text{T}}_{\omega}\boldsymbol{\omega}_{s}\|^{2}+\frac{\chi^{2}C_{\beta}}{4}\|\boldsymbol{J}^{-1}\boldsymbol{\Psi}\boldsymbol{z}\|^{2}+\boldsymbol{\omega}^{\text{T}}_{s}\boldsymbol{R}_{\omega}\boldsymbol{u}_{n}\\
		&\le -\left(K_{\omega}-\frac{1}{C_{\beta}}\right)\|\boldsymbol{R}^{\text{T}}_{\omega}\boldsymbol{\omega}_{s}\|^{2}+\frac{J_{\min}C_{\beta}}{4}\|\boldsymbol{J}^{-1}\boldsymbol{\Psi}\boldsymbol{z}\|^{2}\\
		&\quad -\boldsymbol{\omega}^{\text{T}}_{s}\left[\frac{R_{\rho}\boldsymbol{r}^{\times}_{b}\boldsymbol{B}_{b}}{\rho_{q}}-\frac{R_{\rho}\dot{\rho}_{q}\varepsilon_{q}}{\rho_{q}}\frac{\boldsymbol{\omega}_{s}}{\|\boldsymbol{\omega}_{s}\|^{2}}\right] -\boldsymbol{\omega}^{\text{T}}_{s}\boldsymbol{\nabla}_{U}
	\end{aligned}
\end{equation}

Next, considering the filtering system of $\boldsymbol{\omega}_{s}$, we define a filtering error as $\tilde{\boldsymbol{\omega}}_{s} \triangleq \hat{\boldsymbol{\omega}}_{s} - \boldsymbol{\omega}_{s}$ , and further choose a candidate Lyapunov function as $V_{f1} = \frac{1}{2}\|\tilde{\boldsymbol{\omega}}_{s}\|^{2}$. Taking the time-derivative of $V_{f1}$ and combined with $\dot{\hat{\boldsymbol{\omega}}}_{s}$ that given in equation (\ref{filter}), it can be obtained that:
\begin{equation}
	\begin{aligned}
		\dot{V}_{f1} &=\tilde{\boldsymbol{\omega}}^{\text{T}}_{s}\left[-(K_{f}+g_{a}(t))\tilde{\boldsymbol{\omega}}_{s}+ \boldsymbol{J}^{-1}\boldsymbol{\Psi}\chi\boldsymbol{z}\right] \\
		&\le  -(K_{f}+g_{a})\|\tilde{\boldsymbol{\omega}}_{s}\|^{2}+\frac{J_{\min}C_{\beta}}{4}\|\boldsymbol{J}^{-1}\boldsymbol{\Psi}\boldsymbol{z}\|^{2}+\frac{1}{C_{\beta}}\|\tilde{\boldsymbol{\omega}}_{s}\|^{2}\\
		&= -\left(K_{f}+g_{a}(t)-\frac{1}{C_{\beta}}\right)\|\tilde{\boldsymbol{\omega}}_{s}\|^{2}+\frac{J_{\min}C_{\beta}}{4}\|\boldsymbol{J}^{-1}\boldsymbol{\Psi}\boldsymbol{z}\|^{2}\\
	\end{aligned}
\end{equation}
Meanwhile, considering a candidate Lyapunov-like function for the scaling factor $\chi(t)$ as $V_{\chi} = \frac{1}{2J_{\min}C_{r1}}\chi^{2}(t)$. Note that $C_{r1}$ denotes the gain parameter of the dynamic scaling factor, which has been previously selected as $C_{r1} = \frac{C_{\beta}}{2J_{\min}}$. Accordingly, taking the time-derivative of $V_{\chi}$ yields:
\begin{equation}\label{dotVCHI}
	\begin{aligned}
		\dot{V}_{\chi}(t) &= \frac{1}{J_{\min}C_{r1}}\chi^{2}\frac{\dot{\chi}}{\chi}\le \|\boldsymbol{\Delta}_{\Psi}\|^{2}
	\end{aligned}
\end{equation}
By utilizing the property of the aforementioned Mean Value Theorem (MVT) as: $\|\boldsymbol{\Delta}_{\Psi}\|^{2}\le S_{\omega}\|\tilde{\boldsymbol{\omega}}_{s}\|^{2}$, it can be further derived from equation (\ref{dotVCHI}) that $\dot{V}_{\chi}(t) \le S_{\omega}\|\tilde{\boldsymbol{\omega}}_{s}\|^{2}$ holds.

We further consider a candidate Lyapunov-like function of the time-varying gain parameter $g_{a}(t)$ as $V_{k} = \frac{1}{2K_{a}}(g_{a}(t)-S_{\omega})^{2}$. Taking the time-derivative of $V_{k}$, one has:
	$\dot{V}_{k} = \frac{1}{2K_{a}}2(g_{a}-S_{\omega})K_{a}\|\tilde{\boldsymbol{\omega}}_{s}\|^{2}=(g_{a}-S_{\omega})\|\tilde{\boldsymbol{\omega}}_{s}\|^{2}$

Overall, considering a lumped Lyapunov certificate as $V_{1} = V_{\omega} + V_{z} + V_{f1} + V_{\chi} + V_{k}$, taking its time-derivative yields:
\begin{equation}
	\begin{aligned}
		\dot{V}_{1} &\le -\left(K_{\omega}-\frac{1}{C_{\beta}}\right)\|\boldsymbol{R}^{\text{T}}_{\omega}\boldsymbol{\omega}_{s}\|^{2}-\left(K_{f}+g_{a}(t)-\frac{1}{C_{\beta}}\right)\|\tilde{\boldsymbol{\omega}}_{s}\|^{2}\\
		&\quad-\boldsymbol{\omega}^{\text{T}}_{s}\left[\frac{R_{\rho}\boldsymbol{r}^{\times}_{b}\boldsymbol{B}_{b}}{\rho_{q}}-\left(\frac{R_{\rho}\dot{\rho}_{q}\varepsilon_{q}}{\rho_{q}}\right)\frac{\boldsymbol{\omega}_{s}}{\|\boldsymbol{\omega}_{s}\|^{2}}\right] -\boldsymbol{\omega}^{\text{T}}_{s}\boldsymbol{\nabla}_{U}\\
		&\quad +(g_{a}(t)-S_{\omega})\|\tilde{\boldsymbol{\omega}}_{s}\|^{2}+S_{\omega}\|\tilde{\boldsymbol{\omega}}_{s}\|^{2}\\
		&= -\left(K_{\omega}-\frac{1}{C_{\beta}}\right)\|\boldsymbol{R}^{\text{T}}_{\omega}\boldsymbol{\omega}_{s}\|^{2}-\left(K_{f}-\frac{1}{C_{\beta}}\right)\|\tilde{\boldsymbol{\omega}}_{s}\|^{2}\\
		&\quad-\boldsymbol{\omega}^{\text{T}}_{s}\left[\frac{R_{\rho}\boldsymbol{r}^{\times}_{b}\boldsymbol{B}_{b}}{\rho_{q}}-\left(\frac{R_{\rho}\dot{\rho}_{q}\varepsilon_{q}}{\rho_{q}}\right)\frac{\boldsymbol{\omega}_{s}}{\|\boldsymbol{\omega}_{s}\|^{2}}\right] -\boldsymbol{\omega}^{\text{T}}_{s}\boldsymbol{\nabla}_{U}\\
	\end{aligned}
\end{equation}
To facilitate the system's stability, let $K_{\omega}>\frac{1}{C_{\beta}}$, $K_{f}>\frac{1}{C_{\beta}}$ holds. By combining with the potential function $U(t)$ and the design of the BLF $V_{B}(t)$, an overall Lyapunov certificate is chosen as $V = V_{1} + U + V_{B}$. Correspondingly, by combining the result of $\dot{U}(t)$ with $\dot{V}_{B}(t)$ that provided in equation (\ref{nablaU}) and (\ref{dotVB}), it can be derived that:
\begin{equation}\label{Vall}
	\begin{aligned}
		\dot{V}&\le -\left(K_{\omega}-\frac{1}{C_{\beta}}\right)\|\boldsymbol{R}^{\text{T}}_{\omega}\boldsymbol{\omega}_{s}\|^{2}-\left(K_{f}-\frac{1}{C_{\beta}}\right)\|\tilde{\boldsymbol{\omega}}_{s}\|^{2}\\
	\end{aligned}
\end{equation}
Firstly, upon observing the result that shown in equation (\ref{dotVz2}), note that $\dot{V}_{z}\le 0$ holds, indicating that $\boldsymbol{z}\in\mathcal{L}_{\infty}$. Meanwhile, referring to the design of $\chi(t)$ shown in equation (\ref{chit}), one has $\chi(t)\le \sqrt{J_{\min}}$ and hence $\chi(t)\in\mathcal{L}_{\infty}$. By utilizing the Barbalat's Lemma, one has $\lim_{t\to+\infty}\dot{V}_{z}=0$.
On the other hand, based on the result of equation (\ref{Vall}), one has $\dot{V}(t)\le0$, and hence $V(t)\in\mathcal{L}_{\infty}$. This, in turn, shows that $x_{e}(t)$, $\varepsilon_{q}(t)$, $\boldsymbol{R}_{\omega}\boldsymbol{\omega}_{s}$, $g_{a}(t)$, $\tilde{\boldsymbol{\omega}}_{s}\in\mathcal{L}_{\infty}$. By invoking Barbalat's Lemma, one has $\lim_{t\to+\infty}V(t) = 0$. Therefore, $x_{e}$, $\boldsymbol{\omega}_{s}$, $\tilde{\boldsymbol{\omega}}\in\mathcal{L}_{2}\cap\mathcal{L}_{\infty}$, while $\boldsymbol{J}^{-1}\boldsymbol{\Psi}\boldsymbol{z}$ and $\boldsymbol{J}^{-1}\boldsymbol{\Psi}\tilde{\boldsymbol{\theta}}\in\mathcal{L}_{2}$. 

	Meanwhile, potential functions $U(t)$, $V_{\omega}(t)$ and the BLF function $V_{B}(t)$ are bounded for $\forall t\in\left[t_{0},+\infty\right)$. Owing to the barrier condition Lemma \cite{tee2009barrier}, the constraint-satisfying set $\mathcal{S}_{p}$, $\mathcal{S}_{\omega}$ and $\mathcal{S}_{e}$ remain invariant with $\boldsymbol{B}_{i}(t_{0})\in\mathcal{S}_{p}$, $\boldsymbol{\omega}(t_{0})\in\mathcal{S}_{\omega}$ and $x_{e}(t_{0})\in\mathcal{S}_{e}$, and thereby indicating that constraints (\ref{pointcons})(\ref{angcons}) and (\ref{pfe}) are respected during the whole control process. 
	
	Consequently, system will converge to an implicit manifold $\boldsymbol{\Psi}\tilde{\boldsymbol{\theta}}=\boldsymbol{0}$ as $t\to+\infty$, and this further provides the conclusion that $x_{e}(t)$, $\varepsilon_{q}(t)$ and $\boldsymbol{\omega}_{s}$ will be asymptotically stabled, with all constraints satisfied. This completes the proof of Theorem \ref{T1}.
\end{proof}
\begin{remark}
	It should be noticed that the value of $S_{\omega}$, $J_{\min}$ and $\chi(t)$ are not incorporated into the implementation of the controller, which are introduced only for stability analysis.
\end{remark}

\section{FURTHER DISCUSSION ON THE EFFECT OF SPPF-based PPC}\label{Discussion}
Based on the Lyapunov energy perspective, this section serves as an additional statement of the theoretic effect of the proposed SPPF-based PPC strategy.
\begin{figure}[hbt!]
	\centering 
	\includegraphics[width=0.3\textwidth]{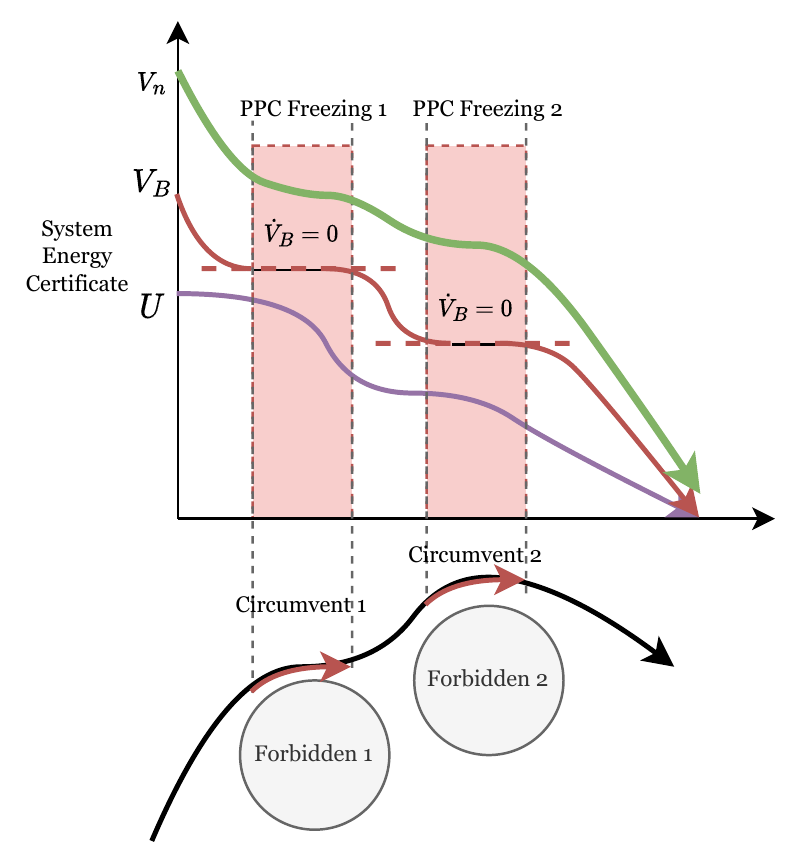}
	\caption{Sketch-map of the system switching process. The above graph shows the time-evolution of the system energy $V_{B}$, $U_{a}$ and $V_{n} = V_{B}+U_{a}$; the below graph shows the time-evolution of the boresight pointing trajectory. As we claimed about the freezing mechanism, when the boresight vector bypasses the forbidden zone (not the only circumstance), $\dot{V}_{B} = 0$ and hence excludes the performance consideration temporarily}       
	\label{freezingmechanism}   
\end{figure}	
Considering a combined Lyapunov energy function related to the "pointing error" as $V_{n}(t) = U_{a}(t) + V_{B}(t)$, i.e. the sum of the potential function for pointing error control and the PPC's Barrier Lyapunov function. It can be observed that $\dot{V}_{n} \le 0$ is necessary for the system's stability.

\textbf{1.} For $\Omega_{Q}(t) = 0$, we have $\dot{\rho}_{q}(t) = -k_{\rho}\left(\rho_{q}(t)-\rho_{\infty}\right)$. Taking the time-derivative of $V_{B}$ and $V_{n}$ yields:
\begin{equation}\label{dVcase1}
	\begin{aligned}
		\dot{V}_{B} &= \frac{R_{\rho}}{\rho_{q}}\left[\left(\boldsymbol{r}^{\times}_{b}\boldsymbol{B}_{b}\right)^{\text{T}}\boldsymbol{\omega}_{s}-\frac{-k_{\rho}(\rho_{q}-\rho_{\infty})}{\rho_{q}}\varepsilon_{q}\right]\\
		\dot{V}_{n} &= \left(k_{a}+\frac{R_{\rho}}{\rho_{q}}\right)\left(\boldsymbol{r}^{\times}_{b}\boldsymbol{B}_{b}\right)^{\text{T}}\boldsymbol{\omega}_{s}-\frac{R_{\rho}\dot{\rho}_{q}}{\rho_{q}}\varepsilon_{q}
	\end{aligned}
\end{equation}
It can be observed that the PPC scheme adds an additional high-gain  term $\frac{R_{\rho}(t)}{\rho_{q}(t)}$ on the APF's attraction gradient. By simultaneously designing an appropriate controller to cancel the right term $-\frac{R_{\rho}\dot{\rho}_{q}}{\rho_{q}}\varepsilon_{q}$ and compensate the overall gradient $(k_{a}+\frac{R_{\rho}}{\rho_{q}})\boldsymbol{r}^{\times}_{b}\boldsymbol{B}_{b}$, the PPC scheme will serve as an additional "accelerator" to ensure the satisfaction of PFE constraint, thus facilitating the convergence of $x_{e}$ and helps the system against with external disturbances. 
%

\textbf{2.}
For $\Omega_{Q}(t) = 1$, considering the time-derivative of $V_{B}$, we have:
\begin{equation}\label{switch}
	\begin{aligned}
		\dot{V}_{B} &= \frac{R_{\rho}}{\rho_{q}}\left[\left(\boldsymbol{r}^{\times}_{b}\boldsymbol{B}_{b}\right)^{\text{T}}\boldsymbol{\omega}_{s}-\frac{\dot{\rho}_{q}}{\rho_{q}}x_{e}\right]\\
	\end{aligned}
\end{equation}
 Correspondingly, it can be obtained that $\dot{V}_{B} = 0$, indicating that the PPC system has no impact on the system total energy. Accordingly, the intrinsic conflict is avoided by the "freezing" mechanism as the time-derivative of $V_{n}$ is only governed by the nominal APF controller at current moment, i.e., $\dot{V}_{n} = \dot{U}_{a}$. 

\textbf{3.} 
Considering the condition that $\Omega_{Q}(t)\in\left(0,1\right)$ holds, it can be further obtained that:
\begin{equation}
	\begin{aligned}\label{VB3}
		\dot{V}_{B}
		&=\left(1-\Omega_{Q}(t)\right)\left[\frac{R_{\rho}}{\rho_{q}}\dot{x}_{e}-\frac{R_{\rho}}{\rho_{q}}\left[-k_{\rho}(\rho_{q}-\rho_{\infty})\right]\right]
	\end{aligned}
\end{equation}
Notably, $\dot{V}_{B}$ can be rearranged as follows:
\begin{equation}
	\dot{V}_{B} = \left(1-\Omega_{Q}(t)\right)dV_{B1}
\end{equation}
where $dV_{B1}$ is defined as $dV_{B1}\triangleq\frac{R_{\rho}}{\rho_{q}}\dot{x}_{e}-\frac{R_{\rho}}{\rho_{q}}\left[-k_{\rho}(\rho_{q}-\rho_{\infty})\right]$, denoting the expression of $\dot{V}_{B}$ when $\Omega_{Q}(t) = 0$ holds, as depicted in equation (\ref{dVcase1}).
Correspondingly, it can be inferred that $\dot{V}_{B}$ changes with a coefficient $1-\Omega_{Q}(t)$ as $\Omega_{Q}(t)$ switches from $0$ to $1$, and $\dot{V}_{B} = 0$ holds when $\Omega_{Q}(t)$ increases to $1$ finally. Note that the switching of $\Omega_{Q}(t)$ is smooth and differentiable, hence such a switching of $\dot{V}_{B}$ is also smooth and differentiable.
A sketch-map of this process is illustrated in Figure \ref{freezingmechanism} for a further explanation.

\section{NUMERICAL SIMULATION AND ANALYSIS}\label{Simulation}
In this section, we present a series of simulation results to validate the efficacy of our proposed scheme. Firstly, we conduct two sets of simulation results to demonstrate the fundamental effect of the proposed adaptive APF-PPC control scheme, considering scenarios involving $2$ or $3$ forbidden zones. Subsequently, we perform a comparative simulation to validate the robustness of the proposed framework and its superiority in control accuracy improvement compared with existing APF-only approaches. Moreover, we present a set of Monte Carlo simulations, showcasing the capability of the proposed scheme to address control scenarios with randomized conditions.

\subsection{Boresight Alignment Control Scenario Establishing}\label{scenario}
The spacecraft is assumed to be rigid, of which the inertial matrix $\boldsymbol{J} = \text{diag}(\left[2,2.9,2.3\right])$. The unit is $\text{kg}\cdot\text{m}^{2}$.

The external environmental disturbance $\boldsymbol{d}\in\mathbb{R}^{3}$ is modeled as follows:
\begin{equation}
	\boldsymbol{d} = 
	\begin{bmatrix}
		1e-3 \cdot \left(4\sin\left(3\omega_{p}t\right) + 3\cos\left(10\omega_{p}t\right) -4\right)\\ 	
		1e-3\left(-1.5\sin\left(2\omega_{p}t\right) + 3\cos\left(5\omega_{p}t\right) +4\right)\\ 	
		1e-3\left(3\sin\left(10\omega_{p}t\right) - 8\cos\left(4\omega_{p}t\right) +4\right)\\ 	
	\end{bmatrix}
\end{equation}
where $\omega_{p} = 0.01\text{rad/s}$ refers to the period of the external disturbance, and the unit of $\boldsymbol{d}$ is $\text{N}\cdot\text{m}$. It can be observed that $\|\boldsymbol{d}\|$ is much more larger than a normal one in actual spacecraft environment, hence is enough for the assessment of the proposed scheme's robustness. 
Meanwhile, for practical consideration, the maximum output torque is limited to $0.05$$\text{N}\cdot\text{m}$ in this paper.

Specific settings of constraints of the simulation are stated as follows:

1. \textbf{Attitude Angular Velocity Constraint}

Each $i$-th component of $|\omega_{si}(t)|$ should not exceed $\textbf{3}^{\circ}/\text{s}$ during the whole control process. Accordingly, $M_{\omega}$ is given as $M_{\omega} = 0.0524$.

2. \textbf{PFE Constraint for Guaranteed Accuracy}

Suppose that the terminal pointing accuracy should be greater than $\textbf{0.05}^{\circ}$, such that:
\begin{equation}
	\Theta(t) < 0.05^{\circ}, \text{for}\quad t\in\left[t_{\text{set}},+\infty\right) 
\end{equation}
where $t_{\text{set}}>t_{0}$ stands for a practical settling time. Since we have $\cos\Theta = 1-x_{e}$, hence this condition is equivalent to:
	$x_{e}(t) < 3.8\times10^{-7}, \text{for}\quad t\in\left[t_{\text{set}},+\infty\right) $.
Accordingly, we design the performance envelope as $k_{\rho} = 0.05$, $\rho_{\infty} = 1\times 10^{-4}$, and the initial condition is given as $\rho_{q}(t_{0})=4$;

For the following simulation, main parameters of the APF control are given as: $k_{a}=2.5$, $k_{r}=0.5$, $P^{N}_{0} = \cos(\Theta^{N}_{f}+10^{\circ})$, $P^{N}_{1}=\cos\Theta^{N}_{f}$, $k_{\omega} = 0.03$, $K_{\omega} = 0.05$. Main parameters of the BLF is $k_{B} = 0.4$. Notably, this suggest that the acting domain is $10$ degrees larger than the minimum permitted angle. Main parameters of the adaptive strategy are $C_{\beta} = 0.05$,  $K_{f}=30$. The coefficient of the RSM function is provided as $K_{s}=100$. For the switching indicator, we set $S^{N}_{f0} = \cos(\Theta^{N}_{f}+10^{\circ})$, $S^{N}_{f1} = \cos(\Theta^{N}_{f}+5^{\circ})$, $S^{i}_{\omega0} = 0.8M^{2}_{\omega}$, $S^{i}_{\omega1}=0.9M^{2}_{\omega}$, $S_{\varepsilon0}=0.9$, $S_{\varepsilon1}=0.95$, $S_{\rho0}=\rho_{\infty}+0.001$, $S_{\rho1}=\rho_{\infty}+0.002$.

In the following analysis, the initial condition of the spacecraft is specified as $\boldsymbol{A}_{bi} = \boldsymbol{I}_{3}$, $\boldsymbol{\omega}_{s} = \boldsymbol{0}_{3}$. We further randomly set a desired pointing direction $\boldsymbol{r}_{i}$ expressed in $\mathfrak{R}_{i}$ as:
\begin{equation}
	\boldsymbol{r}_{i} = \left[-0.8617,0.4975,-0.0995\right]^{\text{T}}
\end{equation}

\subsection{Normal Case Simulation: Two Pointing-forbidden Zones}\label{TwoForbidden}

Firstly, we present a simulation campaign, considering the circumstance that there exist $2$ forbidden directions. These forbidden zones are specified as follows:
\begin{equation}
	\begin{aligned}
		\boldsymbol{f}^{1}_{i} &= \left[0.5715,0.8165,0.0816\right]^{\text{T}}\\
		\boldsymbol{f}^{2}_{i} &= \left[-0.3369,0.8422,-0.4211\right]^{\text{T}}
	\end{aligned}
\end{equation}
\begin{figure}[hbt!]
	\centering 
	\includegraphics[width=0.4\textwidth]{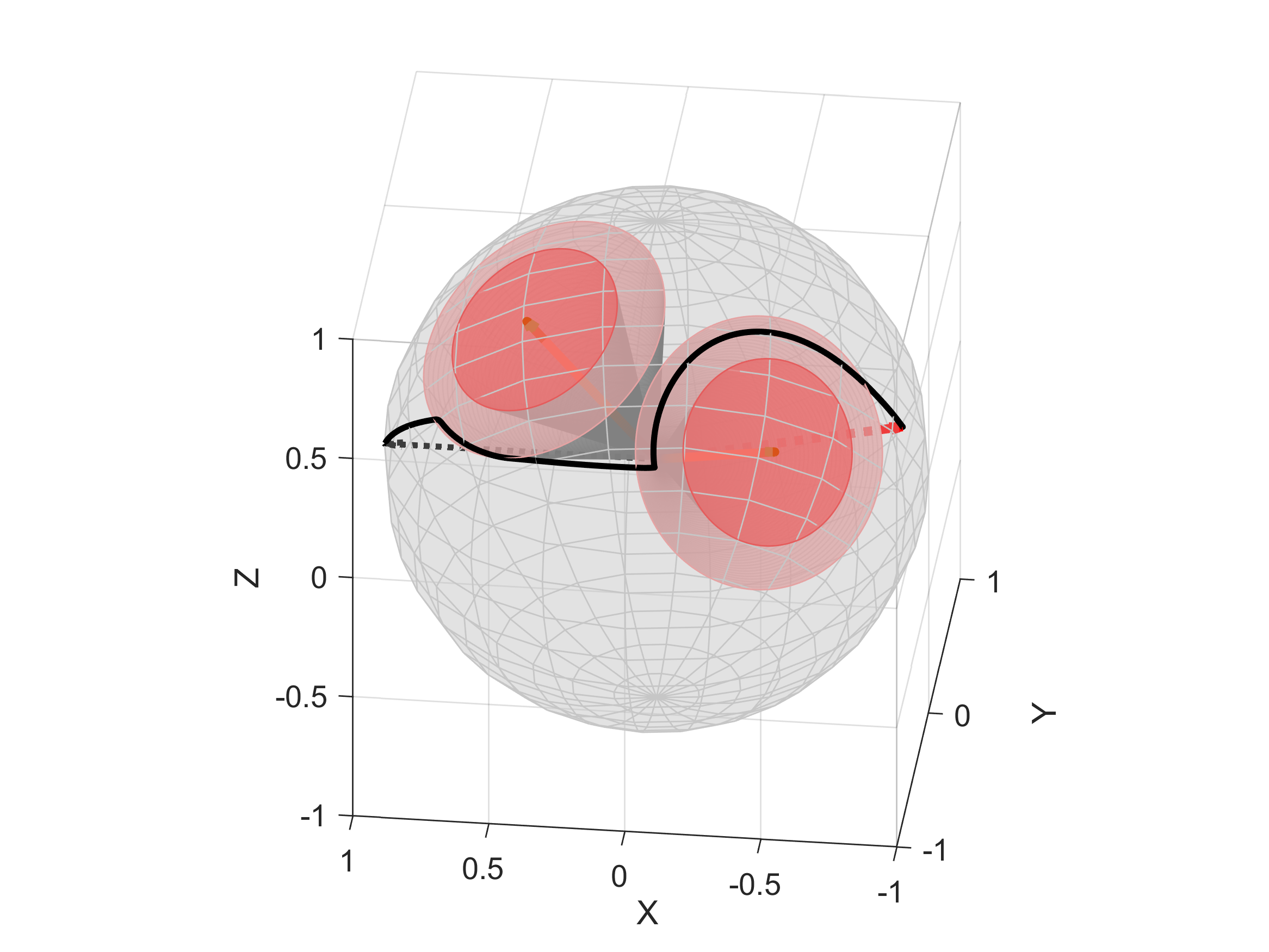}
	\caption{Pointing Trajectory of $\boldsymbol{B}_{i}$ expressed in the inertia frame $\mathfrak{R}_{i}$ (Two Forbidden Cones)}   
	\label{S1}
			\centering 
	\includegraphics[width=0.5\textwidth]{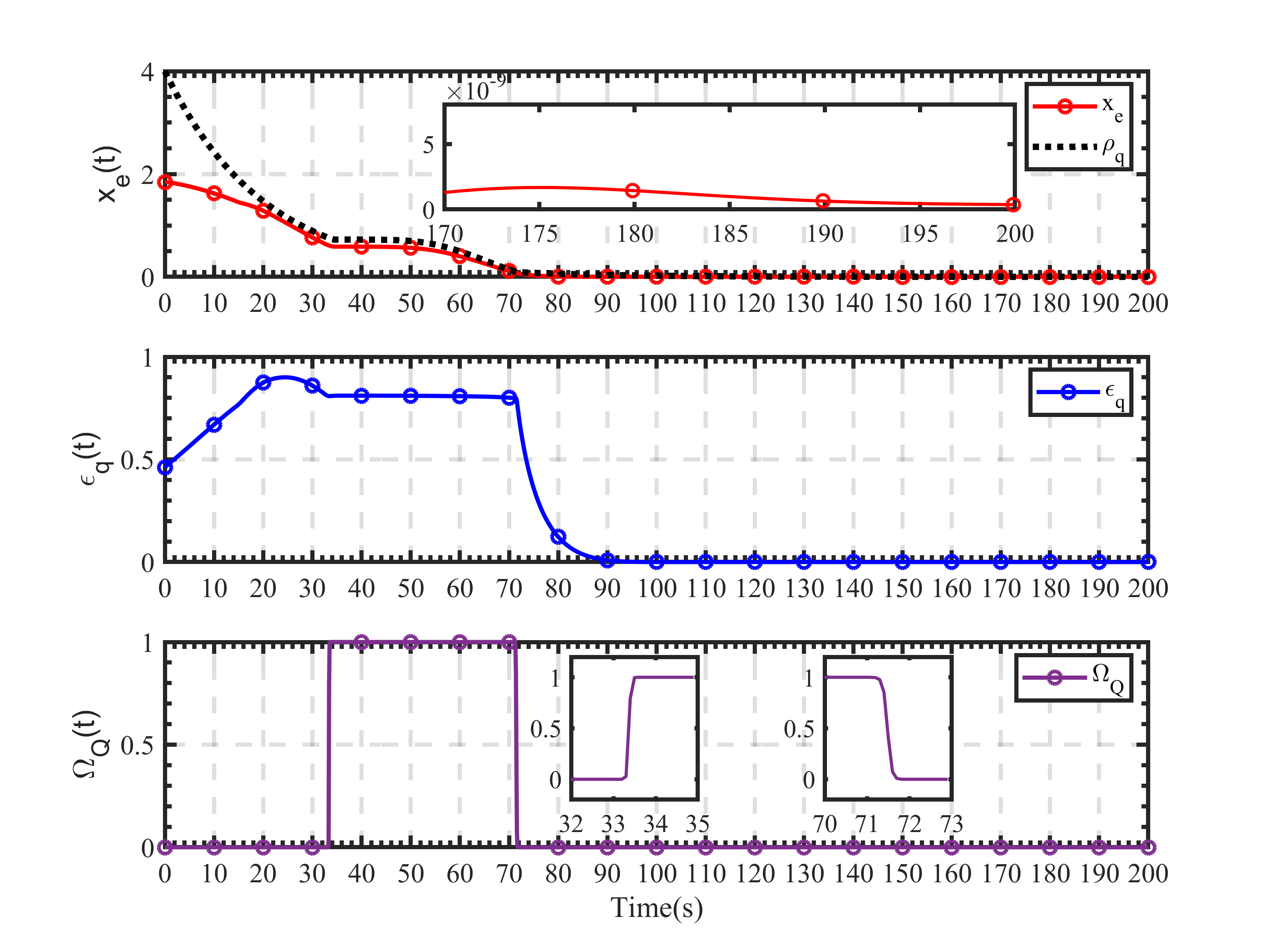} 
	\caption{Time evolution of the pointing error variable $x_{e}(t)$, transformed error variable $\varepsilon_{q}(t)$ and the switching indicator $\Omega_{Q}(t)$ (Two Forbidden Cones)}    
	\label{xe1}
\end{figure}

Permitted minimum angles are given as $\Theta^{1}_{f}=\Theta^{2}_{f} = 20^{\circ}$.
The pointing trajectory of the boresight vector $\boldsymbol{B}_{i}$ (resolved in $\mathfrak{R}_{i}$) is illustrated in a unit sphere $\mathbb{S}^{2}$, as shown in Figure \ref{S1}. The black vector stands for the initial position of the boresight vector, while the desired direction is illustrated in red. The strict pointing-forbidden regions are illustrated in dark red, while acting domains of repulsion fields are illustrated in light red, corresponding to each $N$-th forbidden zone.

The time-evolution of the pointing error variable $x_{e}(t)$, the PPC transformed variable $\varepsilon_{q}(t)$ and the time-evolution of the overall switching indicator $\Omega_{Q}(t)$ are illustrated in Figure \ref{xe1}. In the first subfigure of Figure \ref{xe1}, the red solid line stands for the time-evolution of $x_{e}(t)$, while the black-dotted line stands for the time-evolution of the performance envelope $\rho_{q}(t)$. The time-evolution of the attitude angular velocity $\boldsymbol{\omega}_{s}$ and the control input $\boldsymbol{u}$ is presented in Figure \ref{W1} and \ref{u1}, respectively.

From Figure \ref{xe1}, it can be observed that $\varepsilon_{q}(t)<1$ holds for the whole control process, indicating that the PFE constraint (\ref{pfe}) is satisfied. It can be also discovered that without the design of SPPF, utilizing a typical exponential-converged performance function will not able to ensure the satisfaction of PFE constraint. Meanwhile, the pointing error variable $x_{e}(t)$ converges to be smaller than $x_{e}(t) < 2\times 10^{-9}$, i.e., $\Theta<0.0036^{\circ}$, indicating that the provided performance criteria $x_{e}\le 3.8\times10^{-7}$, is achieved. 

\begin{figure}[hbt!]
		\centering 
	\includegraphics[width=0.5\textwidth]{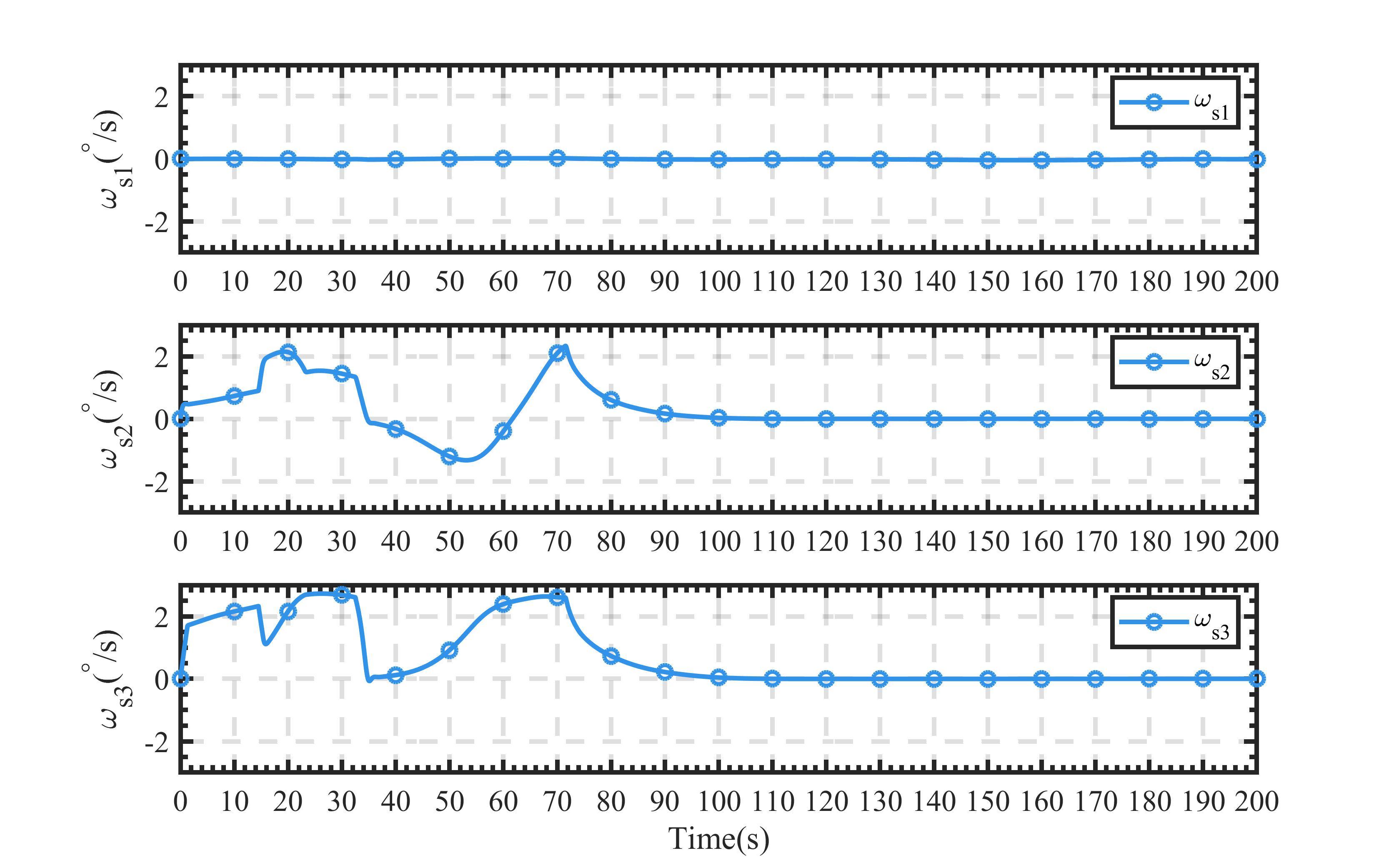}
	\caption{Time evolution of the attitude angular velocity $\boldsymbol{\omega}_{s}(t)$ (Two Forbidden Cones)}       
	\label{W1} 
		\centering 
	\includegraphics[width=0.5\textwidth]{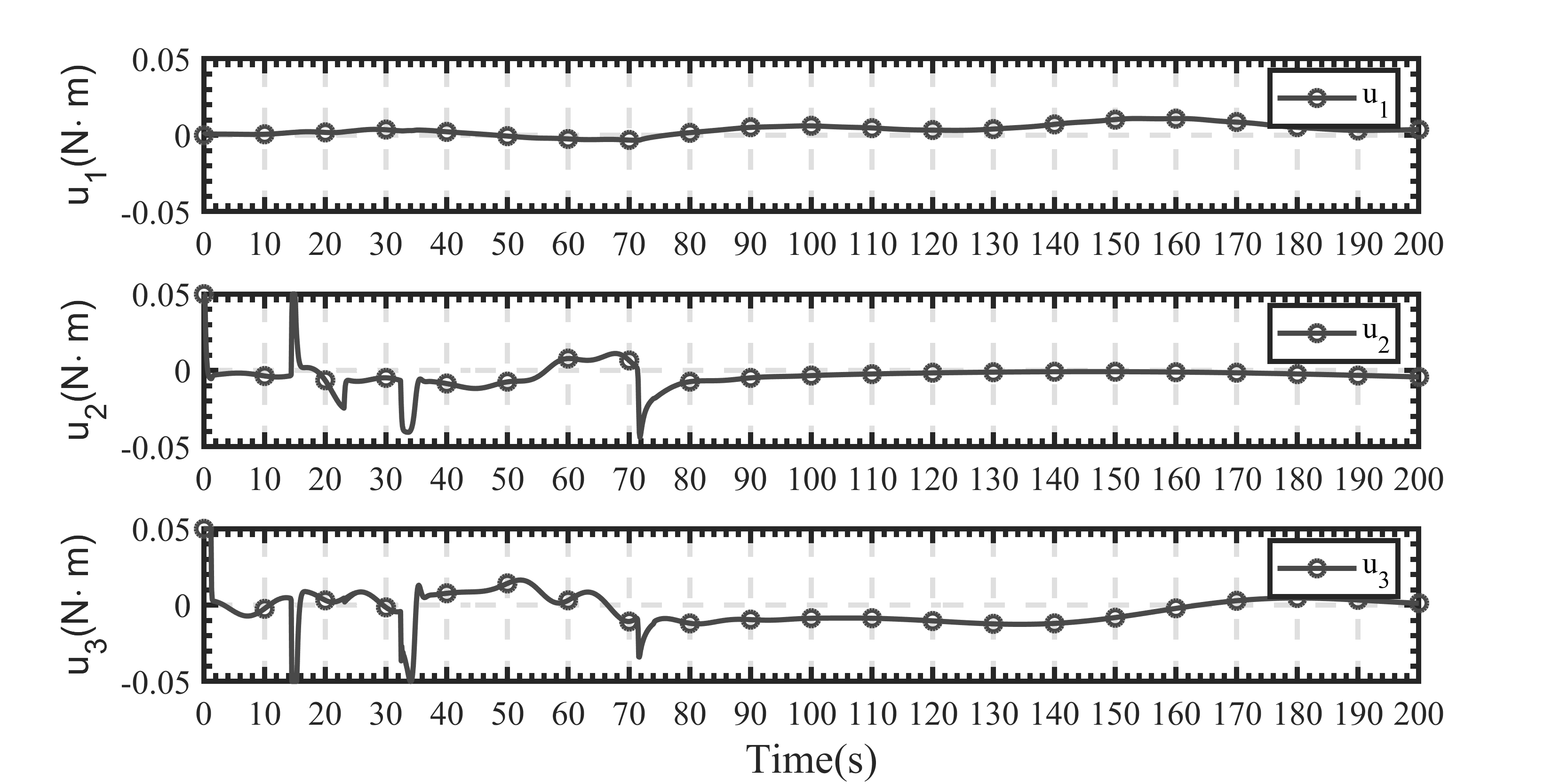}
	\caption{Time evolution of the control input $\boldsymbol{u}(t)$ (Two Forbidden Cones)}       
	\label{u1}  
\end{figure}

The maximum magnitude of the angular velocity is lower than $2.5^{\circ}/\text{s}$ for each component, showing that the attitude angular velocity limitation (\ref{angcons}) is strictly respected.
Meanwhile, from Figure \ref{S1}, it can be observed that all pointing-forbidden zones are not invaded during the attitude maneuvering, and the pointing-forbidden constraint (\ref{pointcons}) is satisfied for each forbidden region.
%

Notably, from the time-evolution of the PPC transformed variable $\varepsilon_{q}(t)$ and the switching indicator $\Omega_{Q}$ that given in Figure \ref{xe1}, it can be observed that when $\Omega_{Q}$ switches to $1$ at around $t=34s$, the value of $\varepsilon_{q}(t)$ remains unchanged. After $\Omega_{Q}$ switches to $\Omega_{Q} = 0$ again, the performance envelope (the black dotted line) converges exponentially again, and the value of $\varepsilon_{q}(t)$ quickly converges with the guidance of the PPC controller. This stands for the validation of the theoretical analysis of the freezing mechanism, as we discussed in Subsection \ref{APFPPC}-\ref{SPPFMECHANISM}.

\subsection{Normal Case Simulation: Three Pointing-forbidden Zones}\label{3obs}
Secondly, we consider the circumstance that there exist $3$ obstacle regions, expressed as:
\begin{equation}
	\begin{aligned}
		\boldsymbol{f}^{1}_{i} &= \left[0.6529,0.7255,0.2176\right]^{\text{T}}\\
		\boldsymbol{f}^{2}_{i} &= \left[-0.4402,0.8805,0.1761\right]^{\text{T}}\\
		\boldsymbol{f}^{3}_{i} &= \left[0.0741,0.7412,-0.6671\right]^{\text{T}}
	\end{aligned}
\end{equation}
\begin{figure}[hbt!]
	\centering 
	\includegraphics[width=0.4\textwidth]{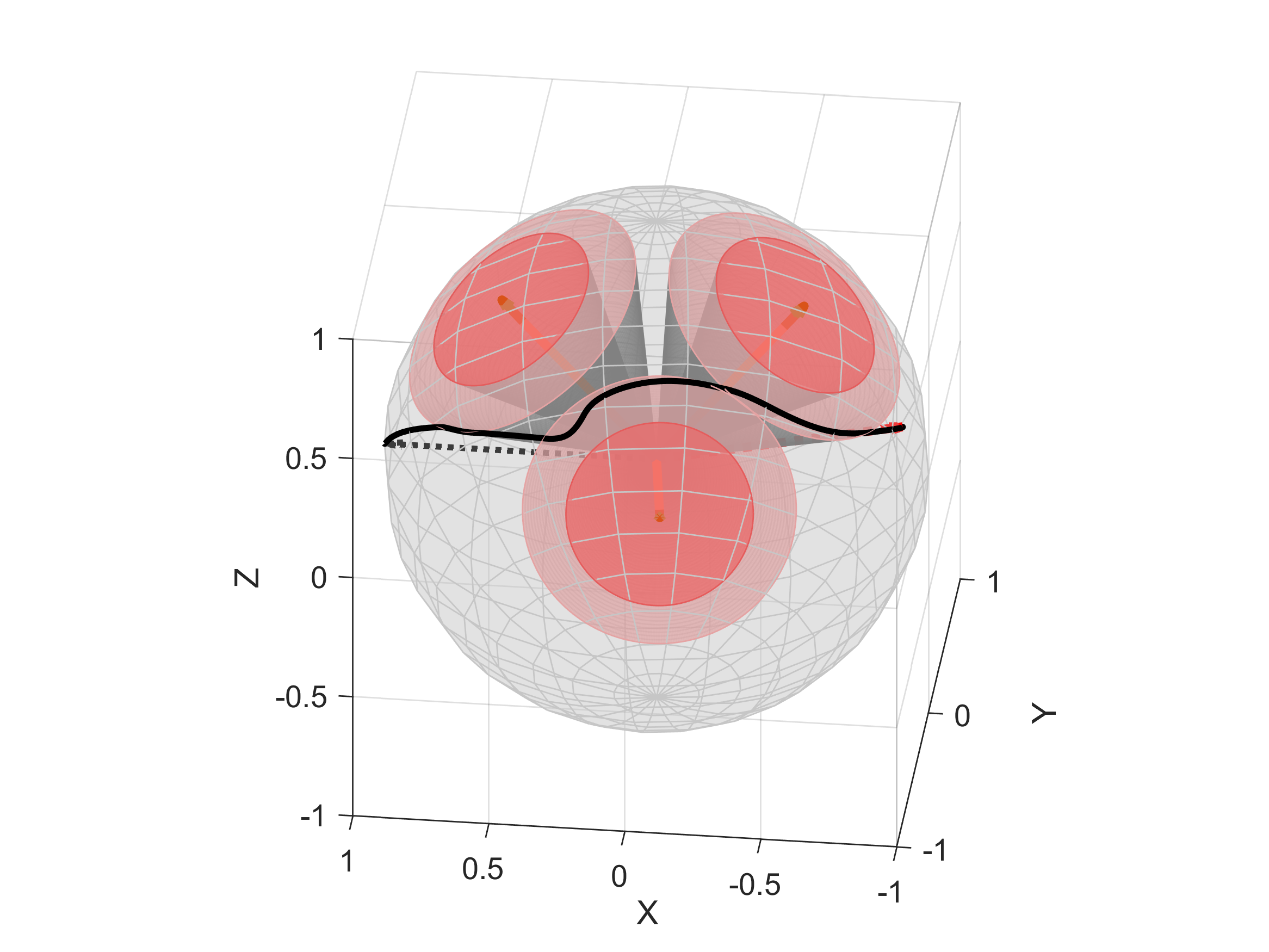}
	\caption{Pointing Trajectory of $\boldsymbol{B}_{i}$ expressed in the inertia frame $\mathfrak{R}_{i}$ (Three Forbidden Cones)}       
	\label{S2} 
\end{figure}

The permitted minimum angles are given as $\Theta^{1}_{f}=\Theta^{2}_{f}=\Theta^{3}_{f}=20^{\circ}$.
 Figure \ref{S2} shows the trajectory of the boresight vector $\boldsymbol{B}_{i}$ during the control process.
The time-evolution of the pointing error variable $x_{e}(t)$, the PPC transformed variable $\varepsilon_{q}(t)$ and the time-evolution of the switching indicator $\Omega_{Q}(t)$ are illustrated in Figure \ref{xe2}. Similarly, the black-dotted line in the first subfigure of Figure \ref{xe2} stands for the performance function $\rho_{q}(t)$. The time-evolution of the angular velocity $\boldsymbol{\omega}_{s}$ is presented in Figure \ref{W2}, while the control input $\boldsymbol{u}$ is illustrated in Figure \ref{u2}.

Also, From Figure \ref{W2}, it can be observed that the maximum magnitude of $\boldsymbol{\omega}_{s}$ is restricted to $\max(|\omega_{si}(t)|)<2.5^{\circ}/\text{s}$, while $\varepsilon_{q}(t)<1$ holds during the whole control process, showing that the PFE constraint (\ref{pfe}) is satisfied. The boresight control achieves an accuracy of $x_{e} < 3\times 10^{-10}$, i.e., $\Theta<0.0012^{\circ}$, indicating that the desired pointing performance criteria is achieved. Meanwhile, Figure \ref{S2} shows that $\boldsymbol{B}_{i}$ avoids each forbidden zone and keeps the satisfaction of pointing-forbidden constraint (\ref{pointcons}).
\begin{figure}[hbt!]
		\centering
	\includegraphics[width=0.5\textwidth]{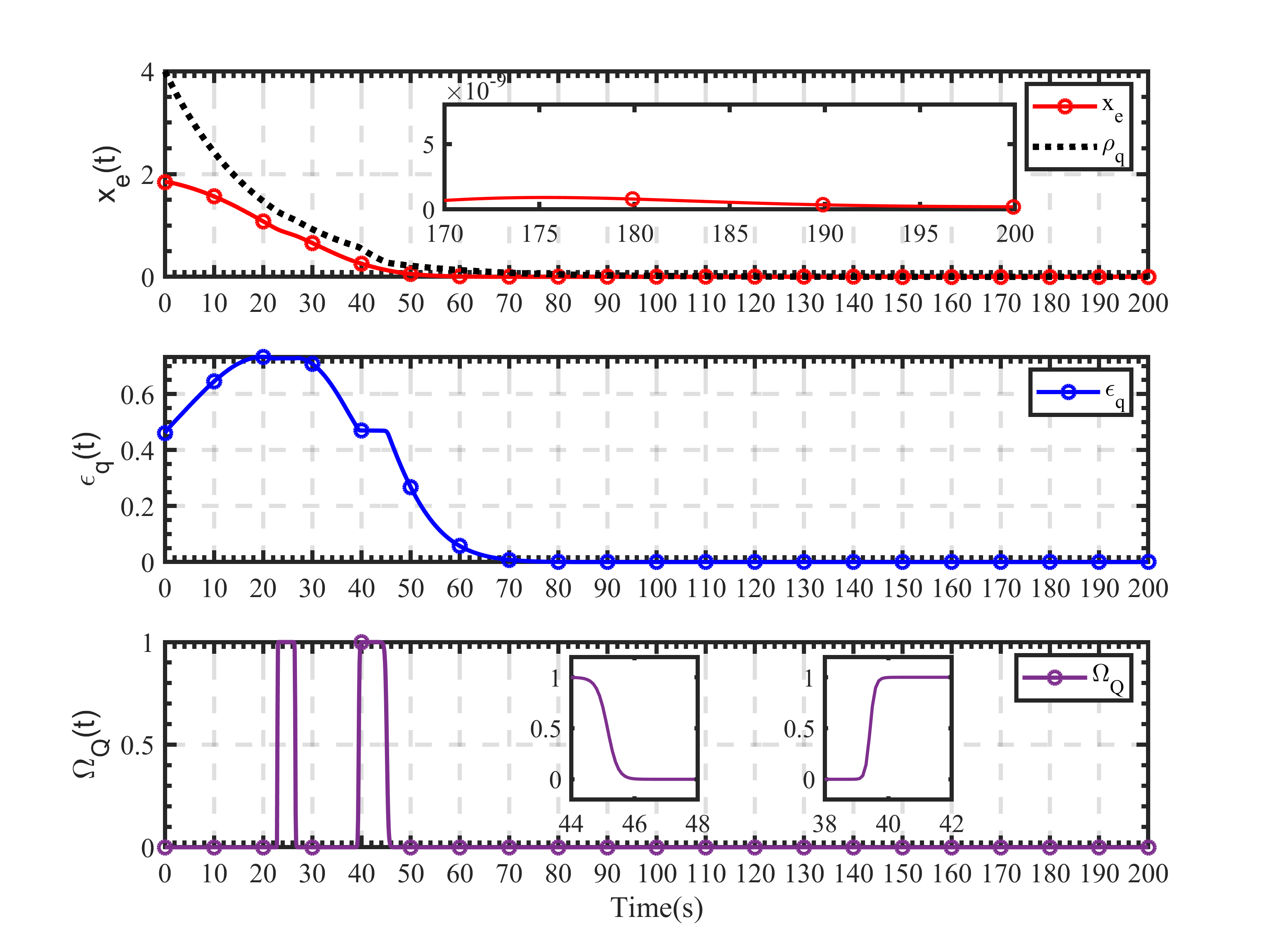}
	\caption{Time evolution of the pointing error variable $x_{e}(t)$, transformed error variable $\varepsilon_{q}(t)$ and the switching indicator $\Omega_{Q}(t)$ (Three Forbidden Cones)}       
	\label{xe2} 
					\centering 
	\includegraphics[width=0.5\textwidth]{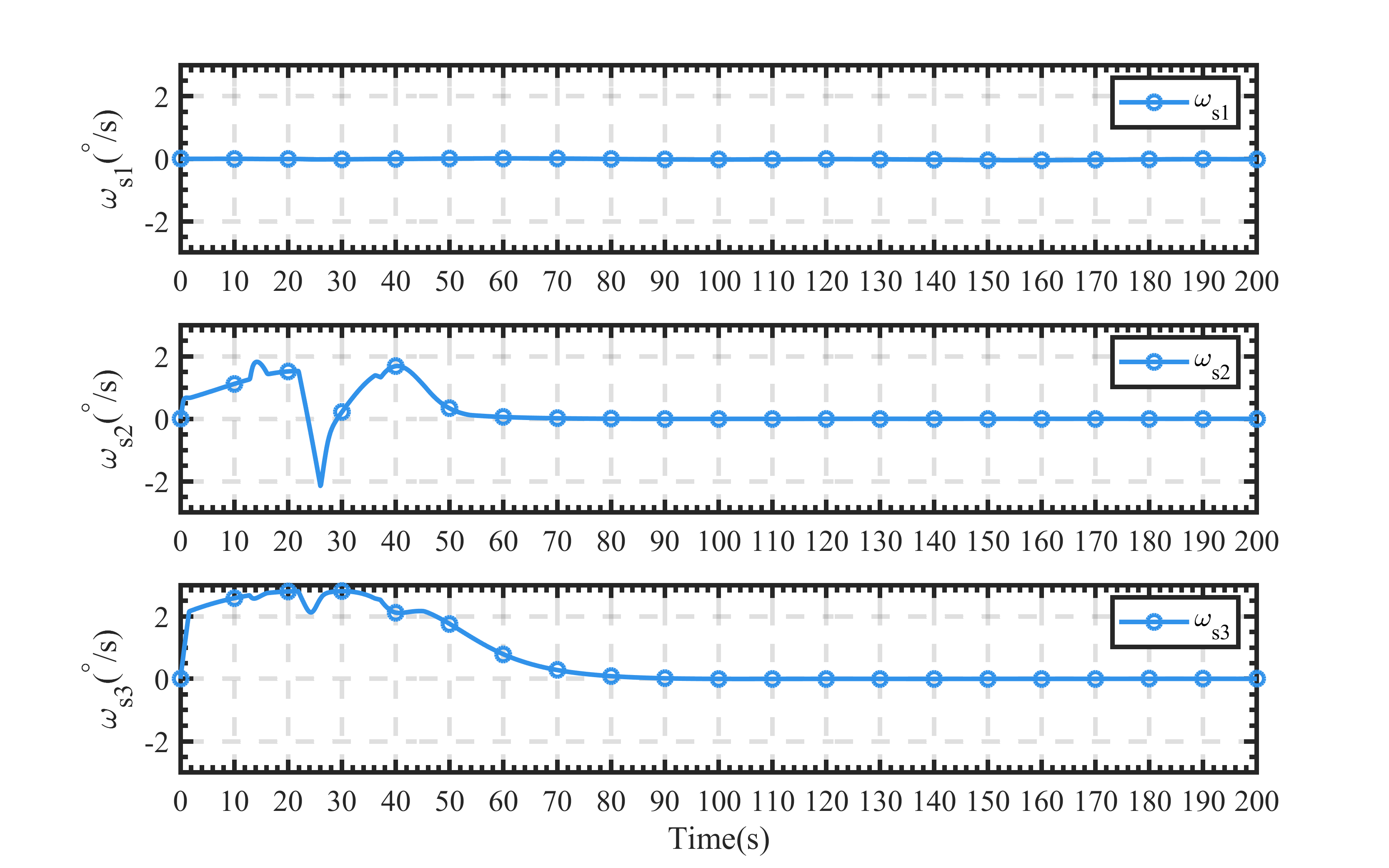}
	\caption{Time evolution of the attitude angular velocity $\boldsymbol{\omega}_{s}(t)$ (Three Forbidden Cones)}       
	\label{W2} 
			\centering 
	\includegraphics[width=0.5\textwidth]{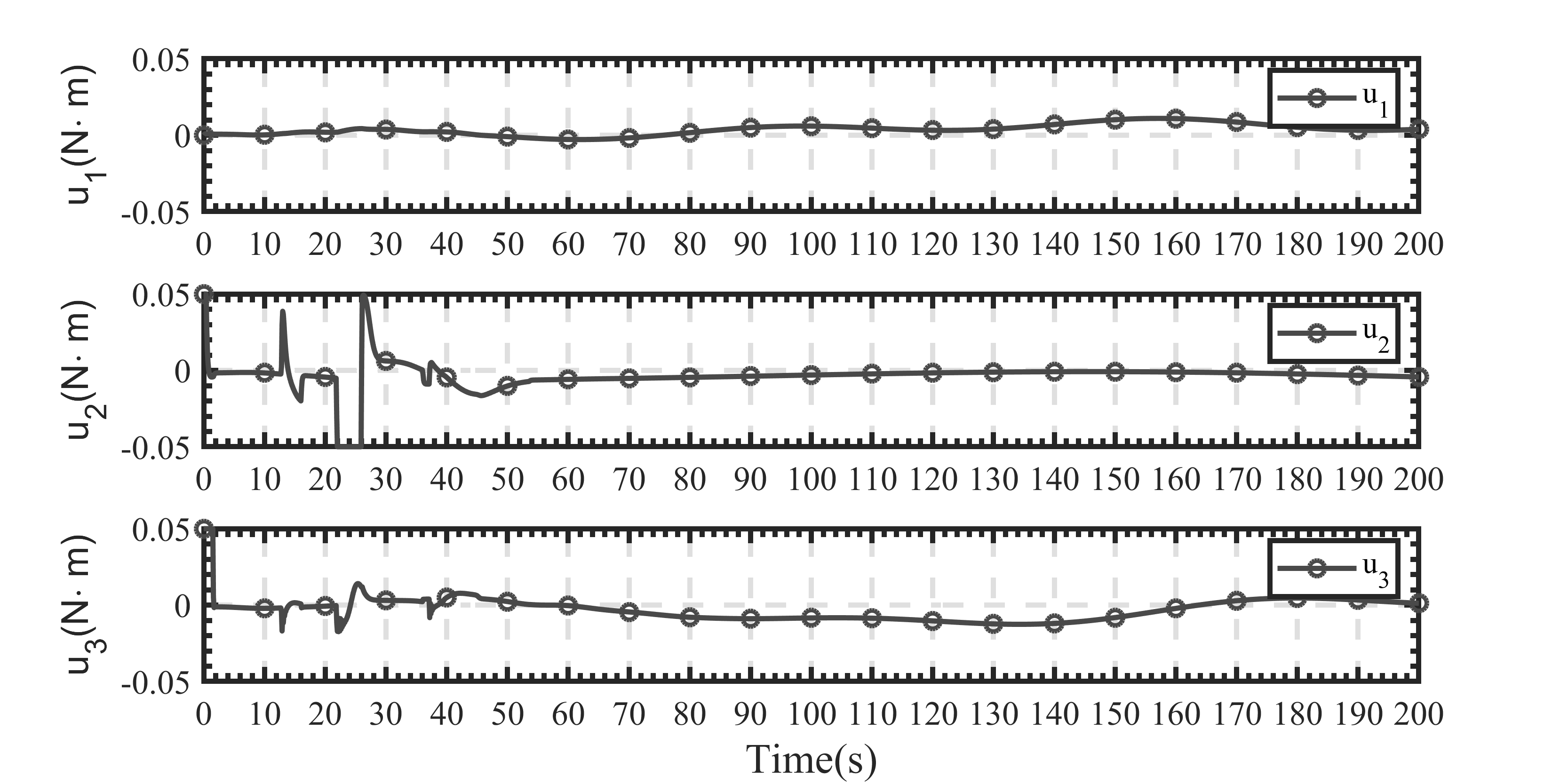}
	\caption{Time evolution of the control input $\boldsymbol{u}(t)$ (Three Forbidden Cones)}       
	\label{u2} 
\end{figure}
On the other hand, the freezing mechanism also works during the control process. Note that there exist two time intervals that the value of $\varepsilon_{q}$ remains unchanged anymore, i.e., $\dot{\varepsilon}_{q} = 0$ holds, as shown in the Figure \ref{xe2}. This indicates the working of the proposed "PPC Freezing" mechanism, corresponding to the circumstance that $\Omega_{Q}(t)=1$.

\subsection{Comparison Simulation 1: Effectiveness of the Proposed APF-PPC Composite Control}
In this subsection, we emphasize validating the proposed APF-PPC composite control scheme's efficacy in enhancing pointing accuracy, in order to validate our analysis in Section \ref{Discussion}, which supports the superiority of employing the PPC control into the APF control. A comparative analysis is presented with conventional APF-only controllers.

 We first consider a non-PPC version of the proposed controller by directly removing the PPC part of the proposed control law (\ref{controllaw}), and regards it as a benchmark controller, denoted as "\textbf{NOPPC1}".

 Subsequently, to validate that the improvement in pointing accuracy is specifically originated from the proposed SPPF-based PPC control scheme, rather than an outcome of the potential function design, we further introduce another benchmark APF-based controller that presented in \cite{dongare2021attitude}, and denote it as "\textbf{NOPPC2}". 
  Notably, since the original controller in \cite{dongare2021attitude} does not introduce an adaptive strategy, we combine it with a same I\&I adaptive structure as the proposed one to make the comparison convincing.

  \begin{figure}[hbt!]
	\centering 
	\includegraphics[width=0.4\textwidth]{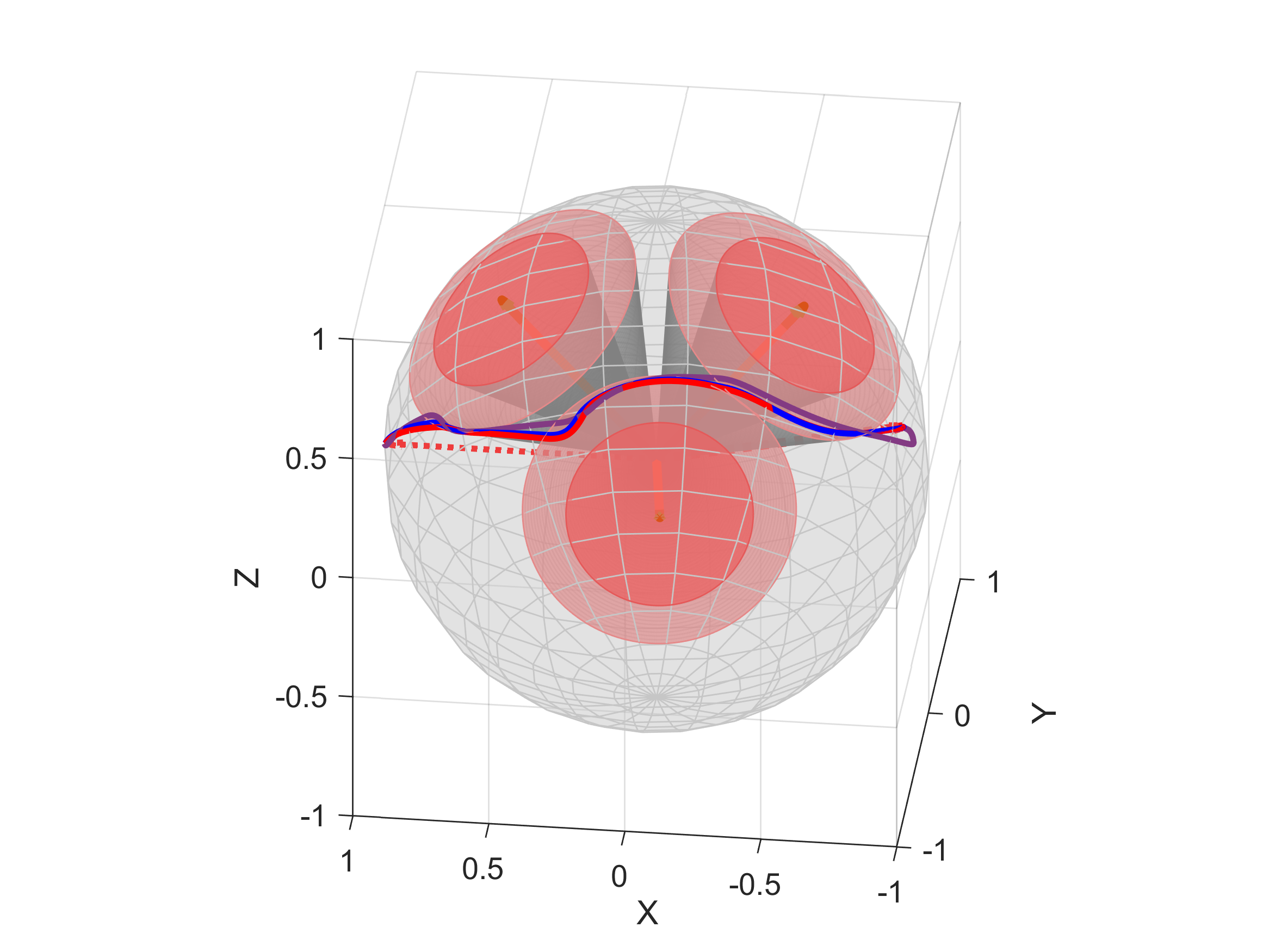}
	\caption{Pointing Trajectories of the boresight vector $\boldsymbol{B}_{i}$, corresponding to the proposed controller (red), NOPPC1 benchmark controller (blue) and the NOPPC2 benchmark controller (purple) (Comparison Simulation)}       
	\label{SCompare}
	\centering 
	\includegraphics[width=0.5\textwidth]{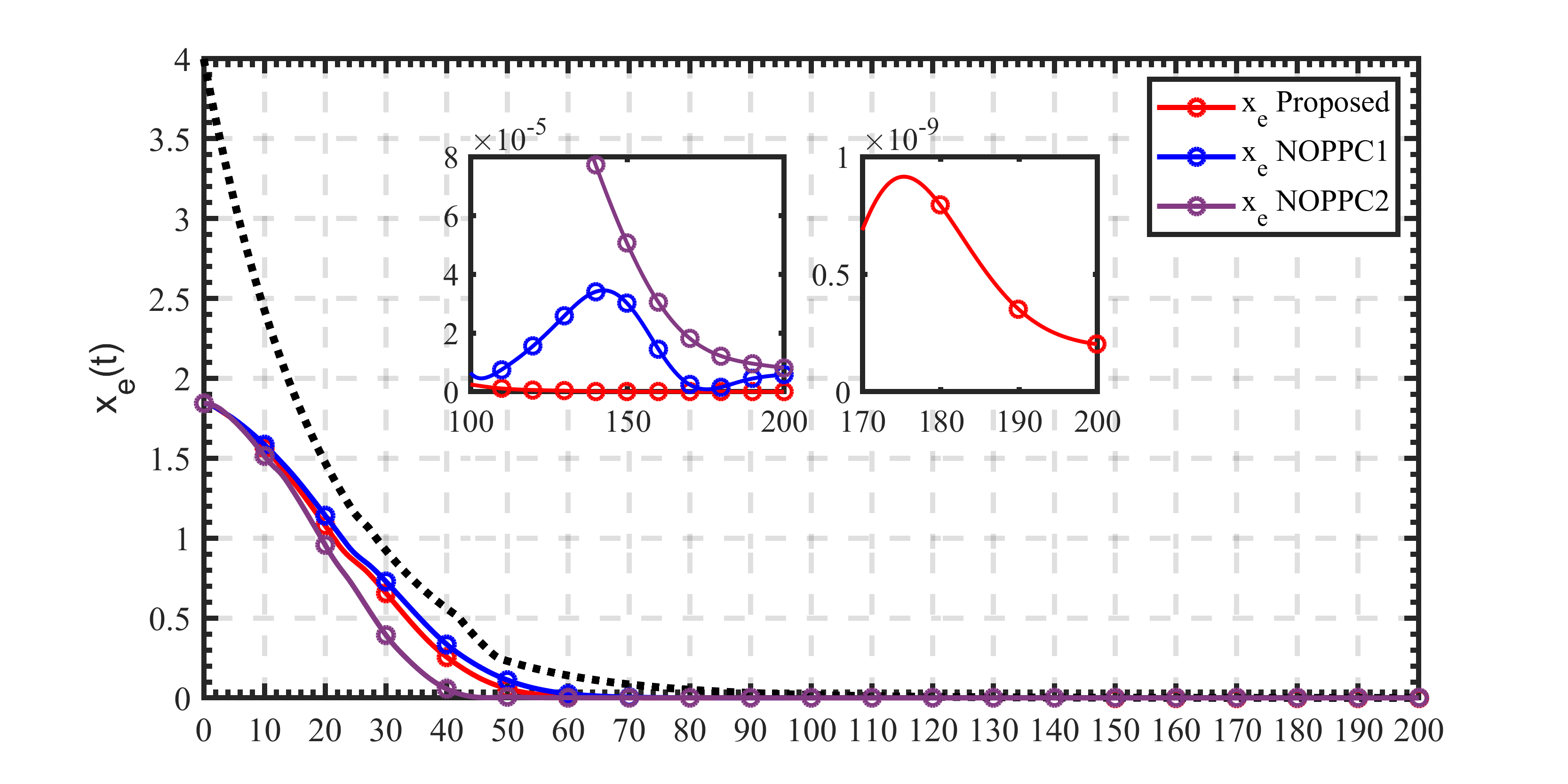}
	\caption{Time evolution of the pointing error variable $x_{e}(t)$, corresponding to the proposed controller (red), NOPPC1 benchmark controller (blue) and the NOPPC2 benchmark controller (purple) (Comparison Simulation)}       
	\label{xeCompare} 
\end{figure}

Figure \ref{SCompare} illustrates the pointing trajectory of the boresight vector $\boldsymbol{B}_{i}$ on the $\mathbb{S}_{2}$ unit sphere, corresponding to the proposed controller (red solid line), the benchmark controller NOPPC1 (blue solid line) and the benchmark controller NOPPC2 (purple solid line). Figure \ref{xeCompare} presents the comparison simulation results of the time-evolution of the pointing error variable $x_{e}(t)$, while the comparison simulation result of the angular velocity $\boldsymbol{\omega}_{s}$
is shown in Figure \ref{WCompare}.
%
\begin{figure}[hbt!]
	\centering 
	\includegraphics[width=0.5\textwidth]{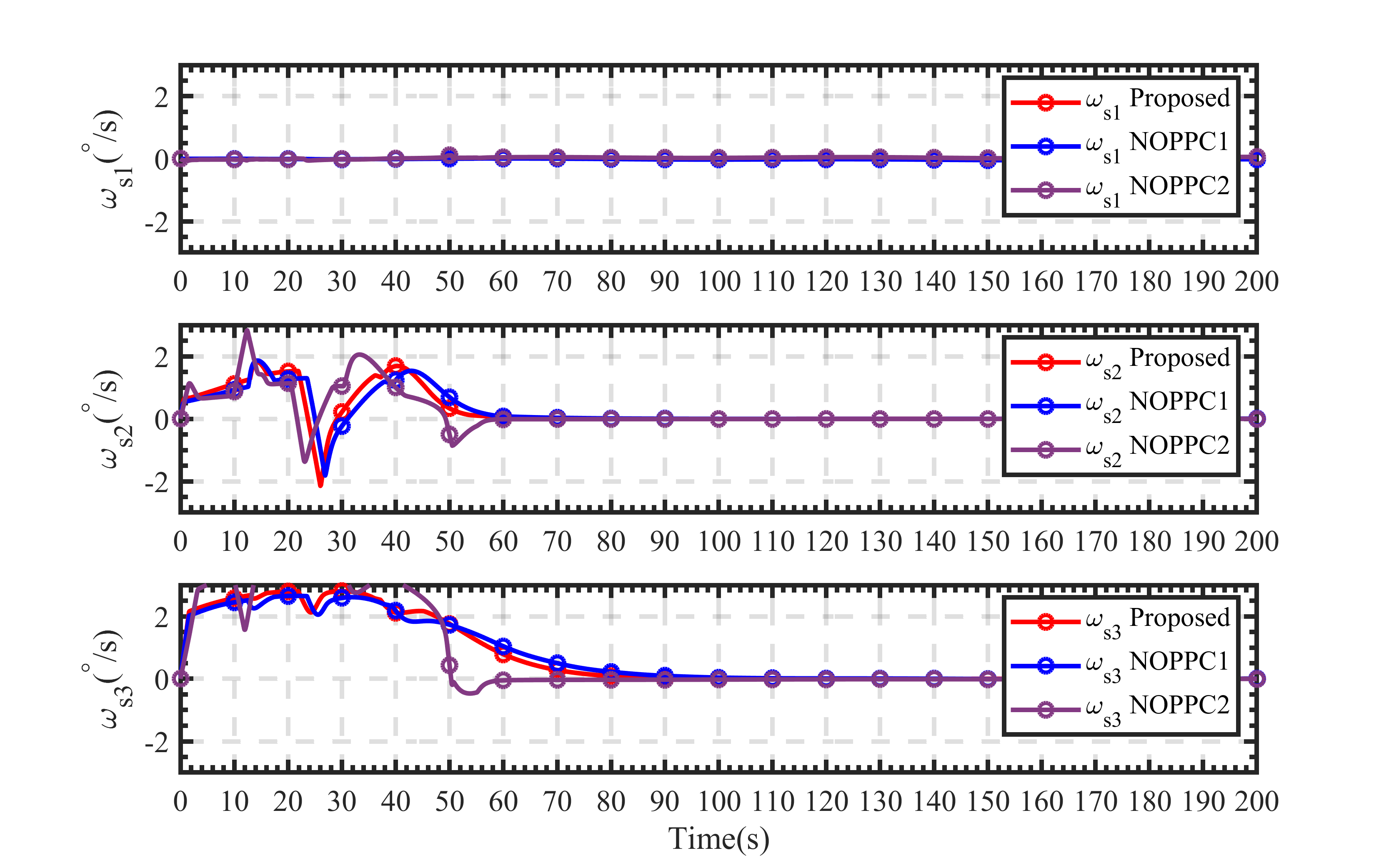}
	\caption{Time evolution of each component of the attitude angular velocity $\omega_{si}(t)(i=1,2,3)$, corresponding to the proposed controller (red), NOPPC1 benchmark controller (blue) and the NOPPC2 benchmark controller (purple) (Comparison Simulation)}       
	\label{WCompare}
\end{figure}

From Figure \ref{SCompare}, as a fundamental effect of the potential field, it can be observed that all controllers successfully circumvent the provided $3$ forbidden zones. Nevertheless, from Figure \ref{xeCompare}, it can be observed that only the proposed APF-PPC composite controller achieves an accuracy of $x_{e}\le 1\times 10^{-9}$, corresponding to $\Theta\le 0.0026^{\circ}$, which is the highest control accuracy among all these three presented controllers, and is the only one that satisfies the performance criteria that $\Theta < 0.05^{\circ}$.  

 On the other hand, note that there does not exist significant diversity between the time-evolution of $\boldsymbol{\omega}_{s}$ of these controllers, as shown in Figure \ref{WCompare}. This implies that the improvement of the control accuracy is not built based on an aggressive attitude maneuvering process.

Consequently, These results indicate that the achieved high-accuracy is not a result of the choosing of the potential function, but instead, a result of the integration of APF and PPC. To be specific, as $\Omega_{Q}$ switches to $0$, the performance envelope turns to converge exponentially at the terminal stage. As we discussed in Section \ref{Discussion}, this provides an equivalent additional high-gain term  $\frac{R_{\rho}}{\rho_{q}}$ on the attraction field, thereby providing an additional compensation that help the system against with the external disturbance.

\subsection{Monte Carlo Simulation}
In this subsection, we present a group of Monte Carlo simulation results to validate the effectiveness of the proposed controller under randomly provided simulation conditions.

For all simulation cases, the initial condition of the spacecraft is fixed, specified by the attitude quaternion and the attitude angular velocity, expressed as follows:
\begin{equation}\label{initial}
	\begin{aligned}
			\boldsymbol{q}_{s}(t_{0}) = \left[0,0.6428,0,0.7660\right]^{\text{T}},\quad
			\boldsymbol{\omega}_{s}(t_{0}) = \boldsymbol{0}_{3}
	\end{aligned}
\end{equation}

We consider the circumstance that there exists $5$ forbidden zones, specified as follows:
\begin{equation}
	\begin{aligned}
		\boldsymbol{f}^{1}_{i} &= \left[0.6529,0.7255,0.2176\right]^{\text{T}}\\
		\boldsymbol{f}^{2}_{i} &= \left[-0.4402,0.8805,0.1761\right]^{\text{T}}\\
		\boldsymbol{f}^{3}_{i} &= \left[0.0741,0.7412,-0.6671\right]^{\text{T}}\\
		\boldsymbol{f}^{4}_{i} &= \left[-0.6529,-0.7255,-0.2176\right]^{\text{T}}\\
		\boldsymbol{f}^{5}_{i} &= \left[0.4402,-0.8805,-0.1761\right]^{\text{T}}
	\end{aligned}
\end{equation}

For all simulation cases, the target pointing direction $\boldsymbol{r}_{i}$ are randomly placed on the $70^{\circ}\text{N}$-latitude circular of the unit sphere. According to the given initial condition shown in equation (\ref{initial}), the initial pointing direction of the boresight vector can be calculated as $\boldsymbol{B}_{i}(t_{0}) = \left[0.1736,0,-0.9848\right]^{\text{T}}$, located on the $80^{\circ}\text{S}$-latitude circular of the unit sphere. Therefore, it can be observed that the boresight vector need to rotate at least $150^{\circ}$ to accomplish the boresight alignment control task.
\begin{figure}[hbt!]
	\centering 
	\includegraphics[width=0.51\textwidth]{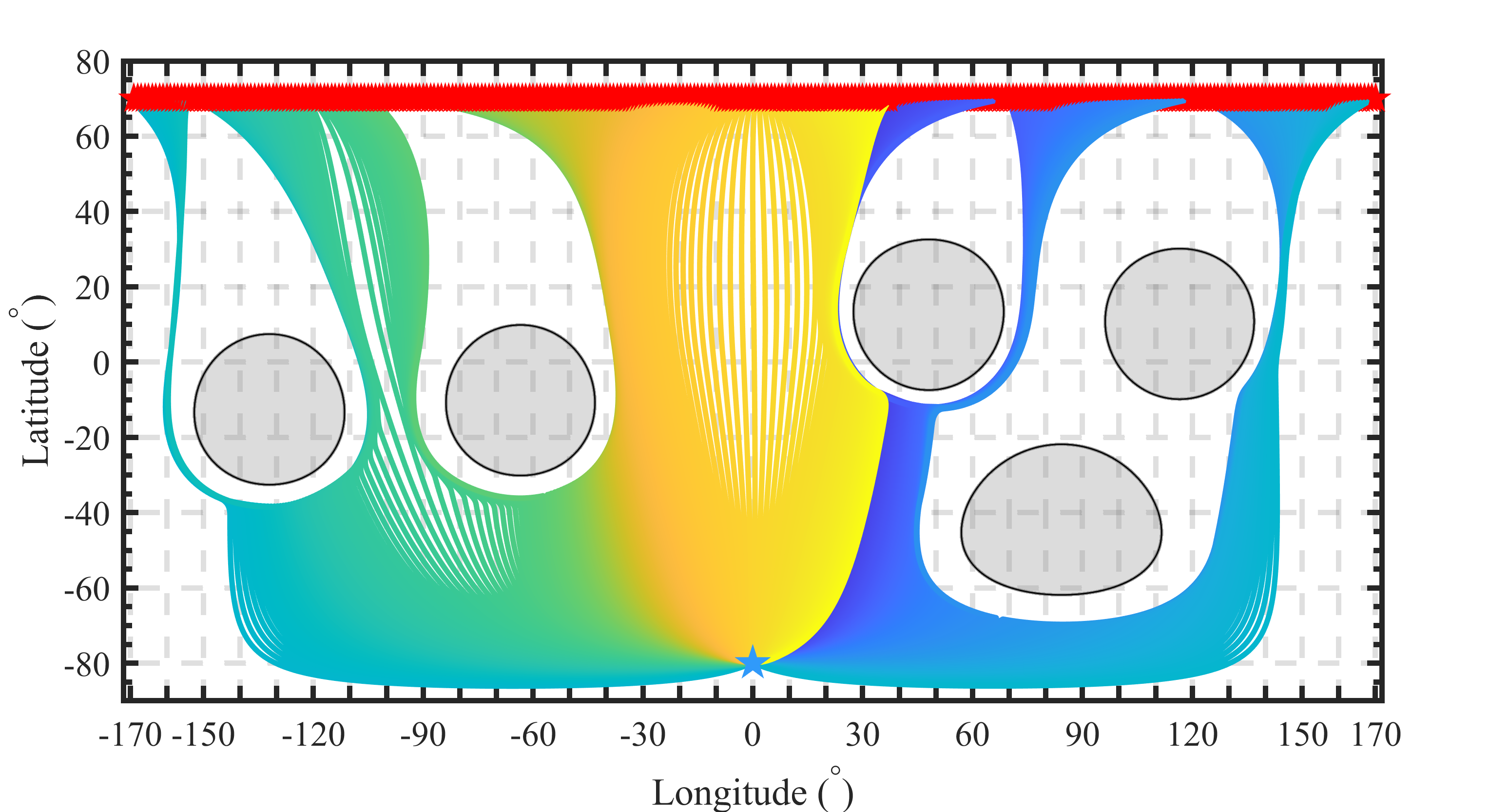}
	\caption{Mercator Projection of $\mathbb{S}_{2}$: Pointing Trajectory of the Boresight Vector $\boldsymbol{B}_{i}(t)$ of all simulation cases (Monte Carlo Simulation)}       
	\label{MonteCarlo1}  
\end{figure}

In this subsection, we use the Mercator projection \cite{pijls2001some} to map the unit sphere $\mathbb{S}_{2}$ to a two-dimensional plane, which allows us to show all simulation results in a single Figure. As depicted in Figure \ref{MonteCarlo1}, the blue star marker that placed at the bottom stands for the initial position of $\boldsymbol{B}_{i}$, while those red filled-pentagrams that placed on the top of Figure \ref{MonteCarlo1} represent randomly-chosen desired pointing directions $\boldsymbol{r}_{i}$. The gray-filled region represents the projected forbidden zone.
From Figure \ref{MonteCarlo1}, it can be observed that this control scenario requires the boresight vector $\boldsymbol{B}_{i}$ crossing the area that placed with multiple forbidden-zones.

Figure \ref{MonteCarlo1} illustrates the main result of the Monte Carlo Simulation. Subsequently, Figure \ref{MonteCarlo2} illustrates the time-evolution of the point error variable $x_{e}(t)$ and the terminal pointing accuracy of all simulation cases, while the time-evolution of the attitude angular velocity $\boldsymbol{\omega}_{s}$ of all simulation cases are illustrated in Figure \ref{MonteCarlo3}, with each component illustrated separately. Additionally, Figure \ref{MonteCarlo4} illustrates the time-evolution of the PPC transformed error variable $\varepsilon_{q}(t)$.
\begin{figure}[hbt!]
		\centering 
\includegraphics[width=0.51\textwidth]{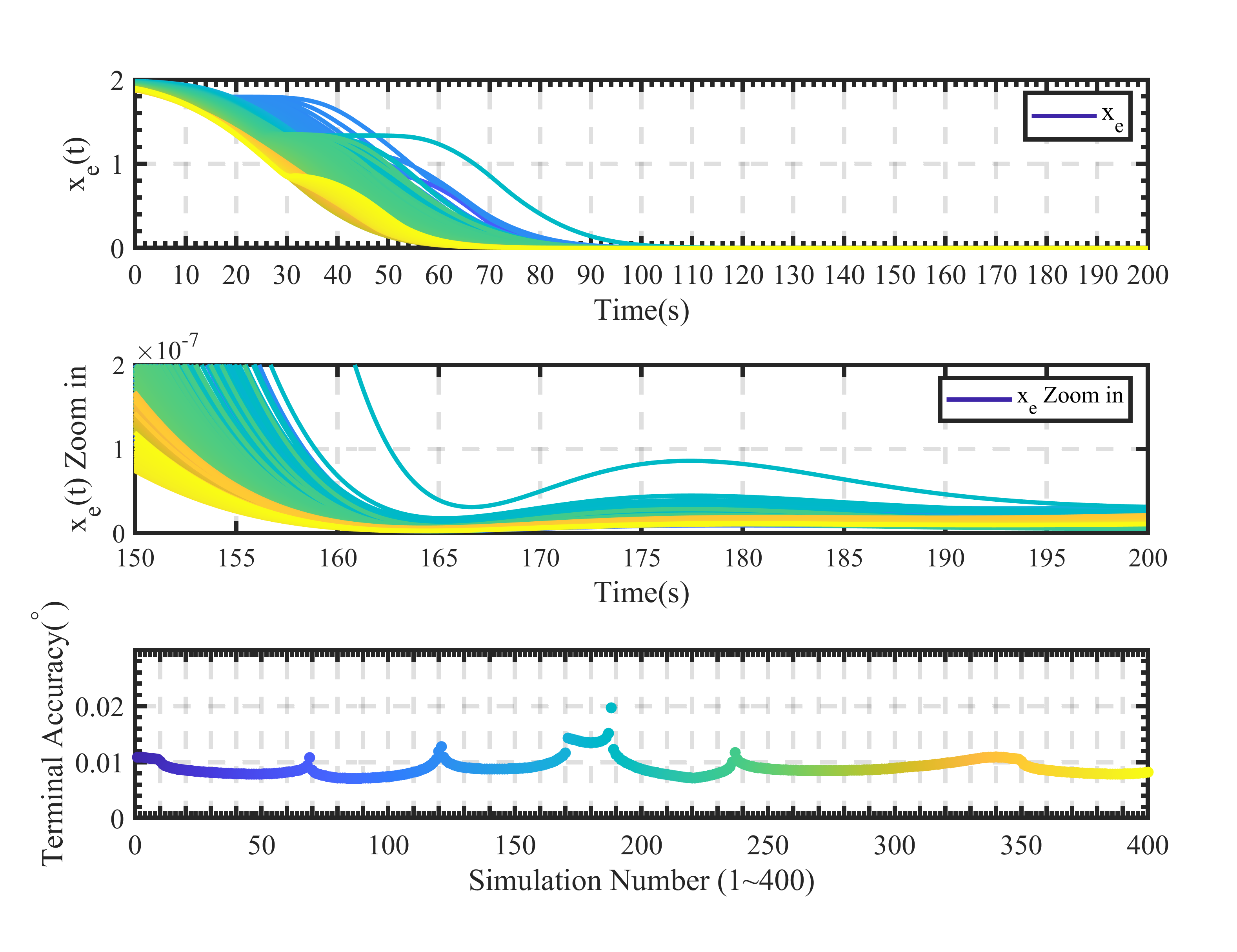}
\caption{Time-Evolution of the pointing error variable $x_{e}(t)$ and the terminal pointing accuracy of all cases (the terminal pointing accuracy is calculated as the mean value of $x_{e}(t)$ during $t\in\left[170s,200s\right]$) (Monte Carlo Simulation)}       
\label{MonteCarlo2} 
\centering 
\includegraphics[width=0.51\textwidth]{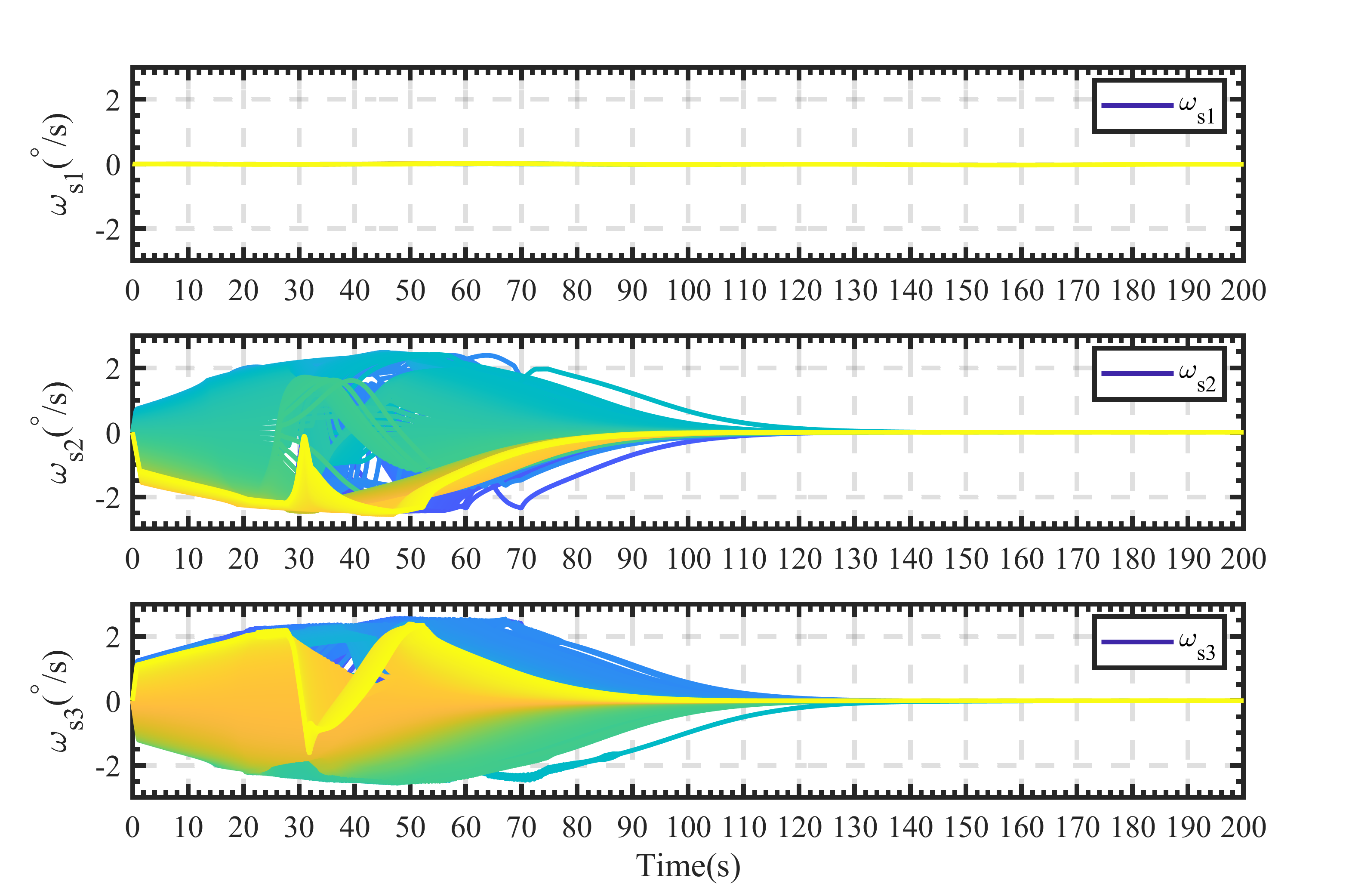}
\caption{Time-Evolution of each component of the attitude angular velocity $\omega_{si}(t)(i=1,2,3)$ of all cases (Monte Carlo Simulation)}       
\label{MonteCarlo3}
\end{figure}

From Figure \ref{MonteCarlo1}, it can be observed that all trajectories of $\boldsymbol{B}_{i}$ converge to a horizontal line, which stands for the projected $80^{\circ}\text{N}$-latitude circular. This implies that the proposed controller successfully guides the boresight vector achieving every desired position. Meanwhile, from Figure \ref{MonteCarlo2}, it can be discovered that almost all simulation cases achieve an accuracy of $0.01^{\circ}$, while the worst one is $0.02^{\circ}$. This indicates that all simulation cases satisfy the performance criteria. 
\begin{figure}[hbt!]
	\centering 
	\includegraphics[width=0.45\textwidth]{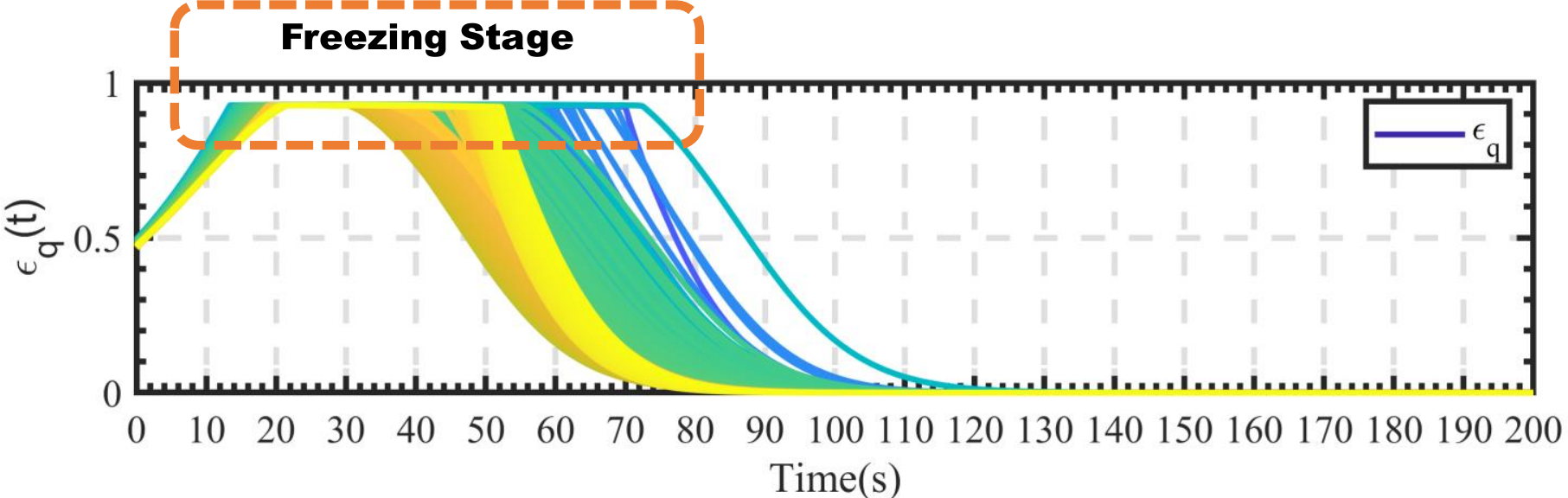}
	\caption{Time-Evolution of the PPC transformed error variable $\varepsilon_{q}(t)$ of all cases (Monte Carlo Simulation)}       
	\label{MonteCarlo4} 
\end{figure}

Next, from Figure \ref{MonteCarlo4}, note that $\varepsilon_{q}(t)<1$ holds for all simulation cases, it can be observed that a maximum value of $\varepsilon_{q}(t)$ is strictly established such that $\varepsilon_{q}(t) < 0.95$ holds for all simulation cases. This validates the effectiveness of the freezing mechanism, and further shows that the PFE constraint is strictly respected. Simultaneously, Figure \ref{MonteCarlo3} reveals that the limitation on the angular velocity $|\omega_{si}(t)|<3^{\circ}/\text{s}$ is strictly respected.

 Consequently, the proposed controller ensures the satisfaction of multiple pointing-forbidden constraints, the attitude angular velocity limitation and the PFE constraint for all simulation cases, with a guaranteed pointing accuracy. This further suggests the effectiveness of the proposed control scheme, and particularly, the effectiveness of the PPC freezing mechanism.

\section{CONCLUSION}
This paper addressed the boresight alignment control problem under parameter uncertainties while considering pointing-forbidden constraints, attitude angular velocity limitations, and pointing accuracy requirements. A composite controller that integrates the Artificial Potential Field (APF) methodology and the Prescribed Performance Control (PPC) scheme was developed to achieve the discussed control objective. Further, the proposed APF-PPC control scheme was incorporated into the Immersion-and-Invariance adaptive structure and efficiently addressed the parameter uncertainty issue.
The proposed controller framework presented a novel Switched Prescribed Performance Function (SPPF), which enables the temporary deactivation of the PPC system such that $\dot{\varepsilon}_{q} = 0$ holds. This ensures that the PPC system has no influence on the system dynamics when safety concerns are conflicted with the performance issue. Simulation results demonstrated that the proposed mechanism effectively eliminated the conflict between safety and the yearning for performance under such circumstance. Furthermore, the proposed scheme satisfied all given constraints while ensuring a guaranteed high control accuracy compared to conventional APF-only methods, even under significant external disturbances.

In essence, the concept of the proposed freezing mechanism suggests that performance concerns should be prioritized only when other higher-priority concerns are not presented, highlighting the need to prioritize safety concerns, which are paramount in mechanical systems such as spacecrafts, quadrotors and robotics. The proposed scheme presents a method of freezing the PPC scheme as needed under appropriate conditions, which may have broader applications in situations where performance requirements conflict with other constraints. Our further investigation may focus on such a direction.

\bibliographystyle{IEEEtran}
\bibliography{apfppcfreezing2}

\begin{thebibliography}{10}
\providecommand{\url}[1]{#1}
\csname url@samestyle\endcsname
\providecommand{\newblock}{\relax}
\providecommand{\bibinfo}[2]{#2}
\providecommand{\BIBentrySTDinterwordspacing}{\spaceskip=0pt\relax}
\providecommand{\BIBentryALTinterwordstretchfactor}{4}
\providecommand{\BIBentryALTinterwordspacing}{\spaceskip=\fontdimen2\font plus
\BIBentryALTinterwordstretchfactor\fontdimen3\font minus
  \fontdimen4\font\relax}
\providecommand{\BIBforeignlanguage}[2]{{%
\expandafter\ifx\csname l@#1\endcsname\relax
\typeout{** WARNING: IEEEtran.bst: No hyphenation pattern has been}%
\typeout{** loaded for the language `#1'. Using the pattern for}%
\typeout{** the default language instead.}%
\else
\language=\csname l@#1\endcsname
\fi
#2}}
\providecommand{\BIBdecl}{\relax}
\BIBdecl

\bibitem{wie2002rapid}
\BIBentryALTinterwordspacing
B.~Wie, D.~Bailey, and C.~Heiberg, ``Rapid multitarget acquisition and pointing
  control of agile spacecraft,'' \emph{Journal of Guidance, Control, and
  Dynamics}, vol.~25, no.~1, pp. 96--104, 2002. [Online]. Available:
  \url{https://doi.org/10.2514/2.4854}
\BIBentrySTDinterwordspacing

\bibitem{tanygin2017fast}
\BIBentryALTinterwordspacing
S.~Tanygin, ``Fast autonomous three-axis constrained attitude pathfinding and
  visualization for boresight alignment,'' \emph{Journal of Guidance, Control,
  and Dynamics}, vol.~40, no.~2, pp. 358--370, 2017. [Online]. Available:
  \url{https://doi.org/10.2514/1.G001801}
\BIBentrySTDinterwordspacing

\bibitem{hablani1999attitude}
\BIBentryALTinterwordspacing
H.~B. Hablani, ``Attitude commands avoiding bright objects and maintaining
  communication with ground station,'' \emph{Journal of Guidance, Control, and
  Dynamics}, vol.~22, no.~6, pp. 759--767, 1999. [Online]. Available:
  \url{https://doi.org/10.2514/2.4469}
\BIBentrySTDinterwordspacing

\bibitem{kjellberg2013discretized}
\BIBentryALTinterwordspacing
H.~C. Kjellberg and E.~G. Lightsey, ``Discretized constrained attitude
  pathfinding and control for satellites,'' \emph{Journal of Guidance, Control,
  and Dynamics}, vol.~36, no.~5, pp. 1301--1309, 2013. [Online]. Available:
  \url{https://doi.org/10.2514/1.60189}
\BIBentrySTDinterwordspacing

\bibitem{zhou2014high}
\BIBentryALTinterwordspacing
W.~Zhou, H.~Wang, Z.~Ruan, Z.~Wu, and E.~Wang, ``High accuracy attitude control
  system design for satellite with flexible appendages,'' \emph{Mathematical
  Problems in Engineering}, vol. 2014, 2014. [Online]. Available:
  \url{https://doi.org/10.1155/2014/695758}
\BIBentrySTDinterwordspacing

\bibitem{zlotnik2014rotation}
\BIBentryALTinterwordspacing
D.~E. Zlotnik and J.~R. Forbes, ``Rotation-matrix-based attitude control
  without angular velocity measurements,'' in \emph{2014 American control
  conference}, 2014, pp. 4931--4936. [Online]. Available:
  \url{https://doi.org/10.1109/ACC.2014.6858779}
\BIBentrySTDinterwordspacing

\bibitem{forbes2015direction}
\BIBentryALTinterwordspacing
J.~R. Forbes, ``Direction-cosine-matrix-based attitude control subject to
  actuator saturation,'' \emph{IET Control Theory \& Applications}, vol.~9,
  no.~11, pp. 1653--1661, 2015. [Online]. Available:
  \url{https://doi.org/10.1049/iet-cta.2014.0890}
\BIBentrySTDinterwordspacing

\bibitem{yang2012spacecraft}
\BIBentryALTinterwordspacing
Y.~Yang, ``Spacecraft attitude determination and control: Quaternion based
  method,'' \emph{Annual Reviews in Control}, vol.~36, no.~2, pp. 198--219,
  2012. [Online]. Available:
  \url{https://doi.org/10.1016/j.arcontrol.2012.09.003}
\BIBentrySTDinterwordspacing

\bibitem{bandyopadhyay2015attitude}
\BIBentryALTinterwordspacing
S.~Bandyopadhyay, S.-J. Chung, and F.~Hadaegh, ``Attitude control and
  stabilization of spacecraft with a captured asteroid,'' in \emph{AIAA
  Guidance, Navigation, and Control Conference}, 2015, p. 0596. [Online].
  Available: \url{https://doi.org/10.2514/6.2015-0596}
\BIBentrySTDinterwordspacing

\bibitem{chaturvedi2011rigid}
\BIBentryALTinterwordspacing
N.~A. Chaturvedi, A.~K. Sanyal, and N.~H. McClamroch, ``Rigid-body attitude
  control,'' \emph{IEEE control systems magazine}, vol.~31, no.~3, pp. 30--51,
  2011. [Online]. Available: \url{https://doi.org/10.1109/MCS.2011.940459}
\BIBentrySTDinterwordspacing

\bibitem{bullo1995control}
\BIBentryALTinterwordspacing
F.~Bullo, R.~M. Murray, and A.~Sarti, ``Control on the sphere and reduced
  attitude stabilization,'' \emph{IFAC Proceedings Volumes}, vol.~28, no.~14,
  pp. 495--501, 1995. [Online]. Available:
  \url{https://doi.org/10.1016/S1474-6670(17)46878-9}
\BIBentrySTDinterwordspacing

\bibitem{shao2022fault}
\BIBentryALTinterwordspacing
X.~Shao, Q.~Hu, Z.~H. Zhu, and Y.~Zhang, ``Fault-tolerant reduced-attitude
  control for spacecraft constrained boresight reorientation,'' \emph{Journal
  of Guidance, Control, and Dynamics}, vol.~45, no.~8, pp. 1481--1495, 2022.
  [Online]. Available: \url{https://doi.org/10.2514/1.G006651}
\BIBentrySTDinterwordspacing

\bibitem{chi_reduced}
\BIBentryALTinterwordspacing
B.~Chi, Q.~Hu, and L.~Guo, ``Reduced attitude control in the presence of
  pointing constraint,'' in \emph{2018 37th Chinese Control Conference (CCC)},
  2018, pp. 9781--9785. [Online]. Available:
  \url{https://doi.org/10.23919/ChiCC.2018.8482587}
\BIBentrySTDinterwordspacing

\bibitem{dongare2021attitude}
\BIBentryALTinterwordspacing
A.~Dongare, R.~Hamrah, and A.~K. Sanyal, ``Attitude pointing control using
  artificial potentials with control input constraints,'' in \emph{2021
  American Control Conference (ACC)}, 2021, pp. 1--6. [Online]. Available:
  \url{https://doi.org/10.23919/ACC50511.2021.9483350}
\BIBentrySTDinterwordspacing

\bibitem{feron2012randomized}
\BIBentryALTinterwordspacing
E.~Feron, M.~Dahleh, E.~Frazzoli, and R.~Kornfeld, ``A randomized attitude slew
  planning algorithm for autonomous spacecraft,'' in \emph{AIAA Guidance,
  Navigation, and Control Conference and Exhibit}, 2012, p. 4155. [Online].
  Available: \url{https://doi.org/10.2514/6.2001-4155}
\BIBentrySTDinterwordspacing

\bibitem{xu2017rapid}
\BIBentryALTinterwordspacing
R.~Xu, C.~Wu, S.~Zhu, B.~Fang, W.~Wang, L.~Xu, and W.~He, ``A rapid maneuver
  path planning method with complex sensor pointing constraints in the attitude
  space,'' \emph{Information Systems Frontiers}, vol.~19, pp. 945--953, 2017.
  [Online]. Available: \url{https://doi.org/10.1007/s10796-016-9642-1}
\BIBentrySTDinterwordspacing

\bibitem{sun2015spacecraft}
\BIBentryALTinterwordspacing
C.~Sun and R.~Dai, ``Spacecraft attitude control under constrained zones via
  quadratically constrained quadratic programming,'' in \emph{AIAA guidance,
  navigation, and control conference}, 2015, p. 2010. [Online]. Available:
  \url{https://doi.org/10.2514/6.2015-2010}
\BIBentrySTDinterwordspacing

\bibitem{hu2019reduced}
\BIBentryALTinterwordspacing
Q.~Hu, B.~Chi, and M.~R. Akella, ``Reduced attitude control for boresight
  alignment with dynamic pointing constraints,'' \emph{IEEE/ASME Transactions
  on Mechatronics}, vol.~24, no.~6, pp. 2942--2952, 2019. [Online]. Available:
  \url{https://doi.org/10.1109/TMECH.2019.2944431}
\BIBentrySTDinterwordspacing

\bibitem{coates2020reduced}
\BIBentryALTinterwordspacing
E.~M. Coates, D.~Reinhardt, and T.~I. Fossen, ``Reduced-attitude control of
  fixed-wing unmanned aerial vehicles using geometric methods on the
  two-sphere,'' \emph{IFAC-PapersOnLine}, vol.~53, no.~2, pp. 5749--5756, 2020.
  [Online]. Available: \url{https://doi.org/10.1016/j.ifacol.2020.12.1606}
\BIBentrySTDinterwordspacing

\bibitem{bechlioulis2011robust}
\BIBentryALTinterwordspacing
C.~P. Bechlioulis and G.~A. Rovithakis, ``Robust partial-state feedback
  prescribed performance control of cascade systems with unknown
  nonlinearities,'' \emph{IEEE Transactions on Automatic Control}, vol.~56,
  no.~9, pp. 2224--2230, 2011. [Online]. Available:
  \url{https://doi.org/10.1109/TAC.2011.2157399}
\BIBentrySTDinterwordspacing

\bibitem{liu2019appointed}
\BIBentryALTinterwordspacing
M.~Liu, X.~Shao, and G.~Ma, ``Appointed-time fault-tolerant attitude tracking
  control of spacecraft with double-level guaranteed performance bounds,''
  \emph{Aerospace Science and Technology}, vol.~92, pp. 337--346, 2019.
  [Online]. Available: \url{https://doi.org/10.1016/j.ast.2019.06.017}
\BIBentrySTDinterwordspacing

\bibitem{gao2021finite}
\BIBentryALTinterwordspacing
S.~Gao, X.~Liu, Y.~Jing, and G.~M. Dimirovski, ``Finite-time prescribed
  performance control for spacecraft attitude tracking,'' \emph{IEEE/ASME
  Transactions On Mechatronics}, vol.~27, no.~5, pp. 3087--3098, 2021.
  [Online]. Available: \url{https://doi.org/10.1109/TMECH.2021.3108558}
\BIBentrySTDinterwordspacing

\bibitem{bu2023prescribed}
\BIBentryALTinterwordspacing
X.~Bu, ``Prescribed performance control approaches, applications and
  challenges: A comprehensive survey,'' \emph{Asian Journal of Control},
  vol.~25, no.~1, pp. 241--261, 2023. [Online]. Available:
  \url{https://doi.org/10.1002/asjc.2765}
\BIBentrySTDinterwordspacing

\bibitem{wei2018learning}
\BIBentryALTinterwordspacing
C.~Wei, J.~Luo, H.~Dai, and G.~Duan, ``Learning-based adaptive attitude control
  of spacecraft formation with guaranteed prescribed performance,'' \emph{IEEE
  transactions on cybernetics}, vol.~49, no.~11, pp. 4004--4016, 2018.
  [Online]. Available: \url{https://doi.org/10.1109/TCYB.2018.2857400}
\BIBentrySTDinterwordspacing

\bibitem{lei2023singularity}
\BIBentryALTinterwordspacing
J.~Lei, T.~Meng, W.~Wang, H.~Li, and Z.~Jin, ``Singularity-avoidance prescribed
  performance control for spacecraft attitude tracking,'' \emph{IEEE
  Transactions on Aerospace and Electronic Systems}, 2023. [Online]. Available:
  \url{https://doi.org/10.1109/TAES.2023.3258928}
\BIBentrySTDinterwordspacing

\bibitem{shao2021data}
\BIBentryALTinterwordspacing
X.~Shao, Q.~Hu, Y.~Shi, and B.~Yi, ``Data-driven immersion and invariance
  adaptive attitude control for rigid bodies with double-level state
  constraints,'' \emph{IEEE Transactions on Control Systems Technology},
  vol.~30, no.~2, pp. 779--794, 2021. [Online]. Available:
  \url{https://doi.org/10.1109/TCST.2021.3076439}
\BIBentrySTDinterwordspacing

\bibitem{gao2018immersion}
\BIBentryALTinterwordspacing
C.~Gao, J.~Li, Y.~Fan, and W.~Jing, ``Immersion and invariance-based control of
  novel moving-mass flight vehicles,'' \emph{Aerospace Science and Technology},
  vol.~74, pp. 63--71, 2018. [Online]. Available:
  \url{https://doi.org/10.1016/j.ast.2017.12.017}
\BIBentrySTDinterwordspacing

\bibitem{seo2009non}
\BIBentryALTinterwordspacing
D.~Seo and M.~R. Akella, ``Non-certainty equivalent adaptive control for robot
  manipulator systems,'' \emph{Systems \& Control Letters}, vol.~58, no.~4, pp.
  304--308, 2009. [Online]. Available:
  \url{https://doi.org/10.1016/j.sysconle.2008.11.008}
\BIBentrySTDinterwordspacing

\bibitem{astolfi2003immersion}
\BIBentryALTinterwordspacing
A.~Astolfi and R.~Ortega, ``Immersion and invariance: A new tool for
  stabilization and adaptive control of nonlinear systems,'' \emph{IEEE
  Transactions on Automatic control}, vol.~48, no.~4, pp. 590--606, 2003.
  [Online]. Available: \url{https://doi.org/10.1109/TAC.2003.809820}
\BIBentrySTDinterwordspacing

\bibitem{wang2019adaptive}
\BIBentryALTinterwordspacing
L.~Wang and C.~M. Kellett, ``Adaptive tracking control via immersion and
  invariance: An (i) iss perspective,'' in \emph{2019 IEEE 58th Conference on
  Decision and Control (CDC)}, 2019, pp. 7019--7024. [Online]. Available:
  \url{https://doi.org/10.1109/CDC40024.2019.9029691}
\BIBentrySTDinterwordspacing

\bibitem{wang2016immersion}
\BIBentryALTinterwordspacing
L.~Wang, F.~Forni, R.~Ortega, Z.~Liu, and H.~Su, ``Immersion and invariance
  stabilization of nonlinear systems via virtual and horizontal contraction,''
  \emph{IEEE Transactions on Automatic Control}, vol.~62, no.~8, pp.
  4017--4022, 2016. [Online]. Available:
  \url{https://doi.org/10.1109/TAC.2016.2614888}
\BIBentrySTDinterwordspacing

\bibitem{xia2022anti}
\BIBentryALTinterwordspacing
D.~Xia and X.~Yue, ``Anti-unwinding immersion and invariance adaptive attitude
  control of rigid spacecraft with inertia uncertainties,'' \emph{Journal of
  Aerospace Engineering}, vol.~35, no.~2, p. 04021137, 2022. [Online].
  Available: \url{https://doi.org/10.1061/(ASCE)AS.1943-5525.0001387}
\BIBentrySTDinterwordspacing

\bibitem{shao2021immersion}
\BIBentryALTinterwordspacing
X.~Shao and Q.~Hu, ``Immersion and invariance adaptive pose control for
  spacecraft proximity operations under kinematic and dynamic constraints,''
  \emph{IEEE Transactions on Aerospace and Electronic Systems}, vol.~57, no.~4,
  pp. 2183--2200, 2021. [Online]. Available:
  \url{https://doi.org/10.1109/TAES.2021.3053134}
\BIBentrySTDinterwordspacing

\bibitem{shen2018rigid}
Q.~Shen, C.~Yue, C.~H. Goh, B.~Wu, and D.~Wang, ``Rigid-body attitude
  stabilization with attitude and angular rate constraints,''
  \emph{Automatica}, vol.~90, pp. 157--163, 2018.

\bibitem{hu2018adaptive}
\BIBentryALTinterwordspacing
Q.~Hu, Y.~Shi, and X.~Shao, ``Adaptive fault-tolerant attitude control for
  satellite reorientation under input saturation,'' \emph{Aerospace Science and
  Technology}, vol.~78, pp. 171--182, 2018. [Online]. Available:
  \url{https://doi.org/10.1016/j.ast.2018.04.015}
\BIBentrySTDinterwordspacing

\bibitem{wei2021overview}
\BIBentryALTinterwordspacing
C.~Wei, Q.~Chen, J.~Liu, Z.~Yin, and J.~Luo, ``An overview of prescribed
  performance control and its application to spacecraft attitude system,''
  \emph{Proceedings of the Institution of Mechanical Engineers, Part I: Journal
  of Systems and Control Engineering}, vol. 235, no.~4, pp. 435--447, 2021.
  [Online]. Available: \url{https://doi.org/10.1177/0959651820952552}
\BIBentrySTDinterwordspacing

\bibitem{HUZHANWANG}
\BIBentryALTinterwordspacing
Q.~Hu, X.~Shao, H.~Yang, and C.~Duan, ``Spacecraft attitude planning and
  control under multiple constraints: Review and prospects,'' \emph{Acta
  Aeronautica et Astronautica Sinica}, vol.~43, no. 403-431, 2022. [Online].
  Available: \url{https://doi.org/10.7527/S1000-6893.2022.27351}
\BIBentrySTDinterwordspacing

\bibitem{yong2020flexible}
\BIBentryALTinterwordspacing
K.~Yong, M.~Chen, Y.~Shi, and Q.~Wu, ``Flexible performance-based robust
  control for a class of nonlinear systems with input saturation,''
  \emph{Automatica}, vol. 122, p. 109268, 2020. [Online]. Available:
  \url{https://doi.org/10.1016/j.automatica.2020.109268}
\BIBentrySTDinterwordspacing

\bibitem{ames2019control}
\BIBentryALTinterwordspacing
A.~D. Ames, S.~Coogan, M.~Egerstedt, G.~Notomista, K.~Sreenath, and P.~Tabuada,
  ``Control barrier functions: Theory and applications,'' in \emph{2019 18th
  European control conference (ECC)}, 2019, pp. 3420--3431. [Online].
  Available: \url{https://doi.org/10.23919/ECC.2019.8796030}
\BIBentrySTDinterwordspacing

\bibitem{tee2009barrier}
\BIBentryALTinterwordspacing
K.~P. Tee, S.~S. Ge, and E.~H. Tay, ``Barrier lyapunov functions for the
  control of output-constrained nonlinear systems,'' \emph{Automatica},
  vol.~45, no.~4, pp. 918--927, 2009. [Online]. Available:
  \url{https://doi.org/10.1016/j.automatica.2008.11.017}
\BIBentrySTDinterwordspacing

\bibitem{doria2013algorithm}
\BIBentryALTinterwordspacing
N.~S.~F. Doria, E.~O. Freire, and J.~C. Basilio, ``An algorithm inspired by the
  deterministic annealing approach to avoid local minima in artificial
  potential fields,'' in \emph{2013 16th International Conference on Advanced
  Robotics (ICAR)}, 2013, pp. 1--6. [Online]. Available:
  \url{https://doi.org/10.1109/ICAR.2013.6766480}
\BIBentrySTDinterwordspacing

\bibitem{zhang2021dive}
\BIBentryALTinterwordspacing
A.~Zhang, Z.~C. Lipton, M.~Li, and A.~J. Smola, ``Dive into deep learning,''
  \emph{arXiv preprint arXiv:2106.11342}. [Online]. Available:
  \url{https://doi.org/10.48550/arXiv.2106.11342}
\BIBentrySTDinterwordspacing

\bibitem{swaroop2000dynamic}
\BIBentryALTinterwordspacing
D.~Swaroop, J.~K. Hedrick, P.~P. Yip, and J.~C. Gerdes, ``Dynamic surface
  control for a class of nonlinear systems,'' \emph{IEEE transactions on
  automatic control}, vol.~45, no.~10, pp. 1893--1899, 2000. [Online].
  Available: \url{https://doi.org/10.1109/TAC.2000.880994}
\BIBentrySTDinterwordspacing

\bibitem{trif2005note}
T.~Trif, ``Note on certain inequalities for means in two variables,'' \emph{J.
  Inequal. Pure and Appl. Math}, vol.~6, no.~2, 2005.

\bibitem{pijls2001some}
\BIBentryALTinterwordspacing
W.~Pijls, ``Some properties related to mercator projection,'' \emph{The
  American Mathematical Monthly}, vol. 108, no.~6, pp. 537--543, 2001.
  [Online]. Available: \url{https://doi.org/10.1080/00029890.2001.11919781}
\BIBentrySTDinterwordspacing

\end{thebibliography}

\begin{IEEEbiography}
	[{\includegraphics[width=1in,height=1.25in,clip,keepaspectratio]{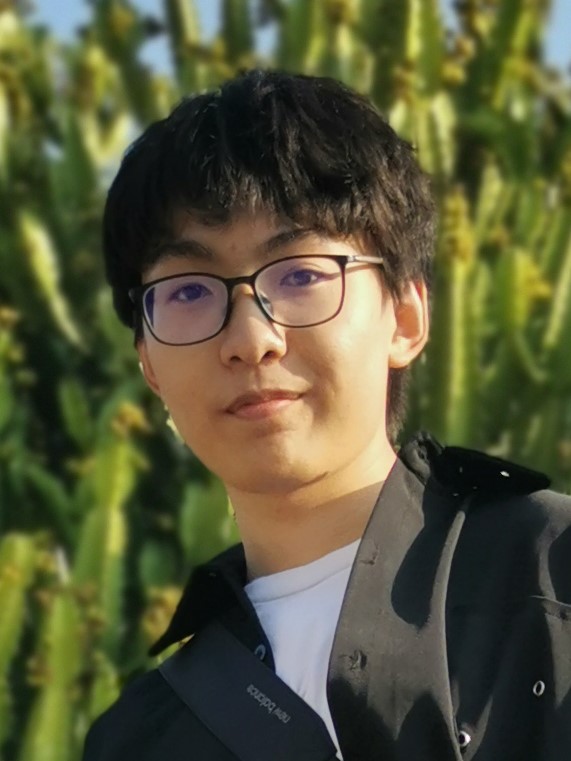}}]
	{Jiakun Lei}{\space} 
	received the B.S. degree in Automatic Control, from the University of Electronic Science and Technology of China(UESTC), Chengdu, China, in 2019. He is working toward a Ph.D. in aeronautical and astronautical science and technology at Zhejiang University, Hangzhou, China.
His research interests include constrained attitude control, nonlinear hybrid control methodology, and attitude control of spacecraft with complex structures.
\end{IEEEbiography}

\begin{IEEEbiography}
	[{\includegraphics[width=1in,height=1.25in,clip,keepaspectratio]{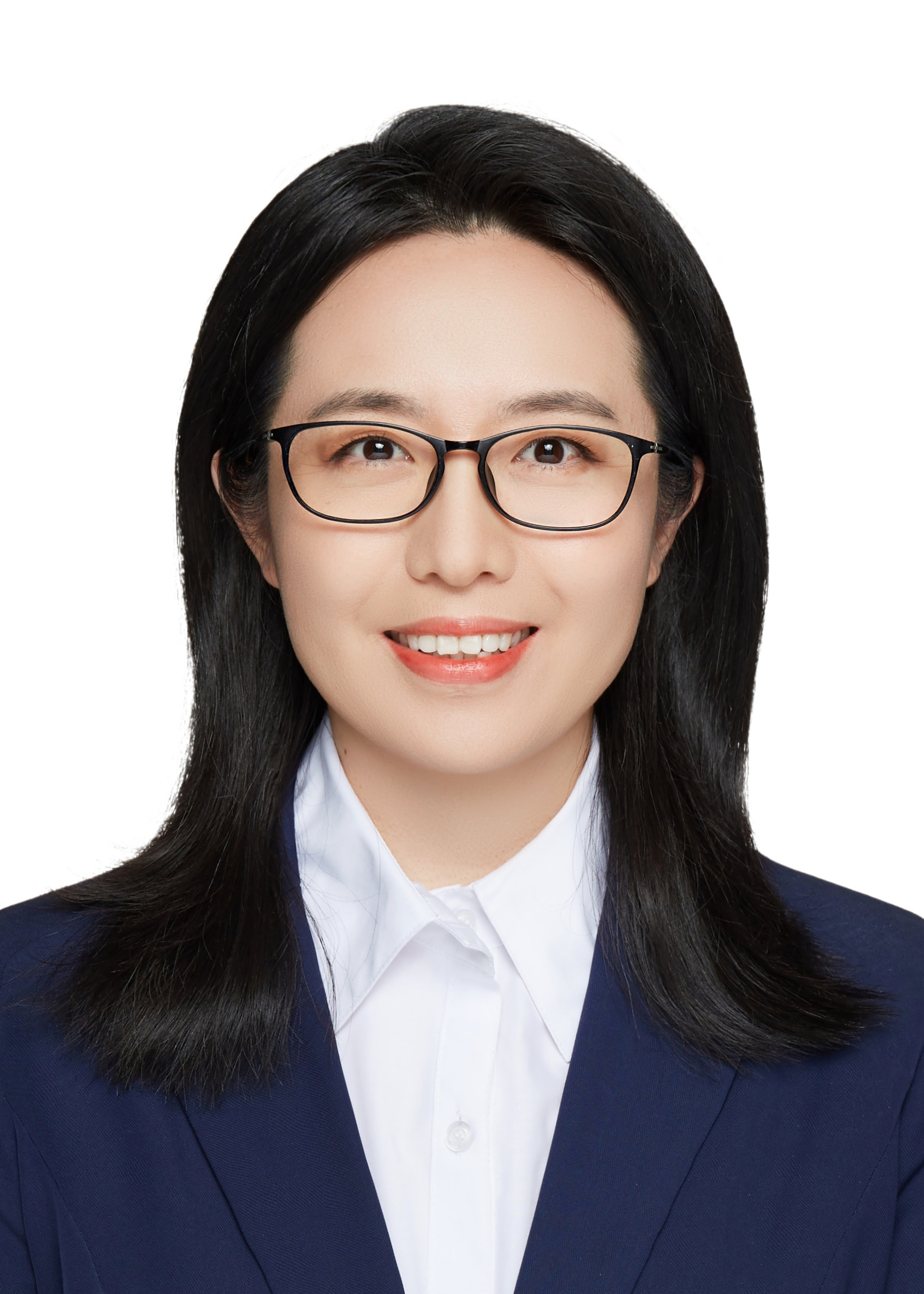}}]
	{Tao Meng}{\space}
	received the B.S. degree in Electronic science and technology, Zhejiang University, Hangzhou, China, in 2004, the M.S. degree in Electronic science and technology, Zhejiang University, Hangzhou, China, in 2006, and the Ph.D. degree in Electronic science and technology, Zhejiang University, Hangzhou, China, in 2009. She is currently a Professor at the School of Aeronautics and Astronautics. Her research interest includes attitude control, orbital control, and constellation formation control of micro-satellite.
\end{IEEEbiography}

\begin{IEEEbiography}
	[{\includegraphics[width=1in,height=1.25in,clip,keepaspectratio]{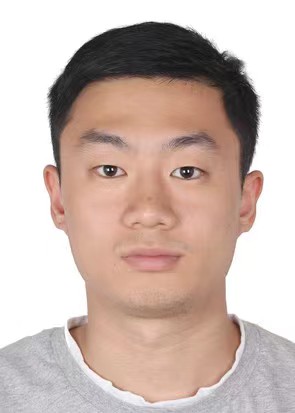}}]
	{Kun Wang}{\space}
received the B.S. degree from college of control science and engineering, Zhejiang University, Hangzhou, China, in 2019. He is currently working toward the Ph.D. degree in aeronautical and astronautical science and technology in Zhejiang University, Hangzhou, China. His research interests include spacecraft 6-DOF control, spacecraft safety critical control and spacecraft formation control. 
\end{IEEEbiography}

\begin{IEEEbiography}
	[{\includegraphics[width=1in,height=1.25in,clip,keepaspectratio]{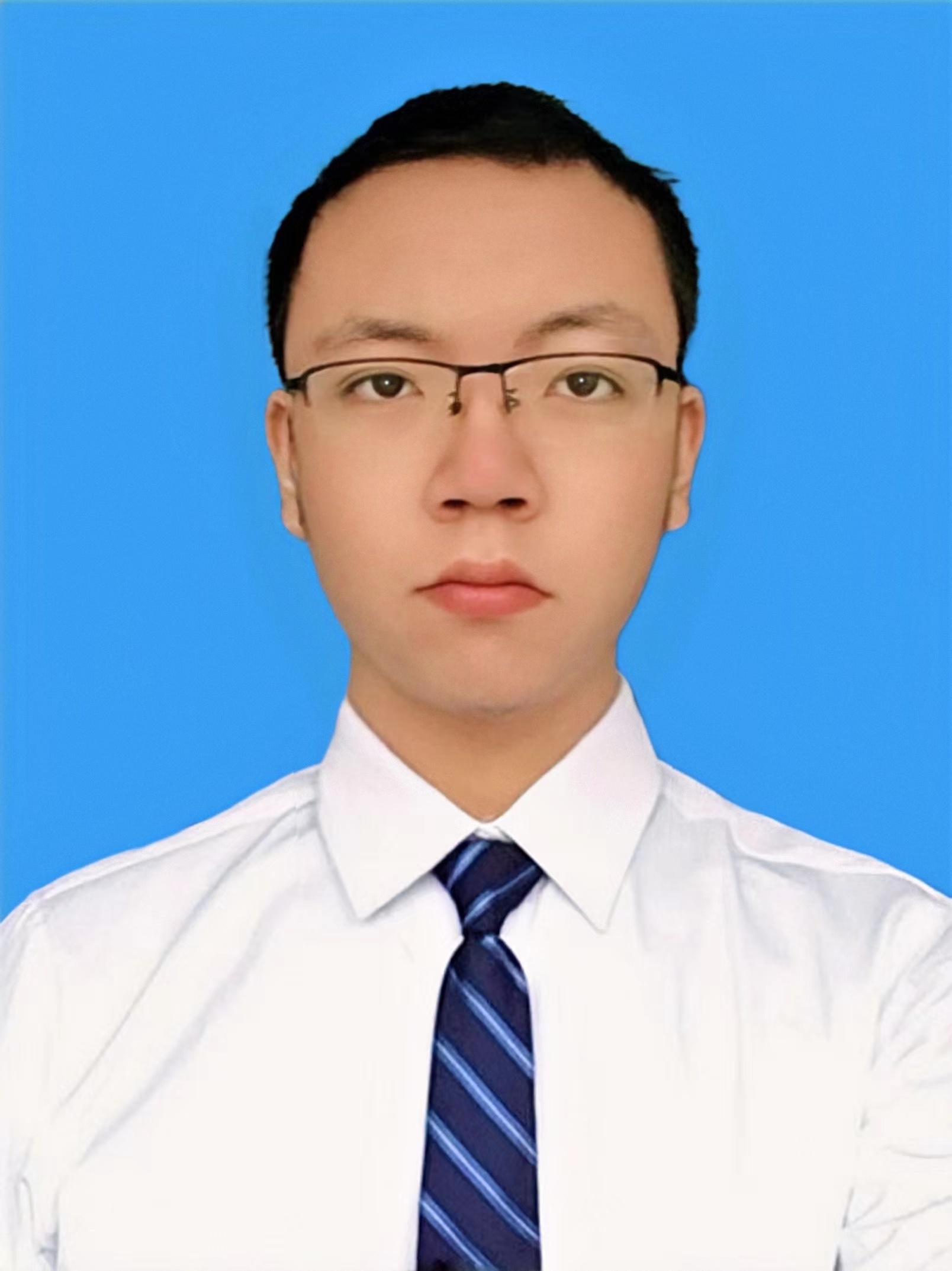}}]
	{Weijia Wang}{\space}
received the B.S. degree in Aerospace Engineering, from the University of Electronic Science and Technology of China in 2020. He is working toward a Ph.D. in aeronautical and astronautical science and technology at Zhejiang University, Hangzhou, China. His research interests include model predictive control and learning-based adaptive control for 6-DOF spacecraft formation.
\end{IEEEbiography}

\begin{IEEEbiography}
	[{\includegraphics[width=1in,height=1.25in,clip,keepaspectratio]{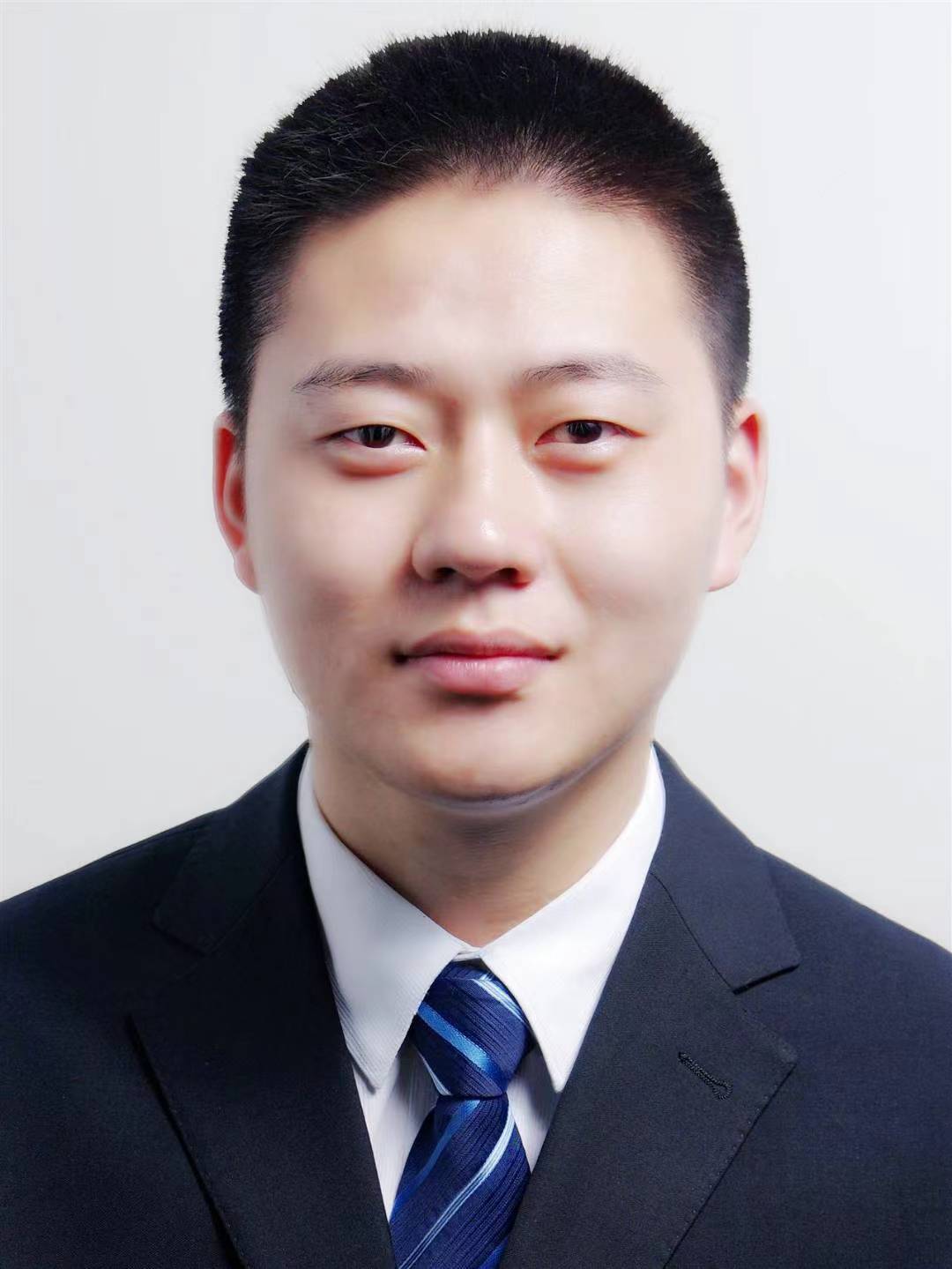}}]
	{Shujian Sun}{\space}
	received the B.S. degree from Electronic Information Engineering (Underwater Acoustic), Harbin Engineering University, Harbin, China, in 2013, and the Ph.D. degree from Electronic Science and Technology, Zhejiang University, Hangzhou, China, in 2020. He is recently the Assistant Professor of School of Aeronautics and Astronautics of Zhejiang University. His research interest include orbit control and formation flying of micro-satellite and micro-propulsion technology.
\end{IEEEbiography}

\begin{IEEEbiography}
	[{\includegraphics[width=1in,height=1.25in,clip,keepaspectratio]{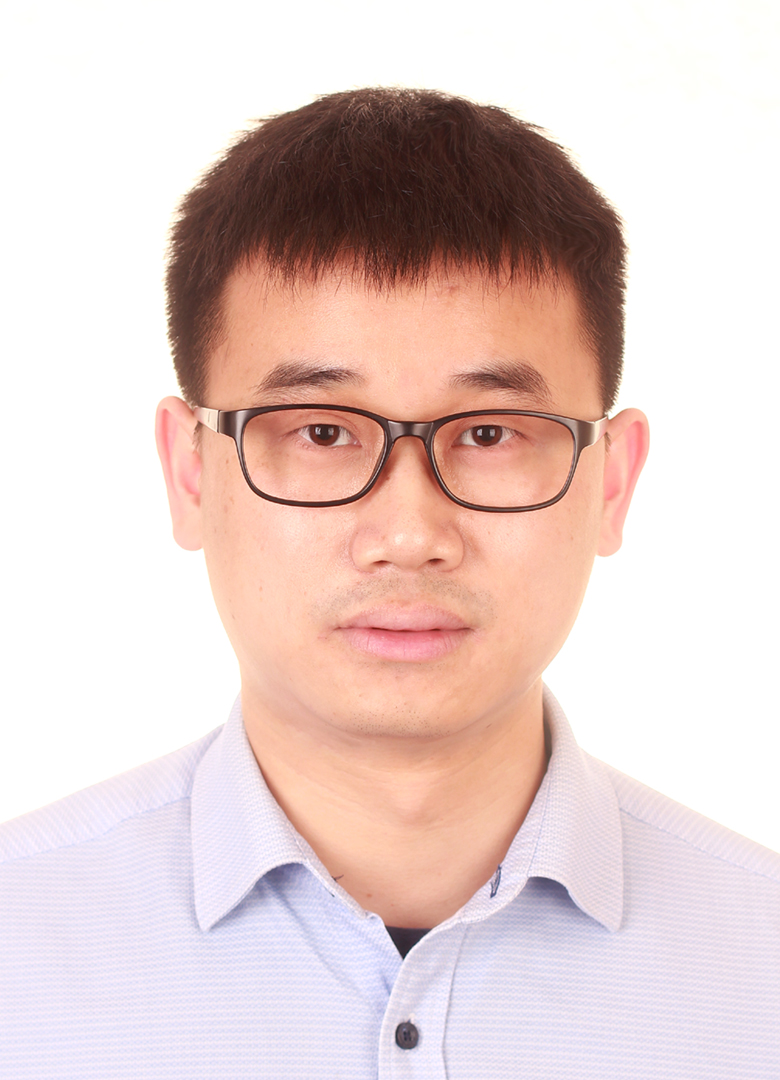}}]
	{Lei Wang}{\space}
received the B.Eng. degree in automation from Wuhan University, China, in 2011, and Ph.D. degree in Control Science and Engineering from Zhejiang University, China in 2016. 
From December 2014 to December 2015, he visited C.A.SY.-DEIS, University of Bologna as a visiting PhD student.

Lei held research positions with School of Electrical and Electronic Engineering at Nanyang Technological
University, Singapore, School of Electrical Engineering and Computing at University of Newcastle, Australia, and Australian Centre for Field Robotics, The University of Sydney, Australia.
Since November 2021 he has been a Hundred-Talent Researcher at College of Control Science and Engineering, Zhejiang University, China.
His current research interests include robust nonlinear estimation and control, distributed computation and learning, privacy analysis and protection, 
with applications to fuel-cell systems, power systems and robotics.
\end{IEEEbiography}

\end{document}